\newcommand{\fix}{\mathrm{fix}}
\newcommand{\Cyc}{\mathrm{Cyc}}
\begin{document}

\title{Spectrum preserving short cycle removal on regular graphs} \author{Pedro Paredes\thanks{Computer Science Department, Carnegie Mellon University.  Supported by NSF grant CCF-1717606. This material is based upon work supported by the National Science Foundation under grant numbers listed above. Any opinions, findings and conclusions or recommendations expressed in this material are those of the author and do not necessarily reflect the views of the National Science Foundation (NSF).}} \date{\today}
\maketitle

\begin{abstract}
  We describe a new method to remove short cycles on regular graphs
  while maintaining spectral bounds (the nontrivial eigenvalues of the
  adjacency matrix), as long as the graphs have certain combinatorial
  properties. These combinatorial properties are related to the number
  and distance between short cycles and are known to happen with high
  probability in uniformly random regular graphs.

  Using this method we can show two results involving high girth
  spectral expander graphs. First, we show that given $d \geq 3$ and
  $n$, there exists an explicit distribution of $d$-regular
  $\bigTheta{n}$-vertex graphs where with high probability its samples
  have girth $\Omega(\log_{d - 1} n)$ and are
  $\epsilon$-near-Ramanujan; i.e., its eigenvalues are bounded in
  magnitude by $2\sqrt{d - 1} + \epsilon$ (excluding the single
  trivial eigenvalue of $d$). Then, for every constant $d \geq 3$ and
  $\epsilon > 0$, we give a deterministic $\poly(n)$-time algorithm
  that outputs a $d$-regular graph on $\bigTheta{n}$-vertices that is
  $\epsilon$-near-Ramanujan and has girth $\Omega(\sqrt{\log n})$,
  based on the work of \cite{MOP19b}.
\end{abstract}

\clearpage

\section{Introduction}

Let's consider $d$-regular graphs of $n$ vertices. The study of short
cycles and \textit{girth} (defined as the length of the shortest cycle
of a graph) in such graphs dates back to at least the 1963 paper of
Erd\H{o}s and Sachs \cite{ES63}, who showed that there exists an
infinite family with girth at least $(1 - o(1)) \log_{d - 1} n$. On
the converse side, a simple path counting argument known as the
``Moore bound'' shows that this girth is upper bounded by
$(1 + o_n(1)) 2 \log_{d - 1} n$. Though simple, this is the best known
upper bound. Given these bounds, it is common to call an infinite
family of $d$-regular $n$-vertex graphs \textit{high girth} if their
girth is $\Omega(\log_{d - 1} n)$.

The first explicit construction of high girth regular graphs is
attributed to Margulis \cite{Mar82}, who gave a construction of graphs
that achieve girth $(1 - o(1)) \frac{4}{9} \log_{d - 1} n$. A series
of works initiated by Lubotzky-Phillips-Sarnak \cite{LPS88} and then
improved by several other people \cite{Mar88,Mor94,LU95} culminated in
the work of Dahan \cite{Dah14}, who proves that for all large enough
$d$ there are explicit $d$-regular $n$-vertex graphs of girth
$(1 - o(1)) \frac{4}{3} \log_{d - 1} n$.

Another relevant problem consists of generating random distributions
that produce regular graphs with high girth. Results regarding the
probabilistic aspects of certain certain structures (like cycles) in
graphs often give us tools to count the number of graphs that satisfy
certain conditions, like how many regular graphs have girth at least
some value. The distribution of short cycles in uniformly random
regular graphs was first studied by Bollob\'{a}s \cite{Bol80}, who
proved, that for a fixed $k$ the random variables representing the
number of cycles of length at most $k$ in a uniformly random
$d$-regular graph are asymptotically independent Poisson with mean
$(d - 1)^i / 2i$, where $i$ is the length of the cycle. Subsequently,
McKay-Wormald-Wysocka \cite{MWW04} gave a more precise description of
this by finding the asymptotic probability of a random $d$-regular
graph having a certain number of cycles of any length up to
$c \log_{d - 1} n$, for $c < 1/2$. More recently, Linial and Simkin
\cite{LS19} showed that a random greedy algorithm that is given
$d \geq 3$, $c \in (0, 1)$ and an even $n$, produces a $d$-regular
$n$-vertex graph with girth at least $c \log_{d - 1} n$ with high
probability.

The literature of regular graphs with high girth is closely connected
to the literature of \textit{spectral expanders}. Before defining
this, let's consider some notation.

\begin{definition}
  Let $G$ be an $n$-vertex $d$-regular multigraph. We write
  $\lambda_i = \lambda_i(G)$ for the eigenvalues of its adjacency
  matrix $A_G$, and we always assume they are ordered with
  $\lambda_1 \geq \lambda_2 \geq \cdots \geq \lambda_n$. A basic fact
  is that $\lambda_1= d$ always; this is called the trivial eigenvalue
  and corresponds to the all ones vector. We also write
  $\lambda(G) = \max\{\lambda_2, |\lambda_n|\}$.
\end{definition}

Roughly, a graph with good spectral expansion properties is a graph
that has small $\lambda$. More formally, an infinite sequence $(G_n)$
of $d$-regular graphs is called a family of expanders if there is a
constant $\delta > 0$ such that $\lambda(G) \leq (1 - \delta) d$ for
all $n$, or in other words, all eigenvalues are strictly separated
from the trivial eigenvalue. This terminology was first introduced by
\cite{Pin73} and later it was shown \cite{Alo86} that uniformly random
$d$-regular graphs are spectral expanders with high probability.

The celebrated Alon-Boppana bound shows that $\lambda$ cannot be
arbitrarily small:

\begin{theorem}(\cite{Alo86,Nil91,Fri93}). For any $d$-regular $n$-vertex graph
  $G$ we have that
  $\lambda_2(G) \geq 2\sqrt{d - 1} - O(1 / \log^2 n)$.
\end{theorem}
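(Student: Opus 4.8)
The plan is to run the classical test‑function (Rayleigh‑quotient) argument, in the sharp form due to Friedman \cite{Fri93}. If $G$ is disconnected then $\lambda_2(G)=d$ and there is nothing to prove, so assume $G$ connected. The Moore bound $|B_r(w)|\le 1+d\frac{(d-1)^r-1}{d-2}$ forces $\mathrm{diam}(G)\ge \log_{d-1}n-O_d(1)$, so I would fix a scale $k:=\lfloor\tfrac12\log_{d-1}n\rfloor-O_d(1)$, small enough that there exist vertices $u,v$ with $\mathrm{dist}_G(u,v)\ge 2k+2$; then $B_k(u)$ and $B_k(v)$ are disjoint, joined by no edge, and each has fewer than $n$ vertices.

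The heart of the argument is a local lower bound: writing $\rho_k$ for the top eigenvalue of the adjacency matrix of the radius‑$k$ ball $B_k$ in the infinite $d$‑regular tree $T_d$, I claim $\lambda_1(A_G[B_k(w)])\ge \rho_k$ for every vertex $w$. To prove this I would pass to the universal cover $\pi:T_d\to G$, fix a lift $\tilde w$ of $w$, and take the (radial, nonnegative) Perron vector $h$ of $A_{T_d}[B_k(\tilde w)]$, so $A_{T_d}[B_k(\tilde w)]h=\rho_k h$. Push $h$ forward by $v(x):=\sum_{\tilde x\in\pi^{-1}(x)\cap B_k(\tilde w)}h(\tilde x)\ge 0$; using that $\pi$ carries every edge of the tree ball to an edge inside $B_k(w)$, the only discrepancies being fiber collisions and extra $G$‑edges (both of which can only enlarge a sum of nonnegative terms), one checks the pointwise inequality $(A_G[B_k(w)]v)(x)\ge \rho_k v(x)$ for all $x\in B_k(w)$, the outer shell $\mathrm{dist}(w,x)=k$ included. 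Since $v\ge 0$ and $A_G[B_k(w)]$ is symmetric, this gives $\langle v,A_G[B_k(w)]v\rangle\ge \rho_k\|v\|^2$, hence $\lambda_1(A_G[B_k(w)])\ge \rho_k$.

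To finish, let $f_u,f_v\ge 0$ be these pushed‑forward vectors for $u$ and $v$, extended by zero, and set $f:=\alpha f_u+\beta f_v$ with $\alpha,\beta$ chosen so that $f\perp\mathbf 1$. The supports are disjoint with no edge between them, so all cross terms vanish and $\langle f,A_G f\rangle=\alpha^2\langle f_u,A_G f_u\rangle+\beta^2\langle f_v,A_G f_v\rangle\ge \rho_k(\alpha^2\|f_u\|^2+\beta^2\|f_v\|^2)=\rho_k\|f\|^2$, whence $\lambda_2(G)\ge \rho_k$. (A single ball with its mean subtracted would also work, at a cost of only $O(d/\sqrt n)$.) It then remains to estimate $\rho_k$: restricting $A_{T_d}[B_k]$ to radial functions gives a tridiagonal recurrence, and the substitution $h_j=(d-1)^{-j/2}\psi_j$ converts the interior equations to $\psi_{j+1}+\psi_{j-1}=(\rho_k/\sqrt{d-1})\,\psi_j$ with (almost) Dirichlet boundary data, so $\psi_j\propto\sin((k+1-j)\theta)$ with $\rho_k/\sqrt{d-1}=2\cos\theta$ and the smallest admissible $\theta$ equal to $\pi/k+O(1/k^2)$; thus $\rho_k=2\sqrt{d-1}\cos\theta=2\sqrt{d-1}-\Theta_d(1/k^2)$, and since $k=\Theta(\log_{d-1}n)$ this is $2\sqrt{d-1}-O(1/\log^2 n)$.

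The main obstacle I expect is the local lower bound: checking $A_G[B_k(w)]v\ge \rho_k v$ pointwise with the correct accounting at the outer shell and around any short cycles through $w$, and making sure the comparison runs the favorable way — which it does precisely because the tree ball's Perron vector is nonnegative and $v$ is built by \emph{summing} over fibers, so collisions add mass rather than cancel it. The eigenvalue asymptotics are classical but also require care to extract the quadratic rate $1/k^2$ rather than the easier $1/k$; the nonstandard boundary condition at the root only perturbs lower‑order terms of $\theta$.
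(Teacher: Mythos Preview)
The paper does not prove this theorem; it is merely stated with citations to \cite{Alo86,Nil91,Fri93} as background, so there is no in-paper argument to compare against. Your proposal is the standard Nilli--Friedman test-function argument and is correct in outline: the pushforward of the tree-ball Perron vector to $G$ gives the local lower bound $\lambda_1(A_G[B_k(w)])\ge\rho_k$, two disjoint balls supply a vector orthogonal to $\mathbf 1$, and the tridiagonal computation yields $\rho_k=2\sqrt{d-1}-\Theta(1/k^2)$ with $k=\Theta(\log_{d-1}n)$, which is exactly the sharp $O(1/\log^2 n)$ error term claimed.
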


Using some number-theoretic ideas, Lubotzky-Phillips-Sarnak
\cite{LPS88}, and independently Margulis \cite{Mar88}, proved this
bound is essentially tight by showing the existence of infinite
families of $d$-regular graphs that meet the bound
$\lambda(G) \leq 2\sqrt{d - 1}$, if $d - 1$ is an odd prime. In light
of this, Lubotzky-Phillips-Sarnak introduced the following definition:

\begin{definition}(Ramanujan graphs). A $d$-regular graph $G$ is
  called Ramanujan whenever $\lambda(G) \leq 2\sqrt{d - 1}$.
\end{definition}

These results were improved by Morgenstern \cite{Mor94}, who showed
the same for all $d$ where $d - 1$ is a prime power.

It is still open whether there exist infinite families of Ramanujan
graphs for all $d$. However, if one relaxes this to only seek
$\epsilon$-near-Ramanujan graphs (graphs that satisfy
$\lambda \leq 2\sqrt{d - 1} + \epsilon$), then the answer is
positive. Friedman \cite{Fri08} proved that uniformly random
$d$-regular $n$-vertex graphs satisfy
$\lambda \leq 2\sqrt{d - 1} + o_n(1)$ with high probability. This
proof was recently simplified by Bordenave \cite{Bor19}.

\begin{theorem}(\cite{Fri08,Bor19}). \label{thm:bor}
  Fix any $d > 3$ and $\epsilon > 0$ and let $G$ be a uniformly random
  $d$-regular $n$-vertex graph. Then

  \[
    \prob{\lambda(G) \leq 2\sqrt{d - 1} + \epsilon} \geq 1 - o_n(1).
  \]

  In fact \cite{Bor19}, $G$ achieves the subconstant
  $\epsilon = \bigOTilda{1 / \log^2 n}$ with probability at least
  $1 - 1/n^{.99}$.
\end{theorem}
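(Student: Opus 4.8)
The plan is to follow Bordenave's non-backtracking walk argument.

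\textbf{Step 1 (pass to the non-backtracking matrix).} Let $B = B_G$ be the non-backtracking operator on the $dn$ directed edges of $G$, with $B_{(u,v),(x,y)} = \mathbf{1}[v = x \text{ and } u \ne y]$. By the Ihara--Bass formula,
\[
  \det(I - zB) = (1 - z^2)^{(d-2)n/2}\,\det\!\big(I - zA_G + z^2(d-1)I\big),
\]
so every eigenvalue $\lambda \ne \pm 2\sqrt{d-1}$ of $A_G$ corresponds to an eigenvalue $\mu$ of $B$ with $\mu + (d-1)/\mu = \lambda$, and when $|\lambda| > 2\sqrt{d-1}$ this $\mu$ is real with $|\mu| > \sqrt{d-1}$. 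Hence it suffices to show that, apart from the Perron eigenvalue $d - 1$, every eigenvalue of $B$ has modulus at most $\sqrt{d-1}\,(1 + \epsilon')$, since via the quadratic $\mu^2 - \lambda\mu + (d-1) = 0$ this gives $\lambda(G) \le 2\sqrt{d-1} + O(\epsilon')$.

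\textbf{Step 2 (high-moment trace bound).} As $B$ is not normal I cannot use $\rho(B) \le \|B^\ell\|$ directly; instead, project off the rank-one Perron part to get $\underline{B}$ and estimate, for slowly growing $\ell$ and $m$ (think $\ell \asymp \log n$ and $m$ a large constant or a mild function of $n$), the quantity $\Ex{\mathrm{tr}\big((\underline{B}^{\ell})(\underline{B}^{\ell})^{*}\big)^{m}}$. Expanding the trace writes it as a sum over closed walks in the directed-edge graph made of $2m$ alternating runs of $\ell$ forward and $\ell$ backward steps, and a Markov inequality on this quantity raised to the power $1/(2\ell m)$ controls $\rho(\underline{B})$, hence $\lambda(G)$.

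\textbf{Step 3 (enumerate tangle-free walks --- the crux).} A first-moment computation --- essentially the one behind Bollob\'as's Poisson heuristic for short cycles --- shows that with high probability $G$ has no two cycles within distance $c \log_{d-1} n$ of each other, so $G$ is ``$\ell$-tangle-free'' and one may replace $B^\ell$ by its restriction to walks that do not revisit a cycle, at negligible cost. Working in the configuration model, any directed edge appearing exactly once in a closed walk forces the term's expectation to vanish (its half-edge pairs to a uniformly fresh half-edge), so every used edge is used at least twice; this pins the support of the walk down to a tree traversed out-and-back up to lower-order ``defects.'' Counting such walks with configuration-model edge weights $\approx (dn)^{-1}$ yields a bound of the shape $(d-1)^{\ell}\cdot \poly(\ell m)\cdot n$ on the expected trace, whence $\rho(\underline{B}) \lesssim \sqrt{d-1}\,\big(\poly(\ell m)\, n\big)^{1/(2\ell m)} = \sqrt{d-1}\,(1 + o_n(1))$ for suitable $\ell, m$; choosing $\ell \asymp \log n$ and $m \asymp \log n / \log\log n$ and tracking the polynomial factors sharpens $o_n(1)$ to $\bigOTilda{1 / \log^2 n}$ and the failure probability to $n^{-0.99}$.

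\textbf{Main obstacle.} The work is all in Step 3: for each ``shape'' of closed tangle-free walk (which edges coincide, where it branches, where it backtracks) one must simultaneously bound the number of labelled walks of that shape and the probability the configuration model realizes it, and show that any defect --- an edge used at least three times, or a coincidence creating an extra cycle --- costs strictly more than the entropy it buys, so that only the tree-like shapes survive at the leading order $(d-1)^\ell$. Subtracting the Perron component cleanly (so its $(d-1)^\ell$ contribution does not dominate) and absorbing the mild non-regularity of the configuration model are the remaining delicate points. An alternative route closer to \cite{MOP19b} replaces the configuration model by a chain of random $2$-lifts, bounding the new eigenvalues of each lift by the same style of tangle-free walk count for the signed non-backtracking operator, at the price of only a polylogarithmic loss in the resulting girth.
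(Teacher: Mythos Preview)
The paper does not prove \Cref{thm:bor}; it is quoted from \cite{Fri08,Bor19} and used as a black box throughout (e.g.\ in \Cref{sec:apprand}), so there is no ``paper's own proof'' to compare against. Your sketch is a faithful high-level outline of Bordenave's argument in \cite{Bor19}: the reduction via Ihara--Bass to the non-backtracking spectrum, the high-moment trace estimate for the Perron-projected operator, and the tangle-free walk enumeration in the configuration model are exactly the ingredients there. If you intend to include this as an independent proof, be aware that Step~3 as written is only a roadmap; the actual shape-by-shape counting and the handling of the Perron subtraction occupy the bulk of \cite{Bor19} and would need to be carried out rather than asserted.
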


Recently, it was shown how to achieve a result like the above but
deterministically \cite{MOP19b}. We write a more precise statement of
this below.

\begin{theorem}(\cite{MOP19b}).
  \label{thm:mop}
  Given any $n$, $d \geq 3$ and $\epsilon > 0$, there is deterministic
  polynomial-time algorithm that constructs a $d$-regular
  $N$-vertex graphs with the following properties:

  \begin{itemize}
  \item $N = n(1 + o_n(1))$;
  \item $\lambda(G) \leq 2\sqrt{d - 1} + \epsilon$;
  \end{itemize}
\end{theorem}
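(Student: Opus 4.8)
The plan is to follow the proof of Friedman, as streamlined by Bordenave (Theorem~\ref{thm:bor}), that a uniformly random $d$-regular graph is $\epsilon$-near-Ramanujan, but with every random choice replaced by a pseudorandom one that can be found deterministically in polynomial time. The vehicle is the Bilu--Linial theory of $2$-lifts: a $2$-lift of a $d$-regular graph $H$ is specified by a signing $\sigma:E(H)\to\{\pm1\}$ of its edges, and the eigenvalues of the lift are exactly those of the adjacency matrix $A_H$ together with those of the signed adjacency matrix $A_\sigma$. Hence if $H$ is already $\epsilon$-near-Ramanujan and we can find a signing $\sigma$ with $\|A_\sigma\|\le 2\sqrt{d-1}+\epsilon$, then the $2$-lift has twice as many vertices, is again $\epsilon$-near-Ramanujan, and is automatically connected with the trivial eigenvalue $d$ still of multiplicity one --- since $\|A_\sigma\|<d$ (we may assume $\epsilon<d-2\sqrt{d-1}$, the theorem being vacuous otherwise) rules out the balanced signings. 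I would fix a base graph $G_0$ that is $d$-regular on $n_0=O_{d,\epsilon}(1)$ vertices, has girth at least a large constant $g_0=g_0(d,\epsilon)$, and satisfies $\lambda(G_0)\le 2\sqrt{d-1}+\epsilon$; such a graph exists --- a uniformly random $d$-regular graph on $n_0$ vertices has girth at least $g_0$ with probability bounded below by a positive constant \cite{Bol80}, and for $n_0$ large enough is $\epsilon$-near-Ramanujan with high probability by Theorem~\ref{thm:bor} --- and is found by brute-force search over the finitely many $d$-regular graphs on $n_0$ vertices. I would then perform $T=\lceil\log_2(n/n_0)\rceil$ successive $2$-lifts $G_0\to G_1\to\cdots\to G_T$, reaching $\Theta(n)$ vertices; to land on $N=n(1+o_n(1))$ exactly, one replaces the final $2$-lift by a cyclic $k$-lift of suitable degree $k$, whose ``new'' eigenvalues --- indexed by the nontrivial $k$-th roots of unity --- obey the same bound under the same analysis.

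The heart of the argument is a moment bound for a uniformly random signing of the current graph $H=G_i$. Expanding $\mathrm{tr}(A_\sigma^{2\ell})$ as a sum over closed walks of length $2\ell$ in $H$ and averaging over $\sigma$, only walks that traverse every edge an even number of times survive, and writing $\ell=\ell_i:=\lceil K\log_2|V(H)|\rceil$ for a suitable constant $K=K(d,\epsilon)$ the goal is to prove
\[
  \mathbf{E}_\sigma\!\left[\mathrm{tr}\bigl(A_\sigma^{2\ell}\bigr)\right]\;\le\;\bigl(2\sqrt{d-1}+\epsilon\bigr)^{2\ell}.
\]
What makes this possible is that, although a $2$-lift of $G_0$ typically retains $\Theta(|V|)$ short cycles --- so the intermediate graphs are nowhere near tangle-free --- a $2$-lift never decreases girth, so every $G_i$ still has girth at least $g_0$. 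Consequently a surviving closed walk is ``almost tree-like'': its portions that wind around cycles have total length at most $2\ell$, hence involve at most $2\ell/g_0$ cycle traversals, and a Bordenave-style encoding of such walks bounds their number by $|V|\cdot(2\sqrt{d-1})^{2\ell}\cdot\poly(\ell)\cdot C^{O(\ell/g_0)}$ for an absolute constant $C=C(d)$. Raising both sides to the power $1/(2\ell)$, the factor $|V|^{1/2\ell}$ equals $2^{1/(2K)}=1+O(1/K)$, the factor $\poly(\ell)^{1/2\ell}$ tends to $1$, and $C^{O(1/g_0)}=1+O_d(1/g_0)$; choosing $K$ and then $g_0$ large enough in terms of $d,\epsilon$ yields the displayed inequality. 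Since $\mathrm{tr}(A_\sigma^{2\ell})$ is an average over $\sigma$, it follows that some signing $\sigma$ of $G_i$ has $\|A_\sigma\|^{2\ell}\le\mathrm{tr}(A_\sigma^{2\ell})\le(2\sqrt{d-1}+\epsilon)^{2\ell}$, so --- by the Bilu--Linial decomposition and the inductive claim $\lambda(G_i)\le 2\sqrt{d-1}+\epsilon$ --- $G_{i+1}$ is again $\epsilon$-near-Ramanujan.

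Finally, the derandomization, which is where polynomial running time enters. At step $i$ I want an \emph{explicit} signing $\sigma_i$ with $\mathrm{tr}(A_{\sigma_i}^{2\ell_i})\le\mathbf{E}_\sigma[\mathrm{tr}(A_\sigma^{2\ell_i})]$, since then $\|A_{\sigma_i}\|^{2\ell_i}\le\mathrm{tr}(A_{\sigma_i}^{2\ell_i})\le(2\sqrt{d-1}+\epsilon)^{2\ell_i}$. To produce one I would run the method of conditional expectations over the sign variables $\{\sigma_i(e)\}_{e\in E(G_i)}$: the key point is that for $d=O(1)$ and $\ell_i=O(\log n)$ there are only $|V(G_i)|\cdot d^{2\ell_i}=\poly(n)$ closed walks of length $2\ell_i$, so $\mathrm{tr}(A_\sigma^{2\ell_i})$ is an explicit polynomial in the sign variables with polynomially many monomials, and every conditional expectation (over the still-free signs, taken uniform and independent) is computable in $\poly(n)$ time; fixing the signs one at a time so as never to increase the conditional expectation yields the desired $\sigma_i$. (Equivalently one may search a sample space of $O(\ell_i)$-wise independent signings, which the moment bound is insensitive to; the conditional-expectations phrasing just makes the polynomial running time transparent.) Summing the $\poly(n_i)$ costs over the $O(\log n)$ lifts gives total running time $\poly(n)$. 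I expect the moment bound of the middle paragraph to be the main obstacle: controlling, with the correct leading constant $2\sqrt{d-1}$, the number of closed even walks of length $\Theta(\log n)$ in a $d$-regular graph that has only constant girth and is genuinely not tangle-free --- in particular pinning the ``winding overhead'' to the form $C^{O(\ell/g_0)}$ with $C$ independent of $\ell$, so that a single constant $g_0$ suffices --- is precisely the delicate part of the Bordenave-type analysis.
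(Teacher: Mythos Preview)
Your proposal differs substantially from the construction in \cite{MOP19b} that the paper is citing (and whose outline the paper reproduces in the review preceding \Cref{thm:mainmop} and \Cref{thm:rebordenave}). There the starting graph $G_0$ is \emph{not} of constant size: it has $n_0 = 2^{\Theta(\sqrt{\log n})}$ vertices and is produced by a derandomized version of Bordenave's argument (\Cref{thm:rebordenave}), which guarantees that $G_0$ is \emph{bicycle-free at radius} $\Omega(\log_{d-1} n_0)=\Omega(\sqrt{\log n})\gg(\log\log n)^2$. Bicycle-freeness at that radius is exactly the hypothesis of the random-signing bound (\Cref{thm:mainmop}), and it is preserved under $2$-lifts, so it persists all the way to the final $n$-vertex graph. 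The derandomization at each step is by enumerating the seeds of an explicit generator rather than by conditional expectations, but that distinction is secondary.

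The genuine gap in your approach is the moment bound. Starting from a constant-size $G_0$ with constant girth $g_0$, after any number of $2$-lifts the bicycle-free radius remains the constant inherited from $G_0$; as you yourself note, the intermediate graphs are ``nowhere near tangle-free.'' Your claimed encoding bound --- that closed walks of length $2\ell$ using every edge evenly number at most $|V|\cdot(2\sqrt{d-1})^{2\ell}\cdot\poly(\ell)\cdot C^{O(\ell/g_0)}$ --- is precisely what the Bordenave and \cite{MOP19b} analyses do \emph{not} establish from girth alone: their encodings require that a walk segment of length at most $r$ meets at most one cycle, i.e., bicycle-freeness at radius $r$ comparable to $\ell$ (or at least growing with $n$). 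With only constant girth, a length-$2\ell$ walk can thread through many distinct short cycles in complicated ways, and the heuristic ``at most $2\ell/g_0$ cycle traversals'' does not control the combinatorics; indeed the error factor in \Cref{thm:mainmop} is $1+(\log\log n)^4/r^2$, which is useless for constant $r$. So the step you flag as the ``main obstacle'' is exactly where the argument breaks, and it cannot be repaired without either a genuinely new walk-counting lemma or a growing bicycle-free radius --- which forces a super-constant starter graph, as in \cite{MOP19b}.
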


We refer the reader interested in a more thorough history of the
literature of Ramanujan graphs to the introduction of
\cite{MOP19b}. Also, for a comprehensive list of applications and
connections of Ramanujan graphs and expanders to computer science and
mathematics, see \cite{HLW06}.

In this work we concern ourselves with bridging these two worlds,
looking for families of regular graphs that are both good spectral
expanders and also have high girth. This bridge can be seen in several
of the aforementioned works. The explicit construction of high girth
regular graphs by Margulis \cite{Mar82} was a motivator to his work on
Ramanujan graphs \cite{Mar88}. Additionally, the constructions of
\cite{LPS88} and \cite{Mor94} produce graphs that are both Ramanujan
and have girth $(1 - o(1)) \frac{4}{3} \log_{d - 1} n$, according to
the previously stated restrictions on $d$.

More recently, Alon-Ganguly-Srivastava \cite{AGS19} showed that for a
given $d$ such that $d - 1$ is prime and $\alpha \in (0, 1/6)$, there
is a construction of infinite families of graphs with girth at least
$(1 - o(1)) (2/3) \alpha \log_{d - 1} n$ and $\lambda$ at most
$(3 / \sqrt{2}) \sqrt{d - 1}$ with many eigenvalues localized on small
sets of size $O(n^\alpha)$. Our main result is based on the techniques
of this work.

One motivation to search for graphs with simultaneous good spectral
expansion and high girth is its application to the theory of
error-correcting codes, particularly for \textit{Low Density Parity
  Check} or \textit{LDPC} codes, originally introduced by Gallager
\cite{Gal62}. The connection with high girth regular graphs was first
pointed out by Margulis in \cite{Mar82}. The property of high-girth is
desirable since the decoding of such codes relies on an iterative
algorithm whose performance is worse in the presence of short
cycles. Additionally, using graphs with good spectral properties to
generate these codes seems to lead to good performance, as pointed out
by several works \cite{RV00,LR00,MS02}.

\subsection{Our results}

We can now state our results and put them in perspective. Let's first
introduce some useful definitions and notation.

\begin{definition}[Bicycle-free at radius $r$]
  A multigraph is said to be \textit{bicycle-free at radius} $r$ if
  the distance-$r$ neighborhood of every vertex has at most one cycle.
\end{definition}

\begin{definition}[$(r, \Lambda, \tau)$-graph]
  Let $r$ and $\tau$ be a positive integers and $\Lambda$ be a
  positive real. Then, we call a graph $G$ a
  $(r, \Lambda, \tau)$-graph if it satisfies the following conditions:
  
  \begin{itemize}
  \item $G$ is bicycle-free at radius at least $r$;
  \item $\lambda(G) \leq \Lambda$;
  \item The number of cycles of length at most $r$ is at most $\tau$.
  \end{itemize}
\end{definition}

Our main result is the following short cycle removal theorem:

\begin{theorem}
  \label{thm:cycrem}
  There exists a deterministic polynomial-time algorithm $\fix$ that,
  given as input a $d$-regular $n$-vertex $(r, \Lambda, \tau)$-graph
  $G$ satisfying
  
  \begin{equation*}
    \label{eq:cycremcons}
    \Lambda \geq 2\sqrt{d - 1}, \qquad r \leq \frac{2}{3}\log_{d - 1}(n / \tau) - 5,
  \end{equation*}
  
  \noindent outputs a graph $\fix(G)$ satisfying

  \begin{itemize}
  \item $\fix(G)$ is a $d$-regular graph with
    $n + O(\tau \cdot (d-1)^{r / 2 + 1})$ vertices;
  \item $\lambda(\fix(G)) \leq \Lambda + O(1 / r)$;
  \item $\fix(G)$ has girth at least $r$.
  \end{itemize}
\end{theorem}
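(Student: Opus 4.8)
The plan is to remove the short cycles one at a time by local surgery: excise a small neighbourhood of each, stitch in a high-girth gadget that meets the rest of the graph along exactly the same set of boundary half-edges, and then argue that this perturbs the spectrum by only $O(1/r)$. The first thing to record is that bicycle-freeness makes the short cycles well separated and their neighbourhoods simple. Let $C_1,\dots,C_t$ ($t\le\tau$) be the cycles of $G$ of length less than $r$. If two of them were joined by a path of length $\le r$, a midpoint $w$ of that path would have two cycles inside $B_r(w)$, which is impossible; so $d_G(C_i,C_j)>r$. Fixing $\rho=\Theta(r)$ slightly below $r/2$ and setting $B_i=B_\rho(C_i)$, the inclusion $C_i\subseteq B_{\rho+\lceil|C_i|/2\rceil}(v)\subseteq B_r(v)$ for $v\in C_i$ shows each $B_i$ is \emph{unicyclic} — a cycle with pendant trees of depth $\le\rho$ — and the separation makes the $B_i$ pairwise disjoint. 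Thus $|B_i|=O(|C_i|(d-1)^\rho)$, the number of boundary half-edges of $B_i$ is $m_i=O((d-1)^{r/2+1})$ (choosing $\rho$ so that $|C_i|(d-1)^\rho=O((d-1)^{r/2+1})$), and $G-\bigcup_iB_i$ has no cycle of length $\le r$, since every such cycle of $G$ is one of the $C_i$ and lies inside $B_i$.

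For the surgery I would delete every $B_i$, leaving $m_i$ dangling half-edges, and replace it by a gadget $\Gamma_i$ on new vertices that (i) has exactly $m_i$ dangling half-edges and is $d$-regular elsewhere, (ii) has girth $\ge r$ and is itself bicycle-free at radius $\Omega(r)$, and (iii) has every internal port-to-port path of length $\ge r$; one can build $\Gamma_i$ from an explicit high-girth $d$-regular graph (e.g.\ \cite{LPS88,Mor94}, or a combinatorial construction for general $d$) with a matching of pairwise-far edges removed to expose the ports, or by iterated $2$-lifts of $B_i$ that stretch $C_i$ past length $r$. Reattaching by an arbitrary perfect matching of dangling half-edges yields $\fix(G)$, which is $d$-regular and has $n+O(\tau\,(d-1)^{r/2+1})$ vertices. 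Its girth is $\ge r$: a cycle lies in $G-\bigcup B_i$ (length $>r$), or in some $\Gamma_i$ (length $\ge r$), or it enters and leaves some $\Gamma_i$ and hence contains a port-to-port path of length $\ge r$ by (iii). The same kind of case analysis — using that distinct $\Gamma_i$ lie at distance $\ge r-2\rho=\Omega(r)$ in $\fix(G)$, that $G-\bigcup B_i$ is a forest inside any radius-$r$ ball, and that straddling cycles have length $\ge r$ — shows $\fix(G)$ is bicycle-free at radius $\Omega(r)$.

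It remains to bound $\lambda(\fix(G))$, which is where the constraint $r\le\frac23\log_{d-1}(n/\tau)-5$ is used. I would control the nontrivial spectral radius of the non-backtracking operator $B=B_{\fix(G)}$ (equivalent to controlling $\lambda(\fix(G))$ through $\mu+(d-1)/\mu=\lambda_i$) by the trace method of \cite{Bor19,MOP19b}: $\mathrm{tr}(B^{2k})$ with $k=\Theta(r)$ counts cyclically non-backtracking closed walks of length $2k$ in $\fix(G)$, and each such walk either avoids all the $\Gamma_i$ — then it lives in the subgraph $G-\bigcup B_i$ of $G$ and is also such a walk in $G$, hence is controlled by $\lambda(G)\le\Lambda$ — or it meets some $\Gamma_i$. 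Since there are only $\tau$ gadgets, each of size $O((d-1)^{r/2+1})$, bicycle-free at radius $\Omega(r)$ and of girth $\ge r$, a tangle-free walk of length $2k$ meeting a $\Gamma_i$ must wind around a single cycle of length in $[r,2k]$ touching $\Gamma_i$, of which there are only $O(\tau(d-1)^{r/2})$; the bookkeeping then shows the second type is a lower-order term. Combining the estimates, and noting that $k$ can be taken as large as $\Theta(r)$ but no larger, gives $\lambda(\fix(G))\le\Lambda+O(1/r)$, the $1/r$ being precisely the $(\,\cdot\,)^{1/2k}$ slack.

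The main obstacle is this last step and the way the gadget must cooperate with it. First, the gadget has to present the right number of ports, keep girth $\ge r$, and still leave $\fix(G)$ bicycle-free at radius $\Omega(r)$, and one must check that walks crossing the seams are genuinely forced to be long — the role of (iii) and of the separation of the gadgets. Second, and more seriously, getting a bound of the shape $\Lambda+O(1/r)$ rather than a constant factor out of $\mathrm{tr}(B^{2k})$ needs the fine cancellation analysis for the non-backtracking operator of \cite{Bor19,MOP19b} — a crude $\rho(B)^{2k}\le\mathrm{tr}(B^{2k})$ bound loses a factor $n^{1/2k}=\Theta(1)$ — so the real content is verifying that the surgery leaves intact every structural input that analysis uses, while only creating $O(\tau)$ new, localized, easily counted families of closed walks. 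A minor additional point is to pin down an explicit polynomial-time high-girth $d$-regular gadget for every $d\ge3$.
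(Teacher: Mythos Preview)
Your combinatorial surgery is in the right spirit but differs from the paper's: rather than excising whole neighbourhoods and plugging in per-cycle gadgets, the paper removes only one \emph{edge} per short cycle (plus a carefully chosen set $E_t$ of extra well-separated edges), and then attaches two global $d$-regular trees of height $h\approx\log_{d-1}\tau+r/2$ whose leaves are identified with the endpoints of the removed edges in an interleaved order that keeps $E_c$-endpoints at tree-distance $\ge r$. This sidesteps the need for an explicit high-girth $d$-regular gadget for general $d$, which you flag as a loose end.

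The genuine gap is your spectral argument. The trace-method analysis of \cite{Bor19,MOP19b} that you invoke is essentially a \emph{randomized} argument: the cancellation that lets one read off the nontrivial spectral radius from $\mathrm{tr}(B^{2k})$ comes from taking an expectation over random matchings or random edge-signings. For the fixed deterministic graph $\fix(G)$ there is no expectation to take, and $\mathrm{tr}(B_{\fix(G)}^{2k})$ is dominated by the trivial contribution $(d-1)^{2k}$. Your decomposition into ``walks avoiding all $\Gamma_i$'' and ``walks meeting some $\Gamma_i$'' does not isolate the nontrivial spectrum: the walks that stay inside $G-\bigcup B_i$ are bounded only by $\mathrm{tr}(B_G^{2k})\approx(d-1)^{2k}+\cdots$, not by $\Lambda$ alone, and there is no clean way to subtract off the trivial part since that eigenvector is supported on both the old graph and the gadgets. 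So the step ``hence is controlled by $\lambda(G)\le\Lambda$'' does not go through, and the $(\cdot)^{1/2k}$ you take at the end would yield $d-1$, not $\Lambda+O(1/r)$.

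The paper's route is entirely different and avoids traces. It writes $A_{\fix(G)}=A_G-A_{E_c}-A_{E_t}+A_{T_1}+A_{T_2}$ and bounds the Rayleigh quotient $|g^TAg|$ directly for a unit eigenvector $g$ orthogonal to the all-ones vector. The $A_G$ piece gives $\Lambda$ by hypothesis; the tree pieces are controlled by an elementary quadratic-form bound (\cite[Lemma~4.1]{AGS19}) in terms of the mass $\sum_{v\in L_i}g_v^2$ on the leaves; the removed-edge pieces are likewise bounded by the leaf mass. The crucial step is then to show $\sum_{v\in L_i}g_v^2=O(1/r)$, and this is done via Kahale's lemma \cite[Lemma~5.1]{Kah95}: for an eigenvalue with $|\mu|>2\sqrt{d-1}$, along the tree rooted at the root of $T_i$ (which, by the construction, remains a tree for $\Omega(r)$ further levels beyond the leaves $L_i$), the level-masses $\sum_{u\in X_\ell}g_u^2$ are nondecreasing in $\ell$; since $L_i$ sits $\Omega(r)$ levels inside this monotone run and the total mass is $1$, the leaf mass is $O(1/r)$. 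This Kahale-based delocalization argument is the missing idea in your sketch, and it is exactly what converts the perturbation into an additive $O(1/r)$ rather than a multiplicative constant.
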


Our proof of this statement uses several ideas from \cite{AGS19}. We
will prove this theorem in \Cref{sec:rem}.

The preconditions of this theorem are not arbitrary. Even though
random uniformly $n$-vertex $d$-regular graphs have constant girth
with high probability, they are bicycle-free at radius
$\Omega(\log_{d-1} n)$ and the number of cycles of length at most
$c \log_{d-1} n$ (for small enough $c$) is $o(n)$ with high
probability. Recall that from \Cref{thm:bor} we also know that being
near-Ramanujan is also a property that occurs with high probability in
random regular graphs. So a statement like the above can be used to
produce distributions over regular graphs that have high girth and are
near-Ramanujan with high probability. With this in mind, we introduce
the following definition:

\begin{definition}(($\Lambda$, $g$)-good graphs).  We call a graph $G$
  a ($\Lambda$, $g$)-good graph if $\lambda(G) \leq \Lambda$ and
  $\text{girth}(G) \geq g$.
  
  Let $\mu_d(n)$ be a distribution over $d$-regular graphs with
  $\sim n$ vertices. We say $\mu_d(n)$ is ($\Lambda$, $g$)-good if
  $G \sim \mu_d(n)$ is ($\Lambda$, $g$)-good with probability at least
  $1 - o_n(1)$.

  Additionally, we call the distribution explicit if sampling an
  element is doable in polynomial time.
\end{definition}

We shall prove the following using \Cref{thm:cycrem} in
\Cref{sec:apprand}:

\begin{theorem} \label{thm:apprand} Given $d \geq 3$ and $n$, let $G$
  be a uniformly random $d$-regular $n$-vertex graph. For any
  $c < 1 / 4$ and $\epsilon > 0$, $\fix(G)$ is a
  $(2\sqrt{d - 1} + \epsilon, c\log_{d - 1} n)$-good explicit
  distribution.
\end{theorem}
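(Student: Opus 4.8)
The plan is to apply the $\fix$ algorithm of Theorem \ref{thm:cycrem} to a uniformly random $d$-regular $n$-vertex graph $G$, so I first need to verify that $G$ is, with probability $1 - o_n(1)$, an $(r, \Lambda, \tau)$-graph for a suitable choice of parameters. Set $\Lambda = 2\sqrt{d-1} + \epsilon'$ for a small enough $\epsilon' < \epsilon$ (we will absorb the extra $O(1/r)$ error shortly). By Theorem \ref{thm:bor}, $\lambda(G) \leq \Lambda$ w.h.p.\ — this handles the spectral precondition. For the other two conditions I would invoke the classical facts about random regular graphs cited in the introduction: Bollob\'as's result (and its refinements) says the number of cycles of length at most $k$ is asymptotically Poisson with bounded mean for each fixed $k$, and more relevantly one knows that a uniformly random $d$-regular graph is bicycle-free at radius $\Omega(\log_{d-1} n)$ w.h.p., and that the number of cycles of length at most $c' \log_{d-1} n$ is $o(n)$ — in fact at most $n^{o(1)}$ or even $(\log n)^{O(1)}$ for a small enough constant $c'$ — w.h.p. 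So choose $r$ to be the largest value that is simultaneously $\Theta(\log_{d-1} n)$, at most the bicycle-free radius, and at least $c \log_{d-1} n$; and set $\tau = n^{o(1)}$ to be the bound on the number of short cycles. A union bound over these three events (each failing with probability $o_n(1)$) shows $G$ is an $(r, \Lambda, \tau)$-graph w.h.p.

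Next I would check that the quantitative precondition $r \leq \frac{2}{3}\log_{d-1}(n/\tau) - 5$ of Theorem \ref{thm:cycrem} is met. Since $\tau = n^{o(1)}$, we have $\log_{d-1}(n/\tau) = (1 - o_n(1))\log_{d-1} n$, so the right-hand side is $(\frac{2}{3} - o_n(1))\log_{d-1} n$. As long as the constant hidden in "$r = \Theta(\log_{d-1} n)$" is chosen below $2/3$ — and in particular we only need $r \geq c \log_{d-1} n$ with $c < 1/4 < 2/3$ — the precondition holds for all large $n$. The constraint $\Lambda \geq 2\sqrt{d-1}$ is immediate from the choice of $\Lambda$.

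Now apply $\fix$. The output $\fix(G)$ is a $d$-regular graph on $n + O(\tau \cdot (d-1)^{r/2+1})$ vertices; since $r \leq \frac{2}{3}\log_{d-1}(n/\tau)$ we get $(d-1)^{r/2} \leq (n/\tau)^{1/3}$, so the number of added vertices is $O(\tau^{2/3} n^{1/3} (d-1)) = O(n^{1/3 + o(1)}) = o(n)$, hence $\fix(G)$ still has $\sim n$ vertices as required by the definition of a $(\Lambda, g)$-good distribution. Its girth is at least $r \geq c\log_{d-1} n$, giving the girth bound. Its second eigenvalue is at most $\Lambda + O(1/r) = 2\sqrt{d-1} + \epsilon' + O(1/\log_{d-1} n)$, and since $\epsilon'$ was chosen strictly less than $\epsilon$ and the $O(1/\log_{d-1} n)$ term is $o_n(1)$, for all large enough $n$ this is at most $2\sqrt{d-1} + \epsilon$. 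Finally, explicitness: sampling a uniformly random $d$-regular graph is doable in polynomial time (e.g.\ via the configuration model with rejection, or known near-uniform samplers), and $\fix$ runs in deterministic polynomial time by Theorem \ref{thm:cycrem}, so the composed distribution is explicit.

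The main obstacle I anticipate is not any single deep step but rather pinning down the precise combinatorial tail bounds on random regular graphs: specifically, getting a clean w.h.p.\ statement that the bicycle-free radius is at least $r$ for an $r$ comfortably exceeding $c\log_{d-1} n$ (for $c < 1/4$), and simultaneously that the count of cycles of length $\leq r$ is small enough (ideally $n^{o(1)}$, certainly $o(n)$) that the quantitative precondition $r \leq \frac{2}{3}\log_{d-1}(n/\tau) - 5$ survives. This requires either citing the right form of the McKay–Wormald–Wysocka / Linial–Simkin type results or doing a short first-moment computation in the configuration model; care is needed because $c < 1/4$ is a genuine constraint — pushing $c$ above $1/4$ would force $r$ large enough that the $(d-1)^{r/2}$ blowup in the vertex count is no longer $o(n)$, which is exactly why the threshold appears in the statement.
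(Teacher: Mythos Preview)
Your approach is the same as the paper's: show that $G \sim \calG_d(n)$ is w.h.p.\ an $(r,\Lambda,\tau)$-graph with $r = c\log_{d-1} n$, check the preconditions of Theorem~\ref{thm:cycrem}, and read off the conclusion. The outline is fine, but two quantitative claims are wrong and should be corrected.

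First, $\tau = n^{o(1)}$ is false. The expected number of cycles of length exactly $i$ in a random $d$-regular graph is asymptotically $(d-1)^i/(2i)$, so summing up to $i = c\log_{d-1} n$ gives $\Theta(n^c/\log n)$ --- genuinely polynomial in $n$ for any fixed $c>0$, not $n^{o(1)}$ or polylogarithmic. The paper takes $\tau = O(n^c)$ (citing a concentration result of McKay, Lemma~\ref{lem:shortnm}). Fortunately this correction does not break your downstream arithmetic: with $\tau = n^c$ the precondition $r \leq \frac{2}{3}\log_{d-1}(n/\tau)$ becomes $c \leq \frac{2}{3}(1-c)$, i.e.\ $c \leq 2/5$, which still holds; and the vertex blowup is $\tau\cdot(d-1)^{r/2+1} = O(n^{3c/2}) = O(n^{3/8}) = o(n)$.

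Second, you misidentify where the constraint $c<1/4$ actually bites. It is \emph{not} the vertex blowup (that is $o(n)$ for all $c<2/3$) nor the precondition on $r$ (which only needs $c<2/5$). The real bottleneck is bicycle-freeness: the paper quotes Bordenave's bound (Lemma~\ref{lem:bfree}) that $G$ is bicycle-free at radius $r$ with probability $1 - O((d-1)^{4r}/n) = 1 - O(n^{4c-1})$, which is $1-o(1)$ only when $4c<1$. The short-cycle lemma the paper invokes is likewise stated only up to $\frac{1}{4}\log_{d-1} n$. So your closing diagnosis of why the threshold $1/4$ appears is backwards.
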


Recall that the upper bound on the girth of a regular graph is
$(1 + o_n(1)) 2 \log_{d - 1} n$, so this distribution has optimal
girth up to a constant. Based on our proof of the above and using
some classic results about the number of $d$-regular $n$-vertex
graphs, we can show a lower bound on the number of
($2\sqrt{d - 1} + \epsilon$, $c \log_{d - 1} n$)-good graphs in some
range.

\begin{corollary} \label{cor:apprand} Let $d \geq 3, n$ be integers
  and $\epsilon > 0, c > 1/4$ reals. The number of $d$-regular graphs
  with number of vertices in $[n, n + O(n^{3/8})]$, which are
  ($2\sqrt{d - 1} + \epsilon$, $c \log_{d - 1} n$)-good, is at least

  \[
    \Omega \parens*{\parens*{\frac{d^dn^d}{e^d(d!)^2}}^{n / 2}}.
  \]
\end{corollary}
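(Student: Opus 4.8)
The plan is to combine Theorem \ref{thm:apprand} with a classical counting result for labelled $d$-regular graphs. Recall Bender--Canfield / Bollob\'as: the number of (labelled) $d$-regular graphs on $m$ vertices is $(1 + o_m(1)) \, \frac{(dm)!}{(dm/2)!\, 2^{dm/2}\, (d!)^m} e^{-(d^2-1)/4}$, which by Stirling is $\big((1+o(1)) \frac{d^d m^d}{e^d (d!)^2}\big)^{m/2}$ up to subexponential factors. Taking $m = n$ this is the asymptotic quantity appearing in the statement. The idea is that a $1 - o_n(1)$ fraction of these are $(2\sqrt{d-1} + \epsilon, c\log_{d-1} n)$-good \emph{after} applying $\fix$, and since $\fix$ only ever adds $O(\tau \cdot (d-1)^{r/2+1})$ vertices, the outputs all have vertex count in a narrow window $[n, n + O(n^{3/8})]$; one then needs to argue that not too many inputs collapse onto the same output, so the number of distinct good graphs in that window is still exponentially large.

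First I would invoke Theorem \ref{thm:apprand} (and the high-probability structural facts it rests on): for $G$ a uniformly random $d$-regular $n$-vertex graph, with probability $1 - o_n(1)$, $G$ is an $(r, 2\sqrt{d-1}+\epsilon', \tau)$-graph with $r = \Theta(\log_{d-1} n)$ and $\tau = o(n)$ (in fact $\tau = n^{o(1)}$ or smaller in the relevant regime), and for such $G$, $\fix(G)$ is $d$-regular, has girth $\geq c\log_{d-1}n$, satisfies $\lambda(\fix(G)) \leq 2\sqrt{d-1} + \epsilon$, and has $n + O(\tau(d-1)^{r/2+1})$ vertices. With $r \leq \frac23 \log_{d-1} n$ one gets $(d-1)^{r/2+1} = O(n^{1/3})$, and with $\tau$ a small enough power of $n$ (the $c < 1/4$ slack is exactly what buys room here), the number of added vertices is $O(n^{3/8})$, so every $\fix(G)$ in the good event lands in $[n, n + O(n^{3/8})]$.

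Next I would do the double-counting. Let $S$ be the set of good $n$-vertex inputs; $|S| \geq (1 - o_n(1)) \cdot (\text{number of } d\text{-regular } n\text{-vertex graphs})$. The map $\fix$ sends $S$ into the target window, and I want a lower bound on $|\fix(S)|$. The crude bound $|\fix(S)| \geq |S| / \max_H |\fix^{-1}(H)|$ suffices if I can upper bound the fibre size: any preimage of a fixed $N$-vertex graph $H$ is obtained by ``undoing'' the surgery, which modifies only an $O(\tau)$-sized collection of short-cycle neighborhoods, each a region of $O((d-1)^r)$ vertices, so $|\fix^{-1}(H)| \leq \binom{N}{O(\tau(d-1)^r)} \cdot (\text{something})^{O(\tau(d-1)^r)} = \exp(n^{1-\Omega(1)})$ — subexponential in $n$. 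Hence $|\fix(S)| \geq |S| / \exp(o(n))$, which still has the form $\big((1+o(1))\frac{d^d n^d}{e^d(d!)^2}\big)^{n/2}$ since the main term is $\exp(\Theta(n\log n))$ and dominates. That gives the claimed $\Omega\!\big(\big(\frac{d^d n^d}{e^d(d!)^2}\big)^{n/2}\big)$ count of good graphs in $[n, n+O(n^{3/8})]$.

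The main obstacle I expect is bounding the fibre size $|\fix^{-1}(H)|$ cleanly — i.e.\ making precise that $\fix$ is ``almost injective'' in the sense that only subexponentially many inputs map to any given output. This requires knowing enough about the internal structure of the $\fix$ algorithm from Section \ref{sec:rem} (which short-cycle gadgets it splices in and how local the modifications are) to certify that an output graph, together with a bounded amount of side information of size $n^{o(1)}$, determines the input. If $\fix$ is genuinely local and deterministic this is routine, but it is the one place where the argument leans on the construction rather than on black-box invocation of Theorem \ref{thm:apprand}. A secondary bookkeeping point is matching the exponent in $O(n^{3/8})$ to the exact admissible range of $\tau$ and $r$: one has to check that the $c < 1/4$ hypothesis (via $r \leq \frac23 \log_{d-1}(n/\tau) - 5$ with $\tau$ correspondingly bounded) indeed forces $\tau (d-1)^{r/2+1} = O(n^{3/8})$, which is a short computation with the Poisson/second-moment bounds on the number of short cycles in a random $d$-regular graph.
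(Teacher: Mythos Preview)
Your high-level strategy matches the paper's exactly: count labelled $d$-regular $n$-vertex (multi)graphs via the configuration model, observe that a $(1-o_n(1))$ fraction are valid inputs to $\fix$, note that all outputs land in the window $[n,\,n+O(n^{3/8})]$, and then control how many inputs can map to a single output. Your bookkeeping for the $O(n^{3/8})$ window is also essentially what the paper does (there $r=c\log_{d-1}n$ with $c<1/4$ and $\tau=O(n^c)$, whence $\tau(d-1)^{r/2+1}=O(n^{3c/2})=O(n^{3/8})$).

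The one genuine difference is at the step you flagged as the main obstacle. You propose a crude subexponential bound on the fibre size $|\fix^{-1}(H)|$ via locality of the surgery, and then absorb the loss into the $\Omega(\cdot)$. The paper instead proves that $\fix$ is \emph{injective} on valid inputs (Proposition~\ref{prop:disfix}): from $\fix(G)$ one can strip off the two attached trees to recover $H$, read off the set $S$ of degree-$(d-1)$ vertices, and then argue that there is a \emph{unique} perfect matching on $S$ whose addition creates exactly $|S|/2$ short cycles (any other matching would create fewer, by bicycle-freeness), so $G$ is determined. This gives fibre size $1$ rather than $\exp(n^{1-\Omega(1)})$. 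Your approach would work, but the paper's uniqueness argument is both shorter and avoids the delicate ``bounded side information'' accounting you anticipated; it also exploits the specific structure of $\fix$ (break one edge per short cycle) rather than just its locality.
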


We prove both of these results in \Cref{sec:apprand}.

Finally, we show a slightly stronger version of result of
\cite{MOP19b} by plugging our short cycle removal theorem into their
construction.

\begin{theorem}
  \label{thm:appdet}
  Given any $n$, $d \geq 3$, $\epsilon > 0$ and $c$ such that

  \[
    c \leq \frac{\sqrt{\log n} \cdot \log_{d - 1} 2}{15},
  \]

  \noindent there is deterministic polynomial-time (in $n$) algorithm
  that constructs a $d$-regular $N$-vertex graphs with the following
  properties:

  \begin{itemize}
  \item $N = n(1 + o_n(1))$;
  \item $\lambda(G) \leq 2\sqrt{d - 1} + \epsilon$;
  \item $G$ has girth at least $c\sqrt{\log n}$.
  \end{itemize}
\end{theorem}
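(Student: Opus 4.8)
The plan is to run the explicit near-Ramanujan construction of \cite{MOP19b}, certify that its output is an $(r,\Lambda,\tau)$-graph with suitable parameters, and then apply the short cycle removal algorithm $\fix$ of \Cref{thm:cycrem} to raise the girth without damaging the spectral gap. Concretely, I would first invoke \Cref{thm:mop} with error parameter $\epsilon/2$ in place of $\epsilon$, obtaining in deterministic polynomial time a $d$-regular graph $G_0$ on $N_0 = n(1+o_n(1))$ vertices with $\lambda(G_0) \le 2\sqrt{d-1}+\epsilon/2$. The technical heart of the argument is then to extract from the \emph{construction} of \cite{MOP19b}, not merely from the statement of \Cref{thm:mop}, two purely combinatorial facts about $G_0$: that it is bicycle-free at radius at least $R := \frac{\log n \cdot \log_{d-1}2}{15}$, and that it has at most $\tau = n^{o_n(1)}$ cycles of length at most $R$. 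The analysis of \cite{MOP19b} is a derandomized version of Bordenave's trace-method proof, and such arguments control the local structure of the graph — tangles, bicycles, and short closed walks — up to a scale governed by the length of the pseudorandom seed; reading off that scale is what produces the constant $\frac{1}{15}$, while the $\log_{d-1}2$ factor is just the change of base from $\log n$ to the natural $\log_{d-1}$ scale at which girth is measured.

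Granting these two facts, set $r := \lceil c\sqrt{\log n}\rceil$. For $n$ large the hypothesis $c \le \frac{\sqrt{\log n}\cdot\log_{d-1}2}{15}$ gives $r \le R$, so $G_0$ is bicycle-free at radius at least $r$; since $r = O(\sqrt{\log n})$ and $\tau = n^{o_n(1)}$ we also have $r \le \frac23\log_{d-1}(N_0/\tau) - 5$ for $n$ large; and $\Lambda := 2\sqrt{d-1}+\epsilon/2 \ge 2\sqrt{d-1}$. Hence $G_0$ is an $(r,\Lambda,\tau)$-graph meeting the hypotheses of \Cref{thm:cycrem}, and we output $G := \fix(G_0)$. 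By \Cref{thm:cycrem}, $G$ is $d$-regular on $N_0 + O(\tau\cdot(d-1)^{r/2+1})$ vertices; since $r = O(\sqrt{\log n})$ we have $(d-1)^{r/2} = n^{o_n(1)}$, so together with $\tau = n^{o_n(1)}$ this vertex count equals $N_0(1+o_n(1)) = n(1+o_n(1))$, which is the first bullet. Next, $\lambda(G) \le \Lambda + O(1/r) = 2\sqrt{d-1}+\epsilon/2+O(1/r)$, and since $r \ge c\sqrt{\log n}\to\infty$ this is at most $2\sqrt{d-1}+\epsilon$ once $n$ is large relative to $c$, $\epsilon$, and $d$; this is the second bullet. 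Finally $G$ has girth at least $r \ge c\sqrt{\log n}$, the third bullet, and both \Cref{thm:mop}'s algorithm and $\fix$ run in deterministic polynomial time, so the whole procedure does as well.

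I expect the extraction step to be the main obstacle. \Cref{thm:mop} used as a black box gives no handle on bicycle-freeness or short-cycle counts, so one must open up the construction of \cite{MOP19b} and isolate (or re-derive) the relevant intermediate results of their analysis in order to certify that $G_0$ is an $(r,\Lambda,\tau)$-graph with the stated parameters. The subtle part is the short-cycle bound: bicycle-freeness at radius $R$ by itself only yields the trivial estimate that there are $O(N_0)$ cycles of length at most $R$ (they are pairwise vertex-disjoint), which is far too weak to satisfy $r \le \frac23\log_{d-1}(N_0/\tau) - 5$ or to keep the number of added vertices $o(n)$, so one genuinely needs the trace-method estimates to bound the number of short cycles. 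It is the shorter scale to which the derandomized analysis applies — smaller by a constant factor than the $\approx \frac14\log_{d-1}n$ available from a truly random $d$-regular graph, cf.\ \Cref{thm:apprand} — that is responsible for the constant $\frac{1}{15}$ in the statement.
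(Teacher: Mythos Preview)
Your plan has a genuine gap, and it is precisely at the point you flag as ``the main obstacle''. You apply $\fix$ to the final $n$-vertex output of the \cite{MOP19b} construction and assert that it has $\tau = n^{o_n(1)}$ cycles of length at most $r$, appealing to ``trace-method estimates'' inside their analysis. But the \cite{MOP19b} analysis controls short cycles only in the $n_0$-vertex starter graph $G_0$, not in the final graph obtained after $\sim \log(n/n_0)$ many 2-lifts; and a 2-lift can double the number of cycles of a given length (whenever the sign product around a cycle is $+1$), so no $n^{o_n(1)}$ bound propagates to the final graph. Relatedly, your value $R = \tfrac{1}{15}\log n \cdot \log_{d-1}2 = \Theta(\log_{d-1} n)$ for the bicycle-free radius of the final graph is too optimistic: that graph inherits bicycle-freeness only at radius $\Theta(\log_{d-1} n_0) = \Theta(\sqrt{\log n})$ from the starter, since 2-lifts preserve but do not improve the bicycle-free radius.

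The paper's argument is both different and simpler. It opens up the \cite{MOP19b} construction, but only to insert $\fix$ \emph{between} the two steps: it applies $\fix$ to the small starter graph $G_0$ on $n_0 = 2^{\Theta(\sqrt{\log n})}$ vertices, and then performs the repeated 2-lifting, using the elementary fact that any 2-lift can only increase girth. To bound $\tau$ for $G_0$ it does not use trace methods at all; a one-line counting argument (\Cref{prop:breetocyc}) shows that bicycle-freeness at radius $2r$ alone forces $|\Cyc_r(G)| \le n/(d-1)^r$ --- exactly the bound you dismiss as ``far too weak''. It suffices here because for $G_0$ the bicycle-free radius is a constant fraction of $\log_{d-1} n_0$, so \Cref{lem:bftogt} applies. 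The constant $\tfrac{1}{15}$ then arises as $\tfrac15 \cdot \tfrac13$: the $\tfrac15$ is the bicycle-free radius $\tfrac15\log_{d-1} n_0$ guaranteed for the starter by \Cref{thm:rebordenave}, and the $\tfrac13$ is the loss in \Cref{lem:bftogt} coming from the precondition $r \le \tfrac23\log_{d-1}(n/\tau)$ of \Cref{thm:cycrem}. It has nothing to do with the seed length of the derandomization.
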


We prove this result in \Cref{sec:appdet}.

\subsection{Models of random regular graphs}

We will introduce some classic models of random regular graphs, which
we will use throughout the paper.

\begin{definition}[$\calG_d(n)$]
  Let $\calG_d(n)$ denote the set of $d$-regular $n$-vertex graphs. We
  write $G \sim \calG_d(n)$ to denote that $G$ is sampled uniformly at
  random from $\calG_d(n)$.
\end{definition}

Sampling from $\calG_d(n)$ is not easy a priori; the standard way to
do so is using the \textit{configuration model}, which was originally
defined by Bollob\'{a}s \cite{Bol80}.

\begin{definition}(Configuration model).  Given integers $n > d > 0$
  with $nd$ even, the configuration model produces a random
  $n$-vertex, $d$-regular undirected multigraph (with loops) $G$. This
  multigraph is induced by a uniformly random matching on the set of
  ``half-edges'', $[n] \times [d] \cong [nd]$ (where
  $(v,i) \in [n] \times [d]$ is thought of as half of the $i$th edge
  emanating from vertex $v$).  Given a matching, the multigraph $G$ is
  formed by ``attaching'' the matched half-edges.
\end{definition}

This model corresponds exactly to the uniform distribution on not
necessarily simple $d$-regular $n$-vertex graphs. It also not hard to
see that the conditional distribution of the $d$-regular $n$-vertex
configuration model when conditioned on it being a simple graph is
exactly the uniform distribution on $\calG_d(n)$. The probability that
the sampled graph is simple is $\Omega_d(1)$.

The configuration model has the advantage that is easy to sample and
to analyze. For reference, the proof of \Cref{thm:bor} was done in
terms of the configuration model and so the theorem also applies to
it.

\section{Short cycles removal} \label{sec:rem}

In this section we prove \Cref{thm:cycrem}. Recall that we are given a
$d$-regular $n$-vertex $(r, \Lambda, \tau)$-graph $G$ with the
constraints specified in \Cref{eq:cycremcons} and we wish to find some
$d$-regular graph $\fix(G)$ on $\sim n$ vertices such that
$\lambda(\fix(G)) \leq \Lambda + o_r(1)$ and its girth is at least
$r$.

Briefly, the algorithm that achieves this works by removing one edge
per small cycle from $G$, effectively breaking apart all such cycles,
and then fixing the resulting off degree vertices by adding $d$-ary
trees in a certain way. We will now more carefully outline this method
and then proceed to fill in some details as well as show it works as
desired.

Before starting, we introduce some notation which will be helpful.

\begin{definition}[$\Cyc_g(G)$]
  Given a graph $G$, let $\Cyc_g(G)$ denote the collection of all
  cycles in $G$ of length at most $g$. Recall that if $\Cyc_g(G)$ is
  empty then $G$ is said to have girth exceeding $g$.
\end{definition}

\begin{definition}[$B_\delta(S)$]
  Given a set of vertices $S$ in a graph $G$, let $B_\delta(S)$ denote
  the collection of vertices in $G$ within distance $\delta$ of
  $S$. We will occasionally abuse this notation and write
  $B_\delta(v)$ instead of $B_\delta(\{v\})$ for a vertex $v$.
\end{definition}

Let $E_c$ be a set containing exactly one arbitrary edge per cycle in
$\Cyc_{r}(G)$ and let $H_t$ be a graph with the same vertex set as $G$
obtained by removing all edges in $E_c$ from $G$. To prevent
ambiguity, whenever we pick something arbitrarily let's suppose the
algorithm $\fix$ uses the lexicographical order of node labels as a
tiebreaker. We also partition the endpoints of each edge as described
in the following definition:

\begin{definition}
  Given an edge set $E$, we let $V_1(E)$ and $V_2(E)$ be two disjoint
  sets of vertices constructed as follows: for all $e = (u, v) \in E$
  place $u$ in $V_1(E)$ and $v$ in $V_2(E)$ (so each endpoint is in
  exactly one of the two sets).
\end{definition}

Note that according to the above definition we have
$|V_1(E_c)| = |V_2(E_c)| = |E_c| \leq \tau$. For ease of notation we
also define:

\begin{definition}[$\phi_E(v)$]
  Given an edge set $E$ and $(u, v) \in E$ such that $u \in V_1(E)$
  and $v \in V_2(E)$, we denote by $\phi_E$ the function that maps
  endpoints to endpoints, so we have $\phi_E(u) = v$ and
  $\phi_E(v) = u$.
\end{definition}

We will often abuse notation and drop the $E$ from $\phi_E$ when it is
clear from context.

Since we break apart each cycle in $\Cyc_{r}(G)$, we can conclude that
$H_t$ has girth more than $r$. However, note that in removing edges,
$H_t$ is no longer $d$-regular.

To fix this, consider the following object which we refer to as a
$d$-regular tree of height $h$: a finite rooted tree of height $h$
where the root has $d$ children but all other non-leaf vertices have
$d-1$ children. This definition has that every non-leaf vertex in
a $d$-regular tree has degree $d$.

We shall add two $d$-regular trees to $H_t$ in order to fix the off
degrees, while maintaining the desired girth and bound on
$\lambda$. The idea of using $d$-regular trees is based on the
degree-correction gadget used in \cite{AGS19} for their construction
of high-girth near-Ramanujan graphs with localized eigenvectors. As
such, we will use some of the tools used in their proofs.

Let $h$ be an integer parameter we shall fix later. Let $T_1$ and
$T_2$ be two $d$-regular trees of height $h$ and let $L_1$ and $L_2$
be the sets of leaves of each one. Note that
$|L_1| = |L_2| = d(d - 1)^{h - 1} \approx (d - 1)^h$. We shall add
the two trees to $H_t$ and then pair up elements of $V_1(E_c)$ with
elements of $L_1$ (and analogously for $V_2(E_c)$ and $L_2$) and merge
the paired up vertices. However, we have to deal with two potential
issues:

\begin{itemize}
\item $|L_i| \neq |V_i(E_c)|$, in which case we cannot get an exact
  pairing between these sets;
\item This procedure might result in the creation of small cycles
  (potentially even cycles of length $O(1)$).
\end{itemize}

To expand on the latter point, we describe a potential problematic
instance. Suppose we can somehow pick $h$ such that
$|L_i| = |V_i(E_c)|$ and then arbitrarily pair up their
elements. Suppose there are two edges in $E_C$ corresponding to two
cycles of constant length and denote their endpoints by
$v_1 \in V_1(E_C), v_2 \in V_2(E_C)$ and
$u_1 \in V_1(E_C), u_2 \in V_2(E_C)$. If the distance in $T_1$ of
$v_1$ and $u_1$ given by the pairing of $V_1(E_c)$ and $L_1$ is small
(constant, for example) and the same applies to the distance in $T_2$
of $v_2$ and $u_2$, then there is a cycle of small length (constant,
for example) in the graph resulting from adding the two trees to
$H_t$.

To address this issue we remove some extra edges from $G$ that are
somehow ``isolated'' and group them with edges from $E_C$. The goal is
to have the endpoints of any two edges in $E_C$ be far apart in $T_1$
and $T_2$ distance, but close to some of the endpoints of the extra
edges. With this in mind, we set
$h = \ceil{\log_{d-1} \tau} + \ceil{r / 2} + 1$ so that
$|L_i| \approx \tau \cdot (d - 1)^{r / 2 + 1}$, which is close to the
number of extra edges we want to remove. This choice will also be
helpful later when we analyze the spectral properties of the
construction.

Formally, this leads us to the following proposition:

\begin{proposition}
  \label{prop:treeadd}
  There is a set of edges $E_t$ of $G$ such that the following is true
  for $i \in \{1, 2\}$:

  \begin{itemize}
  \item $|V_i(E_t) \cup V_i(E_c)| = d(d - 1)^{h - 1}$;
  \item for all distinct $u, v \in V_i(E_t) \cup V_i(E_c)$, we have
    $B_{r}(u) \cap B_{r}(v) = \emptyset$.
  \end{itemize}
  
  Additionally, we can find such a set in polynomial time.
\end{proposition}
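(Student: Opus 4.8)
The goal is to select a set of "extra" edges $E_t$ so that, together with the cycle-breaking edges $E_c$, the union of the endpoint sets on each side has exactly $d(d-1)^{h-1}$ elements, and all these endpoints are mutually $r$-separated (their radius-$r$ balls are disjoint). The natural approach is a greedy/sequential selection. Let me think about what we need. On each side $i\in\{1,2\}$ we need $d(d-1)^{h-1} = |L_i|$ vertices; we already have $|V_i(E_c)| = |E_c| \le \tau$ of them. So we need to add roughly $d(d-1)^{h-1} - |E_c| \approx \tau (d-1)^{r/2+1}$ new edges, contributing one endpoint each to $V_1(E_t)$ and $V_2(E_t)$.

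The key structural issue is the second bullet: all endpoints in $V_i(E_t)\cup V_i(E_c)$ must be pairwise $r$-separated. This requires two things. First, the endpoints of $E_c$ themselves must be pairwise $r$-separated — this is where the bicycle-free-at-radius-$r$ hypothesis and the "far apart short cycles" intuition come in; I would argue that if two cycle-representative edges had endpoints within distance $2r$ of each other, the radius-$r$ ball around a midpoint would contain two cycles, contradicting bicycle-freeness. (Some care is needed about which endpoint we put in $V_1$ versus $V_2$; we may need to choose the partition, i.e. the orientation of each edge in $E_c$, so that even the $V_1$-endpoints alone are pairwise separated, and likewise for $V_2$.) Second, the new edges in $E_t$ must be chosen to avoid the already-committed radius-$r$ balls and each other's balls.

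For the greedy selection I would proceed as follows. Maintain a "forbidden" set $F$ of vertices, initialized to $B_{r}(V_1(E_c)\cup V_2(E_c))$. Repeatedly: pick any edge $e=(u,v)$ of $G$ with $u,v \notin F$ and, moreover, with $B_r(u)$ and $B_r(v)$ disjoint — such an edge is "isolated" in the sense that neither endpoint is near any previously chosen endpoint, and the two endpoints are themselves far apart so they can legitimately be split across the two sides. Add $e$ to $E_t$, put $u$ into $V_1(E_t)$ and $v$ into $V_2(E_t)$, and add $B_r(u)\cup B_r(v)$ to $F$. Stop once $|V_i(E_t)\cup V_i(E_c)| = d(d-1)^{h-1}$. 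Each ball $B_r(x)$ in a $d$-regular graph has size at most $1 + d + d(d-1) + \cdots \le 2(d-1)^{r}$, so after choosing $k$ edges (plus the initial $\tau$-ish cycle endpoints) the forbidden set has size at most $O((k+\tau)(d-1)^r)$. We need to choose about $d(d-1)^{h-1} \approx \tau(d-1)^{r/2+1}$ edges, so $F$ stays of size $O(\tau^2 (d-1)^{3r/2})$ throughout; the hypothesis $r \le \frac23 \log_{d-1}(n/\tau) - 5$ is exactly what guarantees $(d-1)^{3r/2} = O((n/\tau)^{1}/(\text{const}))$, i.e. $\tau(d-1)^{3r/2} \le \tau \cdot (n/\tau)/C = n/C$, so $F$ never exhausts more than a constant fraction of the $n$ vertices and a valid edge to add always exists. (One also needs that not every remaining edge is "bad" for the second reason — having its two endpoints within distance $2r$ of each other — but such an edge lies on a cycle of length $\le 2r+1$, and the number of vertices covered by all short cycles is at most $O(\tau (d-1)^r)$ by the cycle-count bound, which is also absorbed.)

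The main obstacle, and the part deserving the most care, is making the counting tight enough that the constant "$5$" in the hypothesis suffices: one has to track the ball-size bound $|B_r(x)| \le d(d-1)^{r-1}/(d-2) + 1$ versus $(d-1)^r$, the number of edges being added ($\approx \tau(d-1)^{r/2+1}$, from the choice of $h$), and the vertices consumed by short cycles, and verify each is $o(n)$ or at most $n/C$ under the given constraint on $r$ — including the subtlety that $h = \lceil \log_{d-1}\tau\rceil + \lceil r/2\rceil + 1$ may overshoot $|V_i(E_c)|$ by a factor larger than one would like, so $|L_i|$ could be as large as $(d-1) \cdot \tau (d-1)^{r/2+1}$ and the edge-count bound must accommodate the ceilings. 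The polynomial-time claim is then immediate: each greedy step does a BFS to radius $r$ (polynomial since $r = O(\log n)$ so $(d-1)^r = \mathrm{poly}(n)$) from each candidate endpoint, and there are polynomially many steps.
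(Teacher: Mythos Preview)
Your greedy marking-and-counting scheme is essentially the paper's approach. The paper's version is a bit simpler: it marks $B_{1+r}$ around every endpoint of $E_c$, then repeatedly picks any unmarked vertex $u$ together with an \emph{arbitrary} neighbor $v$, adds $(u,v)$ to $E_t$, and marks $B_{1+r}(\{u,v\})$; the ``$+1$'' is only so that the neighbor is automatically far enough from every earlier endpoint. The count of unmarked vertices after $t$ steps is $n - 2\tau(d-1)^{r+1} - 2t(d-1)^{r+1}$, which stays positive for all $t \le d(d-1)^{h-1}-|E_c| \le 2\tau(d-1)^{r/2+2}$ directly from the bound on $r$ (your own count has a slip --- it should be $O(\tau(d-1)^{3r/2})$, not $O(\tau^2(d-1)^{3r/2})$). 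In particular, the paper does \emph{not} impose your side condition that the two endpoints of a new edge have disjoint $r$-balls, and it does \emph{not} separately verify that the $E_c$ endpoints are already pairwise separated. Your sketched bicycle-freeness argument for the latter is in fact invalid as written: from $d(u,v)\le 2r$ one cannot conclude that the radius-$r$ ball around a midpoint contains either short cycle in full, so no contradiction arises. (Strictly speaking, neither the paper's $B_{1+r}$-marking nor your $B_r$-marking delivers the literal ``$B_r(u)\cap B_r(v)=\emptyset$'' of the second bullet --- both yield only pairwise distance $>r$ between each newly chosen endpoint and the earlier ones, and the paper's proof never treats the $E_c$--$E_c$ case; this weaker separation is all that the downstream girth and spectral analyses actually use.)
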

\begin{proof}
  We will describe the efficient algorithm that does this.

  We are going to incrementally grow our set $E_t$, one edge at the
  time, until $|V_i(E_t) \cup V_i(E_c)| = d(d - 1)^{h - 1}$, so
  suppose $E_t$ is initially an empty set. We start by, for all
  $e = (u, v) \in E_c$, marking all vertices in $B_{1 + r}(\{v,
  u\})$. Note that we marked at most
  $\tau \cdot (d(d - 1)^{r}) \leq 2 \tau (d - 1)^{r + 1}$ vertices.

  Notice that, since we marked all vertices at distance $1 + r$ from
  any vertex in $V_i(E_c)$, we can safely pick any unmarked vertex and
  an arbitrary neighbor and add that edge to $E_t$.

  We can now describe a procedure to add a single edge to $E_t$:
  
  \begin{itemize}
  \item Pick an unmarked vertex $u$ and an arbitrary neighbor $v$ of
    $u$;
  \item Add $(u, v)$ to $E_t$;
  \item Mark all vertices in $B_{1 + r}(\{u, v\})$.
  \end{itemize}

  By the same reasoning as before, as long as we have an unmarked
  vertex, this procedure works. If we repeat the above $t$ times, we
  are left with at least

  \[
    n - 2 \tau (d - 1)^{r + 1} - 2 t (d - 1)^{r + 1}
  \]

  \noindent unmarked vertices. We claim the procedure can be
  successfully repeated at least $2 \tau (d - 1)^{r/2 + 2}$ times. In
  such a case, the number of unmarked vertices left is at least:

  \[
    n - 2 \tau (d - 1)^{r + 1} - 4 \tau (d - 1)^{r/2 + 2} (d -
    1)^{r + 1} \geq  n - 6 \tau (d - 1)^{3r / 2 + 3},
  \]

  \noindent which is always greater than $0$ when
  $r \leq \frac{2}{3}\log_{d - 1}(n / \tau) - 5$. Hence, we always
  have at least one unmarked vertex to pick throughout the procedure.

  Note that the number of repetitions we require exactly matches the
  size of $|E_t|$ so we need this to be exactly
  $d(d - 1)^{h - 1} - \tau \leq 2 \tau (d - 1)^{r/2+2}$, which means
  our algorithm always succeeds.
\end{proof}

We will state some simple properties of this construction that will be
relevant later on.

\begin{fact} \label{fact:vsize}
  $|V_i(E_t)| \geq \tau \cdot (d - 1)^{\ceil{r / 2}}$
\end{fact}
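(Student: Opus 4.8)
The plan is to read off the value of $|V_i(E_t)|$ from the construction in \Cref{prop:treeadd} and then do the arithmetic with the chosen height $h$. The one substantive observation needed is that $V_i(E_t)$ and $V_i(E_c)$ are disjoint: in the algorithm of \Cref{prop:treeadd} every vertex of $B_{1+r}(V_i(E_c))$ is marked before any edge is added to $E_t$, and each edge we add to $E_t$ is incident to a currently unmarked vertex (and its neighbor, which also gets marked immediately); hence no endpoint of an $E_t$-edge can ever coincide with an endpoint of an $E_c$-edge. Thus $V_i(E_t) \cap V_i(E_c) = \emptyset$, and since $\Cref{prop:treeadd}$ guarantees $|V_i(E_t) \cup V_i(E_c)| = d(d-1)^{h-1}$, we get
\[
  |V_i(E_t)| = d(d-1)^{h-1} - |V_i(E_c)| \geq d(d-1)^{h-1} - \tau,
\]
using $|V_i(E_c)| = |E_c| \leq \tau$.

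Next I would substitute $h = \ceil{\log_{d-1}\tau} + \ceil{r/2} + 1$, so that $h - 1 = \ceil{\log_{d-1}\tau} + \ceil{r/2}$ and
\[
  d(d-1)^{h-1} = d \cdot (d-1)^{\ceil{\log_{d-1}\tau}} \cdot (d-1)^{\ceil{r/2}} \geq d\,\tau\,(d-1)^{\ceil{r/2}},
\]
where the inequality uses $(d-1)^{\ceil{\log_{d-1}\tau}} \geq (d-1)^{\log_{d-1}\tau} = \tau$. Combining the two displays,
\[
  |V_i(E_t)| \geq d\,\tau\,(d-1)^{\ceil{r/2}} - \tau = \tau\bigl(d(d-1)^{\ceil{r/2}} - 1\bigr) \geq \tau (d-1)^{\ceil{r/2}},
\]
the last step because $d x - 1 \geq x$ for every $x \geq 1$ and $d \geq 2$ (applied with $x = (d-1)^{\ceil{r/2}} \geq 1$). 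This is exactly the claimed bound.

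There is essentially no obstacle here; the only point that needs a sentence of care is the disjointness of $V_i(E_t)$ and $V_i(E_c)$, which is immediate from the marking discipline in the proof of \Cref{prop:treeadd}. Everything else is the elementary estimate $(d-1)^{\ceil{\log_{d-1}\tau}} \geq \tau$ together with $d \geq 2$.
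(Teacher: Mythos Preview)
Your proof is correct and follows essentially the same route as the paper: compute $|V_i(E_t)|$ from the size condition $|V_i(E_t)\cup V_i(E_c)| = d(d-1)^{h-1}$ in \Cref{prop:treeadd}, subtract off $|V_i(E_c)|\leq\tau$, and plug in the chosen value of $h$. You are simply more explicit than the paper, which compresses the whole argument into the one-line chain $|V_i(E_t)| = |E_t| = d(d-1)^{h-1} - \tau \geq \tau(d-1)^{\ceil{r/2}}$ without spelling out the disjointness or the arithmetic.
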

\begin{proof}
  We simply have:
  $|V_i(E_t)| = |E_t| = d(d - 1)^{h - 1} - \tau \geq \tau \cdot (d -
  1)^{\ceil{r / 2}}$.
\end{proof}

\begin{fact} \label{fact:tree} For all $e \in E_t$, there is at most
  one cycle in $B_{r}(e)$ in $G$ and if there is a cycle it has
  length greater than $r$.
\end{fact}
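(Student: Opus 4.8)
The plan is to prove the two assertions of \Cref{fact:tree} separately, each by tracing back to the greedy construction of $E_t$ carried out in the proof of \Cref{prop:treeadd}. Fix $e = (u,v) \in E_t$, and recall the two features of that construction we will use: (i) before any edge was added to $E_t$, every vertex of $B_{1+r}(\{a,b\})$ was marked, for every $(a,b) \in E_c$; and (ii) at the moment $e = (u,v)$ was added to $E_t$, the chosen endpoint $u$ was unmarked. Since later markings never un-mark a vertex, (i) and (ii) together give the key property that no endpoint of any edge of $E_c$ lies within distance $1+r$ of $u$; i.e.\ every vertex of $V_1(E_c) \cup V_2(E_c)$ is at distance more than $1+r$ from $u$.

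For the first assertion (at most one cycle in $B_r(e)$), observe that since $u$ and $v$ are adjacent in $G$, every vertex within distance $r$ of $v$ is within distance $r+1$ of $u$, so $B_r(e)$ is contained in the distance-$(r+1)$ neighborhood of $u$. As $G$ is bicycle-free at radius at least $r$, that neighborhood — and hence $B_r(e)$ — contains at most one cycle. (This off-by-one is exactly what the slack in ``at radius at least $r$'', and the symmetric choice of marking radius $1+r$ rather than $r$ in \Cref{prop:treeadd}, are there to absorb; in the applications the bicycle-free radius is in any case much larger than $r$.)

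For the second assertion, suppose toward a contradiction that $C$ is a cycle in $B_r(e)$ with $|C| \le r$. Then $C \in \Cyc_r(G)$, so by the definition of $E_c$ there is an edge of $C$ lying in $E_c$; let $x$ be an endpoint of that edge, so that $x \in V_1(E_c) \cup V_2(E_c)$ and at the same time $x \in C \subseteq B_r(e)$. Being in $B_r(e)$ means $x$ lies within distance $r$ of $u$ or of $v$; since $u$ and $v$ are adjacent, in either case $x$ lies within distance $r+1$ of $u$. This contradicts the key property from the first paragraph, that $x$ is at distance more than $1+r$ from $u$. Hence no cycle in $B_r(e)$ has length at most $r$, which is the claim.

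I do not expect a genuine obstacle here: the content is just the two observations that (a) every cycle of length at most $r$ in $G$ must pass through an endpoint of an edge of $E_c$ (because $E_c$ hits every cycle in $\Cyc_r(G)$), and (b) $E_t$ was deliberately built so that each of its edges has an endpoint avoiding the $(1+r)$-neighborhood of all those $E_c$-endpoints. The only point requiring mild care is the bookkeeping when one replaces the edge $e$ by a single one of its endpoints, which is the role of the ``$1+r$'' (rather than ``$r$'') marking radius and of the ``at radius at least $r$'' hypothesis.
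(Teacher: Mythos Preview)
Your proof is correct and follows essentially the same approach as the paper. The paper dispatches the first assertion in one line (``obvious since $G$ is bicycle-free at radius $r$'') and for the second assertion observes that a short cycle in $B_r(e)$ must contain some $e' \in E_c$, whence $B_r(e) \cap B_r(e') \neq \emptyset$ contradicts the disjointness conclusion of \Cref{prop:treeadd}. You instead reach into the marking construction from the \emph{proof} of \Cref{prop:treeadd} to extract the key property that the chosen endpoint $u$ is at distance more than $1+r$ from every $E_c$-endpoint; this is the same contradiction, just unpacked one level further, and it has the mild advantage of cleanly handling the endpoint bookkeeping (the paper's appeal to the proposition's $V_i$-indexed statement leaves this implicit). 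Your explicit treatment of the off-by-one in the first assertion, via $B_r(e) \subseteq B_{r+1}(u)$, is also more careful than the paper's one-liner.
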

\begin{proof}
  That there is at most one cycle in $B_{r}(e)$ is obvious since $G$
  is bicycle-free at radius $r$. So, let's suppose there is a cycle
  $C$ in $B_{r}(e)$ with length less than or equal to $r$. Then, there
  is at least one edge $e' \in C$ that is also in $E_c$, but in that
  case $B_{r}(e) \cap B_{r}(e') \neq \emptyset$, which contradicts the
  definition of $E_t$.
\end{proof}

We can now extend our definition of $H_t$. Let $H$ be the graph
obtained from $G$ by removing all edges in $E_c$ and in $E_t$.

Recall our plan to add $T_1$ and $T_2$, two $d$-regular trees of
height $h$ (recall $h = \ceil{\log_{d-1} \tau} + \ceil{r / 2} + 1$),
to $H$ while pairing up elements of $L_i$ with endpoints of removed
edges. We will now describe a pairing process that achieves high girth
(and later we will see how it also achieves low $\lambda$).

First, consider a canonical ordering of $L_1$ and $L_2$ based on visit
times from a breath-first search, as illustrated in \Cref{fig:treeord}
for $d = 3$. Given this ordering, the following is easy to see:

\begin{fact} \label{fact:tdist}
  The tree distance between two leaves with indices $i$ and $j$ is at
  least $2(1 + \log_{d-1} (|i - j| + 1) / d)$.
\end{fact}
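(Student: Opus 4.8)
The plan is to reduce the problem to a fact about how many leaves lie within a given tree-distance of a fixed leaf, and then invert that estimate. Fix the BFS ordering of the leaves $L_k$ (for $k \in \{1,2\}$), so leaf indices are assigned in the order the BFS visits them. The key observation is that, in a $d$-regular tree of height $h$, the set of leaves at tree-distance at most $2\ell$ from a fixed leaf $v$ is exactly the set of leaves descending from the ancestor of $v$ at height $\ell$ above $v$ (i.e. the ancestor $\ell$ levels up). That ancestor is the root of a subtree whose leaves number at most $d(d-1)^{\ell-1} \le d(d-1)^{\ell}$ (and the root's own subtree, the $\ell = h$ case, has all $d(d-1)^{h-1}$ leaves). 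Crucially, because we used a BFS order, the leaves of any such subtree form a \emph{contiguous block} of indices: BFS visits the leaves subtree-by-subtree in a left-to-right fashion, so all leaves hanging off a common ancestor get consecutive labels.

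First I would make the contiguity claim precise: for every internal vertex $w$ at height $\ell$ (measured so that leaves are at height $0$), the indices of the leaves below $w$ form an interval of length at most $d(d-1)^{\ell - 1}$. This follows by induction on $\ell$ using the recursive structure of BFS (the children of $w$ are visited consecutively within their level, and by the inductive hypothesis each contributes a contiguous block, so their union is contiguous). Next, given two leaves with indices $i$ and $j$, let $\ell$ be the height of their least common ancestor $w$, so that their tree distance is exactly $2\ell$ (each leaf is at distance $\ell$ from $w$, and the path goes up to $w$ and back down). Both indices $i$ and $j$ lie in the contiguous block of leaves below $w$, which has length at most $d(d-1)^{\ell-1}$; hence $|i - j| + 1 \le d(d-1)^{\ell - 1} \le d (d-1)^\ell$. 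Taking $\log_{d-1}$ of both sides gives $\log_{d-1}(|i-j|+1) \le \ell + \log_{d-1} d$, i.e. $\ell \ge \log_{d-1}(|i-j|+1) - \log_{d-1} d = \log_{d-1}\bigl((|i-j|+1)/d\bigr)$. Therefore the tree distance $2\ell$ is at least $2\bigl(1 + \log_{d-1}((|i-j|+1)/d)\bigr)$ once we note $\ell \ge 1$ (distinct leaves always have a common ancestor at height at least $1$), which lets us absorb a clean additive $1$; this matches the claimed bound.

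The main obstacle I anticipate is pinning down the contiguity-of-leaf-blocks statement cleanly, since it depends on the precise convention for the BFS ordering (does the root have $d$ children while internal nodes have $d-1$, and are the subtrees explored in a consistent left-to-right order?). This is exactly the content being illustrated by \Cref{fig:treeord} for $d=3$, so I would lean on that figure and state the induction carefully rather than belaboring it. The only other place to be careful is the off-by-one bookkeeping: the bound should be stated for $|i-j|+1$ (not $|i-j|$) precisely because a block of $m$ consecutive indices spans index-difference $m-1$, and the slightly loose $d(d-1)^{\ell-1} \le d(d-1)^\ell$ step is what produces the extra "$+1$" inside the logarithm's argument scaling, so I would keep that inequality rather than trying to be tight.
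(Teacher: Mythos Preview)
Your approach is essentially the same as the paper's: both exploit that under the BFS leaf ordering the leaves below any fixed ancestor form a contiguous block, and both compare $|i-j|+1$ against the number of leaves hanging below a height-$\ell$ vertex. The paper phrases this bottom-up (repeatedly grouping $d-1$ consecutive leaves and counting how many rounds are needed), while you phrase it top-down (bounding the block size under the least common ancestor), but these are dual descriptions of the same count.

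There is, however, a small logical slip at the end of your argument. From $|i-j|+1 \le d(d-1)^{\ell-1}$ you \emph{already} get
\[
\ell \;\ge\; 1 + \log_{d-1}\!\bigl((|i-j|+1)/d\bigr),
\]
which is exactly what you need. Your extra loosening $d(d-1)^{\ell-1}\le d(d-1)^{\ell}$ throws away precisely the ``$+1$'' you then try to recover by invoking $\ell\ge 1$; but combining the two separate bounds $\ell\ge A$ and $\ell\ge 1$ does \emph{not} yield $\ell\ge A+1$. Simply drop the loosening step and read the bound off directly from $|i-j|+1 \le d(d-1)^{\ell-1}$, and the argument is clean.
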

\begin{proof}
  Let's show that the lowest common ancestor of the two leaves is at
  least $1 + \log_{d-1} |i - j + 1| / d$, this proves the claim since
  we need to travel this distance twice, from the $i$th indexed leaf
  to the ancestor and then back to the $j$th indexed leaf. Let $V_0$
  be the set of $|i - j| + 1$ leaves with indices between $i$ and
  $j$. Let's construct the smallest subtree that includes $V_0$ from
  bottom up and compute its height, which is an upper bound to the
  desired lowest common ancestor. First, group elements of $V_0$ in
  groups of at most $d - 1$ consecutive indices and add one
  representative of each group to a set $V_1$. Each group corresponds
  to a node that parents all of its elements. There are at most
  $|V_0| / (d - 1)$ such groups, so $|V_1| \leq |V_0| / (d -
  1)$. Repeat the same procedure until $|V_a| \leq 1$, in which case
  $a$ is an upper bound to the height of the goal subtree, and by
  induction we have that $|V_{i + 1}| \leq |V_i| / (d - 1)$, so
  $a \geq \log_{d - 1} |V_0|$.

  This is not quite right because if the last grouping corresponds to
  the root of the tree, we need to group elements in $d$ groups,
  because this is the degree of the root, so by accounting for this we
  have $a \geq 1 + \log_{d - 1} (|V_0| / d)$.
\end{proof}

Now, consider the following pairing of elements in $L_1$ and
$V_1(E_t) \cup V_1(E_c)$: pick an arbitrary element of $V_1(E_c)$ and
pair it up with the first leaf of $L_1$. Now pick
$(d - 1)^{\ceil{r / 2}}$ distinct elements of $V_1(E_t)$ and pair them
up with the next leaves of $L_1$. Repeat this procedure, of pairing
one element of $V_1(E_c)$ with $(d - 1)^{\ceil{r / 2}}$ elements of
$V_1(E_t)$ with a contiguous block of leaves until we exhaust all
elements of $V_1(E_c)$. Note that by \Cref{fact:vsize}, there always
are enough elements in $E_t$ to perform this pairing. Pair up any
remaining leaves with the remaining elements of $V_1(E_t)$
arbitrarily. Now repeat the same procedure but for $L_2$ and
$V_2(E_t) \cup V_2(E_c)$ with the same groupings (so the endpoints of
an edge in either $E_t$ or $E_c$ are mapped to the same leaves of
$L_1$ and $L_2$). This pairing procedure is pictured in
\Cref{fig:pairing} below.

\begin{figure}[h]
  \begin{minipage}[b]{0.35\textwidth}
    \centering
    \includegraphics[height=3.8cm]{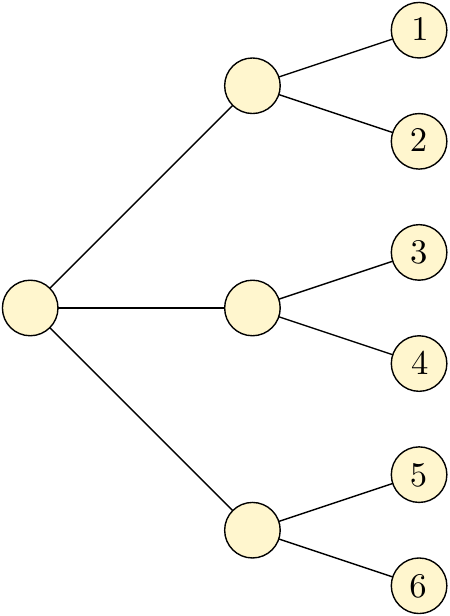}
    \caption{Leaf ordering for $d = 3$}
    \label{fig:treeord}
  \end{minipage}
  \begin{minipage}[b]{0.65\textwidth}
    \centering
    \includegraphics[height=3.8cm]{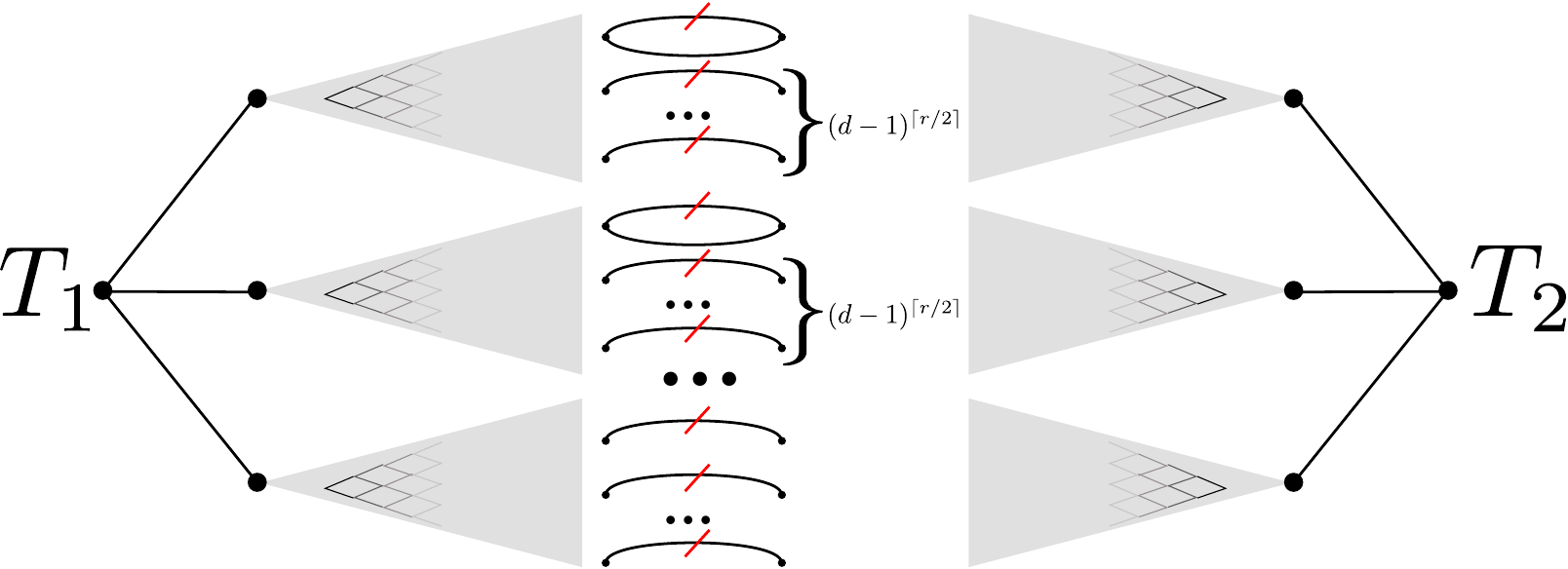}
    \caption{Example pairing}
    \label{fig:pairing}
  \end{minipage}
\end{figure}

Let $\fix(G)$ be defined as the graph resulting from applying the
method described in the previous paragraph to fix the degrees of
$H$. It is now obvious that $\fix(G)$ is a $d$-regular graph and we
only add $|T_1| + |T_2| = O(\tau \cdot (d - 1)^{r / 2 + 1})$ new vertices,
so it has $n + O(\tau \cdot (d - 1)^{r / 2 + 1})$ total vertices. We will
now analyze the resulting girth and $\lambda$ value and prove
\Cref{thm:cycrem} in the process.

\subsection{Analyzing the girth of $\fix(G)$}
\label{sec:girth}

Here we prove that the girth of $\fix(G)$ is at least $r$. Let's start
by supposing, for the sake of contradiction, that there is a cycle $C$
of length less than $r$. We know that the girth of $H$ is more than
$r$ by definition, so $C$ has to use an edge from $T_1$ or
$T_2$. Without loss of generality, let's assume that $C$ contains at
least one edge from $T_1$. Since $T_1$ is a tree, $C$ has to
eventually exit $T_1$ and use some edges from $H$, so in particular it
uses some vertex $v \in L_1$. We will show that in this case, $C$ has
length at least $r$, which is a contradiction. Thus, we have to handle
two cases: $v \in V_1(E_c)$ and $v \in V_1(E_t)$.

Let us start with the $v \in V_1(E_c)$ case. Let's follow $C$ starting
in $v$ and show that to loop back to $v$, $C$ would require to traverse
at least $r$ edges. So, we start in $v$ and go into $T_1$ by following
the only edge in $T_1$ that connects to $v$. Then, the cycle $C$ has
to use some edges from $T_1$ and finally exit through some other
vertex in $L_1$ before eventually looping back to $v$. Suppose that
$u \in L_1$ is such a vertex. Due to our grouping of elements in
$E_t$ with $(d - 1)^{\ceil{r / 2}}$ elements in $E_c$, if $u$ is in
$V_1(E_c)$, we know that the tree indices of $v$ and $u$ differ by at
least $(d - 1)^{\ceil{r / 2}}$. Hence, plugging this into the bound
from \Cref{fact:tdist}, the tree distance between $v$ and $u$ is at
least $r - 1$, which would imply $C$ has length at least $r$. So $u$
has to be in $V_1(E_t)$.

Continuing our traversal of $C$, we now exit $T_1$ through $u$ and
need to loop back to $v$. From our construction in \Cref{prop:treeadd}
we know that the distance in $H$ between $v$ and $u$ is at least $r$,
so any short path in $\fix(G)$ between these vertices has to go
through $T_1$ or $T_2$. Again, our \Cref{prop:treeadd} construction
gives that the distance in $H$ between $v$ and any other vertex in
$L_1$ is at least $r$, so such a short path will have to use some
edges in $T_2$.

Finally, we claim that the distance from $u$ to any vertex $w$ in
$L_2$ is at least $r$. If $w \neq \phi(u)$, we know from our
\Cref{prop:treeadd} construction that the distance between $u$ and $w$
is at least $r$. Otherwise, if there is a path $P$ of length less than
$r$ from $u$ to $w$, then the cycle $P + uw$ has length at most $r$
and is in $B_r(\{u, w\})$, which contradicts \Cref{fact:tree}. In
conclusion, it is not possible to loop back to $v$ using less than $r$
steps, which concludes the proof of the $v \in V_1(E_c)$ case.

The proof for the $v \in V_1(E_t)$ case is already embedded in the
previous proof, so we will just sketch it. Using the same argument we
start by following $C$ into $T_1$ and eventually exiting through some
vertex $u \in V_1(E_t)$. As we saw before, the $H$ distance between
$u$ and $v$ is at least $r$ and the $H$ distance between $u$ and any
other vertex in $L_1$ or any vertex in $L_2$ is at least $r$, so we
cannot loop back to $v$ from $u$, which concludes the proof of this
case.

\subsection{Bounding $\lambda(\fix(G))$}

We finally analyze the spectrum of $\fix(G)$ by proving that
$\lambda(\fix(G)) \leq \Lambda + O(1 / r)$. This argument is very
similar to the proof in Section 4 of \cite{AGS19}, but adapted to our
construction.

First, observe that the adjacency matrix of $\fix(G)$, which we will
denote by simply $A$, can be written in the following way:
$A = A_G - A_{E_c} - A_{E_t} + A_{T_1} + A_{T_2}$, where $A_G$ is the
adjacency matrix of $G$ defined on the vertex set of $\fix(G)$ (which
is to say $G$ with a few isolated vertices from the added trees),
$A_{E_c}$ is the adjacency matrix of the cycle edges removed, and so
on. Also, let $V_G$ be the set of vertices from $G$, $V_1$ the set of
vertices from $T_1$ and $V_2$ the set of vertices from $T_2$, so
$V = V_G \cup V_1 \cup V_2$. In this section we will prove
$\lambda(A) \leq \Lambda + O(1 / r)$.

Let $g$ be any unit eigenvector of $A$ orthogonal to the all ones
vector, so $\sum_{v \in V} g_v^2 = 1$ and $\sum_{v \in V} g_v = 0$. We
have that $|\sum_{v \in V_1 \cup V_2} g_v| \leq \sqrt{2 |T_i|}$ by
Cauchy-Schwarz (since this vector is supported on only $2 |T_i|$
entries), which in turn implies that
$|\sum_{v \in V_G} g_v| \leq \sqrt{2 |T_i|}$.

It suffices to show that $|g^TAg| \leq \Lambda + O(1 / r)$. To
do so, we shall analyze the contributions of $A_G$, $A_{E_c}$,
$A_{E_t}$, $A_{T_1}$ and $A_{T_2}$ to $|g^TAg|$.

To bound the contribution of $A_{T_1}$ and $A_{T_2}$, we use a lemma
proved by Alon-Ganguly-Srivastava:

\begin{lemma}(\cite[Lemma.~4.1]{AGS19}). \label{lem:trees} Let $W_i$ be the set of
  non-leaf vertices of $T_i$. Then for any vector $f$ we have:

  \[
    |f^TA_{T_i}f| \leq 2\sqrt{d - 1}\sum_{w \in W_i} f_w^2 + \sqrt{d - 1}\sum_{v \in L_i} f_v^2.
  \]
\end{lemma}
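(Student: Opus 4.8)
The plan is to expand the quadratic form over edges and apply a weighted AM--GM inequality to each edge, with the weight chosen so that, after regrouping the contributions by vertex, the coefficient of $f_w^2$ is exactly $2\sqrt{d-1}$ at internal (non-leaf) vertices and $\sqrt{d-1}$ at leaves. Concretely, write $f^T A_{T_i} f = 2\sum_{\{u,v\}\in E(T_i)} f_u f_v$ and orient each edge from its endpoint closer to the root (call it $u$) to the endpoint farther from the root (call it $v$). For any real $c>0$ one has $|2 f_u f_v| \le c\, f_u^2 + c^{-1} f_v^2$, so by the triangle inequality
\[
  |f^T A_{T_i} f| \le \sum_{\{u,v\}\in E(T_i)} \bigl(c\, f_u^2 + c^{-1} f_v^2\bigr).
\]
I would take $c = 1/\sqrt{d-1}$ uniformly, so that each edge contributes $\tfrac{1}{\sqrt{d-1}}\, f_u^2 + \sqrt{d-1}\, f_v^2$, with $u$ the parent and $v$ the child.

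Then I would regroup the right-hand side by vertex. An internal non-root vertex $w$ has exactly one parent edge, on which it plays the role of $v$ and contributes $\sqrt{d-1}\, f_w^2$, and $d-1$ child edges, on each of which it plays the role of $u$ and contributes $\tfrac{1}{\sqrt{d-1}}\, f_w^2$; these total $\bigl(\sqrt{d-1} + \tfrac{d-1}{\sqrt{d-1}}\bigr) f_w^2 = 2\sqrt{d-1}\, f_w^2$. The root has no parent edge and $d$ child edges, contributing $\tfrac{d}{\sqrt{d-1}}\, f_{\mathrm{root}}^2 \le 2\sqrt{d-1}\, f_{\mathrm{root}}^2$, using $d \le 2(d-1)$ for $d \ge 2$. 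A leaf $v$ has only its parent edge, contributing $\sqrt{d-1}\, f_v^2$. Summing these per-vertex bounds gives exactly $2\sqrt{d-1}\sum_{w\in W_i} f_w^2 + \sqrt{d-1}\sum_{v\in L_i} f_v^2$, which is the claimed inequality.

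The only subtlety is the bookkeeping at the root: the root has degree $d$ rather than the $d-1$ down-edges plus one up-edge that an ordinary internal vertex has, but because it contributes no up-edge term its net coefficient is $d/\sqrt{d-1}$, which is still at most $2\sqrt{d-1}$ precisely when $d \ge 2$. So the single uniform weight $c = 1/\sqrt{d-1}$ works at every level and no level-dependent tuning is required; this choice is in fact forced, since wanting the leaf coefficient to be $\sqrt{d-1}$ and the internal coefficient to be $2\sqrt{d-1}$ simultaneously pins down $c$, and it is self-consistent because a $(d-1)$-ary internal node exactly balances one incoming $\sqrt{d-1}$ against $d-1$ outgoing copies of $1/\sqrt{d-1}$. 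I do not expect any genuine obstacle beyond getting the edge orientation and the parent/child accounting right.
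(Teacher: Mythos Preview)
Your argument is correct: orienting edges from parent to child, applying $|2f_uf_v|\le \tfrac{1}{\sqrt{d-1}}f_u^2+\sqrt{d-1}\,f_v^2$, and regrouping by vertex yields exactly the stated bound, with the root handled by $d/\sqrt{d-1}\le 2\sqrt{d-1}$ for $d\ge 2$. The paper does not give its own proof of this lemma but simply quotes it from \cite[Lemma~4.1]{AGS19}; your edge-by-edge weighted AM--GM is the standard way to establish such tree quadratic-form bounds and is essentially what one expects the cited proof to do.
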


Recall that the edges in $E_t \cup E_c$ define a perfect matching
between $L_1$ and $L_2$, so we have the following:

\[
  |g^T(A_{E_c} + A_{E_t})g| = \left |\sum_{uv \in E_t \cup E_c}
    2g_ug_v \right | \leq \sum_{v \in L_1 \cup L_2} g_v^2.
\]

Finally, let $g_G$ be the projection of $g$ to the subspace spanned by
$V_G$. Observe that $|g^TA_Gg| = |g_G^TA_Gg_G|$. Now, let $\bone_G$ be
the all ones vector supported on the set $V_G$ and $g_\perp$ be a
vector orthogonal to $\bone_G$ such that $g_G = a\bone_G + g_\perp$,
for some constant $a$. We have that
$\bone_G^T g_G = a \bone_G^T \bone_G$, which implies

\[
  |a| = \left | \frac{\sum_{v \in V_G} (g_G)_v}{n} \right | \leq
  \frac{\sqrt{2 |T_i|}}{n}.
\]

Now observe:

\[
  |g_G^TA_Gg_G| \leq |g_\perp^TA_Gg_\perp| + |(a\bone_G)^TA_G(a\bone_G)| \leq \Lambda \sum_{v \in V_G}g_v^2 + \frac{2|T_i|d}{n},
\]

\noindent where the $\Lambda$ bound on $|g_\perp^TA_Gg_\perp|$ comes
from the definition of $G$. Also, we claim that the term
$\frac{2|T_i|d}{n}$ is $O(1 / r)$. We have
$|T_i| = O(\tau \cdot (d-1)^{r / 2 + 1})$ and we know from the problem
constraints that $r \leq (2/3)\log_{d - 1}(n / \tau) - 5$ which
implies
$\tau \cdot (d-1)^{r / 2 + 1} / n \leq O((d - 1)^{-r}) = O(1 / r)$.

We can now plug everything together and apply \Cref{lem:trees} to
obtain:

\[
  |g^TAg| \leq \Lambda + (\sqrt{d - 1} + 1)\sum_{v \in L_1 \cup L_2} g_v^2 + O(1 / r).
\]

We will conclude our proof by showing that
$\sum_{v \in L_1 \cup L_2} g_v^2$ is $O(1 / r)$. It should be clear
from the symmetry of our construction that we only need to prove
$\sum_{v \in L_1} g_v^2 = O(1 / r)$, since the same is analogous for
$L_2$.

Similarly to what was done in \cite{AGS19}, we are going to use a
theorem proved by Kahale, originally used to construct Ramanujan
graphs with better expansion of sublinear sized subsets.

\begin{lemma}(\cite[Lemma~5.1]{Kah95}). \label{lem:kah}
  Let $v$ be some vertex of
  $V$. Let $l$ be a positive integer and $s$ some vector supported on
  $V$. Let $X_i$ be the set of all vertices at distance exactly $i$
  from $v$ in $\fix(G)$. Assume that the following conditions hold:
  
  \begin{enumerate}
  \item For $l - 1 \leq i,j \leq l$, all vertices in $X_i$ have the
    same number of neighbors in $X_j$.
  \item The vector $s$ is constant on $X_{l - 1}$ and on $X_l$.
  \item The vector $s$ has positive components and
    $(As)_u \leq |\mu| s_u$ for all $u \in B_{l-1}$, where $|\mu|$ is a
    non-zero real number.
  \end{enumerate}

  Then, for any vector $y$ supported on $V$ satisfying
  $|(Ay)_u| = |\mu| |y_u|$ for $u \in B_{l - 1}$ we have:

  \[
    \frac{\sum_{u \in X_l} y_u^2}{\sum_{u \in X_l} s_u^2} \geq \frac{\sum_{u \in X_{l - 1}} y_u^2}{\sum_{u \in X_{l - 1}} s_u^2}.
  \]
\end{lemma}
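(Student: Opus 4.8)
The plan is to run a discrete summation-by-parts (``Green's identity'') argument on the ball $B_{l-1}$, comparing $y$ against the test vector $s$ by routing everything through a well-chosen auxiliary vector. Write $f_u = |y_u|$. Since $A$ is a nonnegative (adjacency) matrix, the triangle inequality gives $(Af)_u \ge |(Ay)_u| = |\mu|\,f_u$ for every $u \in B_{l-1}$, so $f$ is a ``subsolution'' of $|\mu|I-A$ on $B_{l-1}$, whereas hypothesis (3) says $s$ is a ``supersolution'': $(As)_u \le |\mu|\,s_u$. The crucial move is to introduce $p_u = f_u^2/s_u$ (well-defined since $s$ has positive components) and to observe that $p$ is \emph{also} a supersolution on $B_{l-1}$: by Cauchy--Schwarz applied edge-by-edge, $(Af)_u^2 = \big(\sum_{w\sim u}\tfrac{f_w}{\sqrt{s_w}}\sqrt{s_w}\big)^2 \le (Ap)_u\,(As)_u$, and feeding in the two inequalities above yields $(Ap)_u \ge (Af)_u^2/(As)_u \ge |\mu|\,p_u$ for all $u \in B_{l-1}$ (here $|\mu|\neq 0$ is used).

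Next I would apply the discrete Green identity: for the symmetric matrix $A$ and any vertex set $S$, $\sum_{u\in S}\big(p_u(As)_u - s_u(Ap)_u\big) = \sum_{u\in S,\ w\notin S} A_{uw}\big(p_u s_w - s_u p_w\big)$, the interior contributions cancelling by symmetry of $A$. Taking $S = B_{l-1}$, the only boundary edges are those joining $X_{l-1}$ to $X_l$. The left-hand side is $\le 0$ term by term, since $p_u(As)_u \le |\mu|\,f_u^2$ while $s_u(Ap)_u \ge |\mu|\,f_u^2$; hence
\[
  \sum_{\substack{u\in X_{l-1},\ w\in X_l\\ u\sim w}} \big(p_u s_w - s_u p_w\big) \ \le\ 0 .
\]

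Finally I would cash this in with hypotheses (1) and (2). Write $s\equiv s_{l-1}$ on $X_{l-1}$ and $s\equiv s_l$ on $X_l$ (hypothesis (2)), and let $a$ be the common number of $X_l$-neighbors of a vertex of $X_{l-1}$, and $b$ the common number of $X_{l-1}$-neighbors of a vertex of $X_l$ (hypothesis (1)). Then the edge-sums collapse to vertex-sums and the last display becomes $\tfrac{a\,s_l}{s_{l-1}}\sum_{u\in X_{l-1}} f_u^2 \le \tfrac{b\,s_{l-1}}{s_l}\sum_{w\in X_l} f_w^2$. Rewriting $a/b = |X_l|/|X_{l-1}|$ via the double count $a\,|X_{l-1}| = b\,|X_l|$ of edges between the layers, and using $\sum_{u\in X_i} s_u^2 = |X_i|\,s_i^2$, this is exactly
\[
  \frac{\sum_{w\in X_l} f_w^2}{\sum_{w\in X_l} s_w^2} \ \ge\ \frac{\sum_{u\in X_{l-1}} f_u^2}{\sum_{u\in X_{l-1}} s_u^2},
\]
and since $f_u^2 = y_u^2$ this is the claimed inequality.

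The only genuinely non-routine idea — and the step where I expect to spend the most thought — is the choice of auxiliary vector: comparing $|y|$ with $s$ directly controls only $\sum_u |y_u|$, the wrong functional, whereas feeding $p=|y|^2/s$ into Green's identity, with Cauchy--Schwarz supplying that $p$ is a supersolution, is precisely what produces the $\ell_2$ ratios appearing in the statement. A secondary bookkeeping point is checking that the Green-identity boundary is exactly the $X_{l-1}$--$X_l$ edge set and that hypothesis (1) is exactly the strength needed to turn the resulting edge-inequality into the stated ratio comparison; neither of these should present real difficulty.
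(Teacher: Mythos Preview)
The paper does not actually prove this lemma; it is quoted verbatim from Kahale~\cite{Kah95} and used as a black box. Your argument is correct and is, in fact, essentially Kahale's original proof: the auxiliary vector $p_u = |y_u|^2/s_u$, the Cauchy--Schwarz step showing $(Ap)_u \ge |\mu|\,p_u$ on $B_{l-1}$, the discrete Green identity on $B_{l-1}$, and the final double-count of $X_{l-1}$--$X_l$ edges are exactly the ingredients of \cite[Lemma~5.1]{Kah95}. One cosmetic slip: you call $p$ a ``supersolution'' but the inequality you derive, $(Ap)_u \ge |\mu|\,p_u$, is the same direction as for $f$, so in your own terminology $p$ is a \emph{sub}solution of $|\mu|I-A$; the mathematics is unaffected.
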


Our plan is to pick the parameters $l, s$ and $v$ from \Cref{lem:kah}
and use it to show that $\sum_{v \in L_1} g_v^2 = O(1 / r)$. Let $\mu$
be the eigenvalue associated with $g$ and suppose that
$|\mu| > 2\sqrt{d - 1}$, otherwise $|\mu| \leq \Lambda$, which would
imply the result. Set $v$ to be the root of $T_1$ and set
$s_u = (d-1)^{-i/2}$ if $u \in X_i$. We claim that this choice of $v$
and $s$ satisfies the conditions of \Cref{lem:kah} for all
$l \leq h + \floor{r / 2}$, where
$h = \ceil{\log_{d-1} \tau} + \ceil{r / 2} + 1$ is the height of $T_1$
and $T_2$. Let's prove this for each of the conditions in the above
order:

\begin{enumerate}
\item For $l = 0$, we have that $X_l = \{v\}$, and $v$ has $d$
  neighbors in $X_1$ and no other neighbors.

  For $1 \leq l \leq h$, we have that $X_l$ is the $l$th level of
  $T_1$, so it is clear that any vertex in $X_l$ has one neighbor in
  $X_{l-1}$, $d-1$ in $X_{l+1}$, and no neighbors in $X_l$.

  For $h < l \leq h + \floor{r / 2}$, we claim that a vertex in $X_l$
  also has $d - 1$ neighbors in $X_{l + 1}$ and one neighbor in
  $X_{l - 1}$. First, note that $X_{h} = L_1$ and so let $u \in L_1$
  be some vertex. We first prove the following proposition, whose
  proof uses some of the ideas of \Cref{sec:girth}:

  \begin{proposition} \label{prop:texit} Let $u$ be a vertex in
    $L_1$. Let $\calP(u)$ be the set of non-empty paths that start in
    $u$ and whose first step does not go into $T_1$. Then, the
    shortest path in $\calP(u)$ that ends in any vertex in $L_1$ has
    length at least $r$.
  \end{proposition}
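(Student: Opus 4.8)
The plan is to argue exactly as in the girth analysis of \Cref{sec:girth}, but now starting from a leaf $u \in L_1$ whose first step leaves $T_1$ rather than from an interior edge of $T_1$. Let $P \in \calP(u)$ be a shortest path of the claimed kind ending at some vertex of $L_1$; we want to show $|P| \geq r$. The first edge of $P$ is an edge of $H$ (since it does not go into $T_1$), so $P$ begins by traveling through $H$. Because $u$ is an endpoint of a removed edge, $u \in V_1(E_c) \cup V_1(E_t)$; in either case we will use the distance guarantees from \Cref{prop:treeadd} and \Cref{fact:tree} to show $P$ cannot be short.

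First I would handle the sub-path of $P$ that stays inside $H$: as soon as $P$ leaves $u$ it is in $H$, and by \Cref{prop:treeadd} the $H$-distance from $u$ to \emph{any} other vertex of $L_1 \cup L_2$ (and also to the other endpoint of $u$'s removed edge, via \Cref{fact:tree} exactly as in the $w = \phi(u)$ case of the girth proof) is at least $r$. So if $P$ never re-enters either tree, it already has length at least $r$ before it can reach a vertex of $L_1$, and we are done. Otherwise $P$ re-enters $T_1$ or $T_2$ at some point; say the first re-entry is at a vertex $x \in L_1 \cup L_2$. Then the portion of $P$ from $u$ to $x$ lies in $H$ and connects $u$ to a leaf, so by the same \Cref{prop:treeadd}/\Cref{fact:tree} bound it has length at least $r$, and again $|P| \geq r$. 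The only remaining subtlety is whether $P$ could reach its target leaf of $L_1$ by passing \emph{through} $T_1$ without first touching a leaf — but that is impossible, since any walk entering $T_1$ does so through a leaf in $L_1$ (the leaves are the only attachment points), so the first re-entry vertex $x$ is itself in $L_1 \cup L_2$ and the preceding argument applies.

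The main obstacle, as in \Cref{sec:girth}, is bookkeeping the case analysis cleanly: one must be careful that ``first step does not go into $T_1$'' genuinely forces the initial segment into $H$, and that every way of later returning to $L_1$ — directly through $H$, via $T_1$, or via $T_2$ — is covered by one of the distance bounds in \Cref{prop:treeadd} or \Cref{fact:tree}. Once the observation is made that $H$-paths between distinct leaves have length $\geq r$ and that $H$-paths between the two endpoints of a single removed edge also have length $\geq r$ (by \Cref{fact:tree}, exactly the $w = \phi(u)$ argument), the proposition follows with essentially no computation. I would present it as: ``The first edge of any $P \in \calP(u)$ lies in $H$; let $x$ be the first vertex of $P$ after $u$ that lies in $L_1 \cup L_2$ (this exists since $P$ ends in $L_1$). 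The sub-path from $u$ to $x$ is a path in $H$ between two distinct vertices of $L_1 \cup L_2$, unless $x = \phi(u)$; in the former case \Cref{prop:treeadd} gives length $\geq r$, and in the latter \Cref{fact:tree} does. Hence $|P| \geq r$.''
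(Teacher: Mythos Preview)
There is a genuine gap in your argument, precisely in the case $u \in V_1(E_c)$ and $x = \phi(u)$. You invoke \Cref{fact:tree} to conclude that the $H$-path from $u$ to $\phi(u)$ has length at least $r$, but \Cref{fact:tree} applies only to edges $e \in E_t$, not to edges in $E_c$. When $(u,\phi(u)) \in E_c$, this edge was removed \emph{because} it lay on a cycle of length at most $r$ in $G$; the remainder of that cycle is a path in $H$ from $u$ to $\phi(u)$ of length at most $r-1$. So the first re-entry vertex $x$ can indeed be $\phi(u) \in L_2$ with the initial $H$-segment of $P$ arbitrarily short, and your final summary (``in the latter case \Cref{fact:tree} does'') fails exactly here. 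The $w = \phi(u)$ argument you cite from \Cref{sec:girth} is carried out there for $u \in V_1(E_t)$, which is why \Cref{fact:tree} is applicable in that passage.

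The paper's proof does not try to lower-bound the $u \to \phi(u)$ segment at all. Instead it concedes this segment may be short, follows $P$ through $\phi(u)$ into $T_2$, and observes that one is now in exactly the situation of the girth analysis of \Cref{sec:girth} but with the roles of $T_1$ and $T_2$ swapped: the path has entered $T_2$ at a leaf $\phi(u) \in V_2(E_c)$, and the argument there (tree-distance bound to other $V_2(E_c)$ leaves via \Cref{fact:tdist}, $H$-distance bound to $V_2(E_t)$ leaves via \Cref{prop:treeadd}, and so on) gives the required length $\geq r$. To repair your approach you would need to add this second stage rather than stop at $x$.
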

  \begin{proof}
    As in the previous girth proof, we have two cases,
    $u \in V_1(E_c)$ and $u \in V_1(E_t)$. The latter case is obvious
    from the proof in \Cref{sec:girth}, since if $u \in V_1(E_t)$ then
    the $H$ distance to any node in $L_1$ is at least $r$ (from
    \Cref{prop:treeadd}) and the $H$ distance to any node in $L_2$ is
    also at least $r$ (from \Cref{fact:tree}). So, suppose
    $u \in V_1(E_c)$.

    Let's follow the same proof strategy as before, so let
    $P \in \calP(u)$ be the shortest path and let's follow $P$
    starting in $u$. Again, from \Cref{prop:treeadd} the $H$ distance
    of $u$ to any node in $L_1$ is at least $r$. However, $u$ might
    reach $\phi(u)$ in a short number of steps (namely, if the cycle
    corresponding to $(u, \phi(u))$ is short). So, let's follow $P$ to
    $\phi(u)$ and into $T_2$. We are now in the exact same situation
    as in the setup of the proof in \Cref{sec:girth} (but starting in
    $T_2$), so the result follows.
  \end{proof}

  Let's say a vertex $w$ is at $\calP$-distance $\delta$ from $u$ if
  the shortest path $P \in \calP(u)$ that ends in $w$ has length
  $\delta$. Additionally, let $S_\delta(u)$ be the set of vertices
  that are at a $\calP$-distance of at most $\delta$ from $u$. From
  \Cref{prop:texit}, we know that for all distinct $u, w \in L_1$, the
  sets $S_{\floor{r / 2}}(u)$ and $S_{\floor{r / 2}}(w)$ are
  disjoint. Thus, we have that for $u \in L_1$ the vertices in
  $S_{\floor{r / 2}}(u)$ form a tree rooted at $u$. So we can conclude
  by using the same argument as in the $1 \leq l \leq h$ case.
\item By definition of $s$, condition 2 is true.
\item For $1 \leq l \leq h + \floor{r / 2}$ and $u \in X_l$, we have:
  \[
    (As)_u = \sum_{w \sim u} s_w = (d-1)^{-(l+1)/2}(d - 1) + (d-1)^{-(l - 1)/2} = 2\sqrt{d-1} \cdot (d - 1)^{-i/2} \leq |\mu| s_u.
  \]

  For the special case of $l = 0$ we trivially have
  $(As)_v = d \cdot (d-1)^{-1/2} \leq |\mu| s_v$.
\end{enumerate}

We can now apply \Cref{lem:kah} and conclude that for all
$1 \leq l \leq h + \floor{r / 2}$, we have
$\sum_{u \in X_l} g_u^2 \geq \sum_{u \in X_{l - 1}} g_u^2$, since for
such $l$ we have $\sum_{u \in X_l} s_u^2 = d / (d - 1)$. So the
sequence $(\sum_{u \in X_l} g_u^2)_l$ is an increasing
sequence. Recall that $X_{h} = L_1$, so
$\sum_{u \in X_{h}} g_u^2 = \sum_{u \in L_1} g_u^2$. Additionally, we
know that the total sum of $(\sum_{u \in X_l} g_u^2)_l$ is at most one
(since $g$ is a unit vector), so we have that
$\sum_{l = h}^{h + \floor{r / 2}} \sum_{u \in X_l} g_u^2 \leq \floor{r / 2} \cdot
\sum_{u \in X_{h}} g_u^2 \leq 1$ and finally
$\sum_{u \in L_1} g_u^2 = \sum_{u \in X_{h}} g_u^2 \leq 1 / \floor{r / 2} =
O(1 / r)$.

This concludes the proof of \Cref{thm:cycrem}.

\section{A near-Ramanujan graph distribution of girth
  $\Omega(\log_{d - 1} N)$} \label{sec:apprand}

Recall \Cref{thm:bor}, which says that uniformly random $d$-regular
graphs are near-Ramanujan. We will combine this result with our
machinery of \Cref{sec:rem} to show \Cref{thm:apprand}, namely that
there exists a distribution over graphs that is
$(2\sqrt{d - 1} + \epsilon, c\log_{d-1} n)$-good for any
$\epsilon > 0$ and $c < 1/4$, which we will show is the distribution
resulting from applying algorithm $\fix$ to a sample of $\calG_d(n)$.

First, we note that $\calG_d$ has nice bicycle-freeness. We quote the
relevant result from \cite{Bor19}, which we reproduce below:

\begin{lemma}(\cite[Lemma~9]{Bor19}).  \label{lem:bfree} Let
  $d \geq 3$ and $r$ be positive integers. Then $G \sim \calG_d(n)$ is
  bicycle-free at radius $r$ with probability
  $1 - O((d - 1)^{4r} / n)$.
\end{lemma}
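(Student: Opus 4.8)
The plan is to carry out the argument in the configuration model and transfer to $\calG_d(n)$ only at the very end. Since $\calG_d(n)$ is exactly the $d$-regular $n$-vertex configuration model conditioned on the event ``$G$ is simple'', and this event has probability $\Omega_d(1)$, it suffices to show that a configuration-model multigraph fails to be bicycle-free at radius $r$ with probability $O((d-1)^{4r}/n)$, and then divide by $\Omega_d(1)$. We may also assume $(d-1)^{4r}\le n$, since otherwise the claimed bound is vacuous; in particular every ball we explore below will have $o(n)$ half-edges.

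First I would reduce the bad event to a statement about a breadth-first exploration. Observe that the subgraph induced on $B_r(v)$ has at most one cycle iff its cycle rank is at most $1$, so the bad event is that some $B_r(v)$ has cycle rank $\ge 2$. Run BFS from $v$, revealing one half-edge at a time every half-edge incident to a vertex within distance $r$ of $v$, and call a match step a \emph{collision} if the current half-edge is paired to a half-edge of an already-discovered vertex (i.e.\ it produces a non-tree edge of the exploration). Since the BFS tree restricted to $B_r(v)$ is a spanning tree of that ball, $B_r(v)$ has cycle rank $\ge 2$ only if the exploration encounters at least two collisions; and as long as it has encountered at most one collision, the discovered region is a tree plus one edge and hence has $O((d-1)^r)$ vertices and $O((d-1)^r)$ half-edges. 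Capping the exploration as soon as a second collision occurs (or this size bound would be exceeded), the total number $S$ of match steps we must examine is always $O((d-1)^r)$.

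Next I would bound, for a fixed $v$, the probability of $\ge 2$ collisions. Revealing the configuration-model matching sequentially in BFS order is valid because at each step the current unmatched half-edge is paired with a uniformly random unmatched half-edge, conditioned on the history. At every step there remain at least $nd/2$ unmatched half-edges, of which at most $O((d-1)^r)$ lie on already-discovered vertices, so the conditional probability that the step is a collision is $O((d-1)^r/n)$. I would then union-bound over the unordered pair $\{t_1,t_2\}$, $t_1<t_2$, of steps at which the two collisions occur: there are at most $\binom{S}{2}=O((d-1)^{2r})$ choices, and for a fixed pair, conditioning on the history through step $t_2-1$ bounds the conditional probability of a collision at step $t_2$ by $O((d-1)^r/n)$ uniformly, so the probability of collisions at both $t_1$ and $t_2$ is at most $O((d-1)^r/n)$ times the probability of a collision at $t_1$, which is again $O((d-1)^r/n)$. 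Summing over pairs, $\prob{B_r(v)\text{ has cycle rank}\ge 2}=O((d-1)^{2r})\cdot O((d-1)^{2r}/n^2)=O((d-1)^{4r}/n^2)$. A union bound over the $n$ vertices $v$ gives $O((d-1)^{4r}/n)$ for the configuration model, and dividing by the simplicity probability $\Omega_d(1)$ finishes the proof. (This is essentially Bordenave's argument in \cite{Bor19}.)

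The main obstacle is making the adaptive exploration rigorous: one has to fix a deterministic stopping/capping rule so that ``$S=O((d-1)^r)$'' and ``the discovered region has $O((d-1)^r)$ half-edges at every step'' hold on the whole probability space rather than only on the good event, and one has to be careful that the bound ``two collisions $\Rightarrow$ a product of two factors $O((d-1)^r/n)$'' is obtained through the filtration/conditioning described above and not from any (false) independence between the two collision events. The remaining ingredients — the geometric-series estimate for the number of vertices in a near-tree ball of radius $r$, and the reduction from $\calG_d(n)$ to the configuration model — are routine.
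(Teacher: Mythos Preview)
The paper does not supply a proof of this lemma; it is quoted verbatim from Bordenave \cite{Bor19} and used as a black box. Your sketch is correct and is precisely the standard BFS-exploration argument Bordenave uses: sequentially reveal the matching around $v$, observe that two collisions are needed for cycle rank $\ge 2$, bound each collision probability by $O((d-1)^r/n)$ via the uniform conditional law of the configuration model, take a union bound over the $O((d-1)^{2r})$ pairs of steps, and finally union-bound over $v$ and transfer from the configuration model to $\calG_d(n)$ using the $\Omega_d(1)$ simplicity probability. The only cosmetic point worth tightening is that the bound $S=O((d-1)^r)$ on the number of matching steps already holds deterministically (the ball $B_r(v)$ in any $d$-regular multigraph has at most $1+d\sum_{i<r}(d-1)^i$ vertices by the Moore bound, regardless of collisions), so the capping rule you introduce is not strictly needed for the size bound, only to make the ``two-collision'' accounting clean.
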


An obvious corollary of this is that for any constant $c < 1/4$, we
have that $G \sim \calG_d(n)$ is bicycle free at radius
$c \log_{d - 1} n$ with high probability.

To bound the number of short cycles in $\calG_d(n)$ we use a classic
result that very accurately estimates the number of short cycles in
random regular graphs.

\begin{lemma}(\cite[Section~2]{kay04}).  \label{lem:shortnm} Let
  $G \sim \calG_d(n)$ and $X_i$ be the random variable that denotes
  the number of cycles of length $i$ in $G$. Let
  $R_i = \max\{(d - 1)^i / i, \log n\}$. Then
  
  \[
    \prob{X_i \leq R_i, \text{ for all }3 \leq i \leq 1 / 4 \log_{d - 1} n} = 1 - o_n(1).
  \]
\end{lemma}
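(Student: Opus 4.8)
The plan is to prove the estimate in the configuration model and then transfer it. Recall that the $d$-regular $n$-vertex configuration model, conditioned on being simple, is distributed exactly as $\calG_d(n)$, and is simple with probability $\Omega_d(1)$; since cycles of length $\ge 3$ are unaffected by this conditioning, any event of configuration-model probability $o_n(1)$ also has $\calG_d(n)$-probability $o_n(1)$. So it suffices to bound, in the configuration model, $\prob{X_i > R_i}$ for each fixed $i$ with $3 \le i \le \tfrac14\log_{d-1}n$ and then union bound over these $O(\log n)$ values of $i$; I will aim for $\prob{X_i > R_i} \le n^{-\Omega(1)}$ with the hidden constant uniform in $i$.

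\emph{Step 1 (first moment).} A potential $i$-cycle is determined by a cyclic sequence of $i$ distinct vertices (there are $\binom{n}{i}\tfrac{(i-1)!}{2}$ of them) together with, at each of its vertices, an ordered pair of distinct half-edges through which it enters and leaves ($(d(d-1))^i$ choices in all); the probability that a fixed family of $i$ pairwise-disjoint half-edge pairs is entirely matched in the uniform matching equals $\prod_{j=0}^{i-1}(nd-2j-1)^{-1} = (1+o(1))(nd)^{-i}$, using $i = O(\log n) = o(\sqrt n)$. Multiplying and simplifying $\binom{n}{i}\tfrac{(i-1)!}{2} = \tfrac{1}{2i}\cdot\tfrac{n!}{(n-i)!} = (1+o(1))\tfrac{n^i}{2i}$ (valid since $i^2/n \to 0$) gives
\[
  \mu_i := \text{(expected number of $i$-cycles)} = (1+o(1))\,\frac{(d-1)^i}{2i},
\]
with the $o(1)$ uniform over this range of $i$; in particular $\mu_i \le n^{1/4}$ throughout. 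Since $R_i = \max\{(d-1)^i/i, \log n\}$, we have $R_i = (1+o(1))\max\{2\mu_i, \log n\}$, so the task is to show that $X_i$ does not appreciably exceed twice its mean when $\mu_i$ is large, and does not reach $\log n$ when $\mu_i$ is small.

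\emph{Step 2 (Poisson-strength upper tail).} Writing $m_{i,k}$ for the $k$-th factorial moment of $X_i$ (the expectation of $X_i(X_i-1)\cdots(X_i-k+1)$), I would show $m_{i,k} \le (1+o(1))^k\mu_i^k$ for every $k$ up to a fixed polynomial in $n$: this quantity counts ordered $k$-tuples of distinct $i$-cycles, the vertex-disjoint ones contribute $(1+o(1))^k\mu_i^k$ by essentially the Step 1 computation, and the non-disjoint ones contribute only a lower-order correction. Feeding this into the generating function $\sum_m \prob{X_i = m}\,\theta^m \le \exp\big((1+o(1))(\theta-1)\mu_i\big)$ (valid for $\theta \ge 1$; here one uses $\theta^{X_i} = \sum_k \binom{X_i}{k}(\theta-1)^k$, and checks that the $k$-range that actually contributes after optimization is $O(\log n)$, so the per-cycle $1+o(1)$ factors accumulate harmlessly) and applying Markov's inequality to $\theta^{X_i}$ with $\theta$ optimized yields the Chernoff-type bound
\[
  \prob{X_i \ge t} \le \exp\!\Big( -(1-o(1))\big(\mu_i - t + t\ln(t/\mu_i)\big)\Big), \qquad t \ge \mu_i.
\]
Setting $t = R_i$: if $(d-1)^i/i \ge \log n$ then $t = (1+o(1))2\mu_i$ and $\mu_i \ge (1-o(1))\tfrac12\log n$, so the exponent is $(2\ln 2 - 1 - o(1))\mu_i \ge (0.19 - o(1))\log n$; if $(d-1)^i/i < \log n$ then $t = \log n$ and $u := t/\mu_i \ge 2 - o(1)$, and since $u \mapsto 1/u - 1 + \ln u$ is increasing on $[2,\infty)$ with value $\approx 0.19$ at $u=2$, the exponent $\log n\cdot(1/u - 1 + \ln u)$ is again $\ge (0.19 - o(1))\log n$. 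In both cases $\prob{X_i > R_i} \le n^{-\Omega(1)}$, and the union bound over the $O(\log n)$ values of $i$ finishes the proof.

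I expect Step 2 to be the real obstacle, specifically the need to keep the $e^{-\mu_i}$ factor hidden in the tail bound above. The cruder estimate $\prob{X_i \ge t} \le m_{i,t}/t! \le ((1+o(1))e\mu_i/t)^t$, which follows from the single $t$-th factorial moment (or from a bare union bound over $t$-subsets of potential cycles), becomes useless exactly when $\mu_i$ is comparable to $\log n$ --- then $t$ is within a bounded factor of $\mu_i$ and that quantity can exceed $1$ --- so one genuinely needs the full exponential-moment control, which requires the factorial-moment estimates to hold for all orders, not just small ones. A cleaner alternative to the moment computation is the switching method: showing that configurations with $m+1$ cycles of length $i$ are geometrically rarer than those with $m$ once $m$ exceeds a constant multiple of $\mu_i$ directly reproduces the Poisson-type decay, $e^{-\mu_i}$ normalization included; this is the flavour of argument behind the cited lemma.
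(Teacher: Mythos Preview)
Your approach is sound in outline but takes a harder route than the paper's. The paper does not prove this lemma in the main text (it is quoted from \cite{kay04}), but the appendix, when derandomizing the same statement, spells out the intended argument: compute the mean and variance of $X_i$ in the configuration model,
\[
  \expt{X_i} = \frac{(d-1)^i}{2i}\bigl(1+O(i(i+d)/n)\bigr), \qquad \vart{X_i} = \expt{X_i} + O(i(i+d)/n)\,\expt{X_i}^2,
\]
and then apply \emph{Chebyshev's inequality} to each $X_i$ separately, followed by a union bound over~$i$. This already suffices: writing $\mu_i = \expt{X_i}$, Chebyshev gives $\prob{X_i > R_i} \le (1+o(1))\,\mu_i/(R_i-\mu_i)^2$, which is at most $(1+o(1))/\mu_i$ when $R_i \approx 2\mu_i$ and at most $O(\mu_i/(\log n)^2)$ when $R_i = \log n$. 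Summing the first bound over those $i$ with $2\mu_i \ge \log n$ yields a geometric tail of size $O(1/\log n)$; summing the second over the remaining $i$ (of which there are only $O(\log_{d-1}\log n)$, each contributing $O(1/\log n)$) yields $O((\log\log n)/\log n)$. So second moments alone give the lemma.

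Your remark that ``the cruder estimate $m_{i,t}/t!$ becomes useless when $\mu_i$ is comparable to $\log n$'' is correct for \emph{that} particular bound, but you overlooked the second-moment route, which is precisely what survives in that regime. The cost of your detour is that the sketched control $m_{i,k} \le (1+o(1))^k\mu_i^k$ uniformly in $k$ up to order $\log n$ is genuinely delicate to make rigorous (one must handle overlapping $k$-tuples of $i$-cycles with $k$ growing in~$n$), whereas the cited argument needs only $k=2$. The benefit of your approach is a stronger per-$i$ tail, $n^{-\Omega(1)}$ rather than $O(1/\log n)$, but the lemma as stated does not require it.
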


Given the above, we obtain the following bound, for all $c < 1/4$:

\[
  \sum_{i=1}^{c\log_{d-1}n}{\max\{(d - 1)^i / i, \log n\}} = O(n^c).
\]

So we obtain the following proposition:

\begin{proposition}
  For any $c < 1/4$ and any $\epsilon > 0$, $G \sim \calG_d(n)$ is a
  $(c \log_{d - 1} n, 2\sqrt{d - 1} + \epsilon, O(n^c))$-graph with
  probability $1 - o_n(1)$.
\end{proposition}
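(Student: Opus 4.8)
The plan is to obtain the statement as a union bound over the three high-probability events corresponding to the three defining conditions of an $(r,\Lambda,\tau)$-graph, taking $r = c\log_{d-1}n$, $\Lambda = 2\sqrt{d-1}+\epsilon$, and $\tau = O(n^c)$.

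First I would apply \Cref{lem:bfree} with radius $r = c\log_{d-1}n$. Its failure probability is $O((d-1)^{4r}/n) = O(n^{4c-1})$, which is $o_n(1)$ precisely because $c < 1/4$; hence $G \sim \calG_d(n)$ is bicycle-free at radius $c\log_{d-1}n$ with probability $1 - o_n(1)$. Second, \Cref{thm:bor} directly gives $\lambda(G) \le 2\sqrt{d-1}+\epsilon$ with probability $1 - o_n(1)$; although that theorem is stated most naturally for the configuration model, conditioning on the graph being simple (an event of probability $\Omega_d(1)$) transfers it to the uniform model $\calG_d(n)$ at the cost of only a constant factor in the failure probability, which is still $o_n(1)$. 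Third, \Cref{lem:shortnm} says that with probability $1 - o_n(1)$ every $X_i$ with $3 \le i \le \frac14\log_{d-1}n$ is at most $R_i = \max\{(d-1)^i/i,\log n\}$; restricting to $i \le c\log_{d-1}n$ and invoking the displayed estimate $\sum_i R_i = O(n^c)$ (the sum is geometric-like and dominated by its last term $(d-1)^{c\log_{d-1}n}/\Theta(\log n) = O(n^c/\log n)$, while the $O(\log n)$ clipped terms contribute only $O(\log^2 n) = o(n^c)$) shows that the number of cycles of length at most $c\log_{d-1}n$ is $O(n^c)$ with probability $1 - o_n(1)$.

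Finally I would union-bound these three events: each fails with probability $o_n(1)$, so all three hold simultaneously with probability $1 - o_n(1)$, and on that event $G$ satisfies all three bullets in the definition of an $(r,\Lambda,\tau)$-graph with the stated parameters. There is no genuine obstacle here; the only points requiring a little care are the elementary geometric-series estimate for $\sum_i R_i$ and the observation that the bicycle-freeness and spectral bounds, even when phrased for the configuration model, pass to $\calG_d(n)$ because the probability of the configuration-model sample being simple is bounded away from $0$.
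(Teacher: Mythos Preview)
Your proposal is correct and matches the paper's approach exactly: the proposition is obtained by combining \Cref{thm:bor}, \Cref{lem:bfree}, and \Cref{lem:shortnm} (plus the displayed $\sum_i R_i = O(n^c)$ computation) via a union bound. One small remark: in the paper all three ingredients are already stated for $G \sim \calG_d(n)$, so your detour through the configuration model and the conditioning argument is unnecessary, though harmless.
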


Finally, we want to apply \Cref{thm:cycrem}, so first we need to
verify its preconditions. For all $c < 1/4$ we have that
$(2/3) \log_{d-1} (n / n^c) = (2/3)(1 - c) \log_{d-1} n \leq c
\log_{d-1} n$. Also note that
$n^c (d - 1)^{c/2 \log_{d-1} n + 1} = n^{3c/2} = O(n^{3/8})$, so when
applying \Cref{thm:cycrem} the resulting graph has
$n + O(n^{3/8}) = n(1 + o_n(1))$ vertices. Thus, we obtain
\Cref{thm:apprand}.

\begin{remark} \label{rem:conf}
  Recall that $\calG_d(n)$ is the same as the conditional distribution
  of the $d$-regular $n$-vertex configuration model when conditioned
  on it being a simple graph. Indeed, a graph drawn from the
  $d$-regular $n$-vertex configuration model is simple with
  probability $\Omega_d(1)$. A result very similar to
  \Cref{lem:shortnm} also holds for the configuration model and thus
  the results of this section also hold for the configuration model.
\end{remark}

\subsection{Counting near-Ramanujan graphs with high girth}

We will briefly prove \Cref{cor:apprand} using the result we just
proved. For simplicity, we are going to work with the configuration
model, using the observation of \Cref{rem:conf}.

Our proof will use a classic result on the number of not necessarily
simple $d$-regular $n$-vertex graphs, which is the same as the number
of graphs in the $n$-vertex $d$-regular configuration model. It is
easy to show \cite{BC78} that for $nd$ even, the number of such graphs
is

\[
  \sim \parens*{\parens*{\frac{d^dn^d}{e^d(d!)^2}}^{n / 2}}.
\]

Hence, the core claim we need to prove, is the following:

\begin{proposition} \label{prop:disfix} Let $G_1$ and $G_2$ be
  distinct graphs that follow the preconditions of
  \Cref{thm:cycrem}. Then $\fix(G_1)$ and $\fix(G_2)$ are also
  distinct.
\end{proposition}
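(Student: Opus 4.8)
The plan is to argue that the algorithm $\fix$ is injective on the class of graphs satisfying the preconditions of \Cref{thm:cycrem}, by showing that $G$ can be recovered from $\fix(G)$. The key observation is that every step of $\fix$ is deterministic — the only choices made (which edge to remove per cycle, which vertices to grow $E_t$ from, how to pair leaves) are resolved by the fixed lexicographic tiebreaker on vertex labels — so the whole map is a well-defined function, and it suffices to exhibit an inverse procedure, or equivalently to show that $\fix(G_1) = \fix(G_2)$ forces $G_1 = G_2$. I would set this up as a reconstruction: given $\fix(G)$, identify the two added trees $T_1, T_2$, delete them, re-attach the removed edges $E_c \cup E_t$ as a matching on the former leaf sets, and recover $G$.

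The first step is to \textbf{locate the gadget trees inside $\fix(G)$}. Here I would use the girth guarantee: $\fix(G)$ has girth at least $r$, while $G$ had cycles of length at most $r$ (the cycles in $\Cyc_r(G)$), all of which passed through edges of $E_c$. The trees $T_1, T_2$ are the unique pair of pendant $d$-regular trees of height $h = \ceil{\log_{d-1}\tau} + \ceil{r/2} + 1$ whose roots have degree $d$ and whose removal (together with the identification of leaves) is consistent; more concretely, a vertex of $\fix(G)$ lies in $V_1 \cup V_2$ iff it is separated from the bulk $V_G$ by the tree structure. Since $n + O(\tau(d-1)^{r/2+1})$ and the exact value of $h$ are determined by $\fix(G)$ together with the known parameters $(n, d, r, \tau)$ — and these parameters are themselves recoverable, as $|V(\fix(G))|$ and $d$ are visible and $\tau, r$ are part of the input contract — the two trees and their leaf orderings (via the canonical BFS order of \Cref{fact:tdist}) are canonically determined. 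Removing them yields exactly $H$ together with the leaf stubs, so the bijection $\phi$ on $V_1(E_c)\cup V_1(E_t) \leftrightarrow V_2(E_c) \cup V_2(E_t)$ is read off from which leaf of $L_1$ and which leaf of $L_2$ were identified with each $V_G$-vertex.

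The second step is to \textbf{re-insert the removed edges and verify no information was lost}. Adding back all pairs $(u,\phi(u))$ to $H$ reconstructs $G$ exactly, since $H$ was obtained from $G$ by deleting precisely $E_c \cup E_t$ and $\fix$ added back no edges among $V_G$-vertices other than through the trees (which we have stripped). So $\fix(G_1) = \fix(G_2)$ implies, after this reconstruction, $G_1 = G_2$ — but one must be slightly careful: the reconstruction as described recovers the pair $(H, \phi)$, and $G = H + \{(u,\phi(u))\}$ is determined by it, so distinct $G_1 \ne G_2$ cannot produce the same $(H,\phi)$ and hence cannot produce the same $\fix$-image. The \textbf{main obstacle} I anticipate is rigorously pinning down that the tree-identification step is genuinely reversible: a priori a vertex of $L_i$ merged with a $V_G$-vertex could be confused with a genuine $G$-vertex of the same local structure, so I need the bicycle-freeness and girth bounds to guarantee that the attachment points and the gadget are combinatorially distinguishable from the rest of $\fix(G)$ (e.g., every vertex of a gadget tree other than near the leaves has a rigid $d$-ary tree neighborhood out to the root, which $G$-vertices need not have, and the leaves are exactly the $V_G$-vertices whose deleted incident edge in $E_c \cup E_t$ can be certified by the marking argument of \Cref{prop:treeadd}). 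Once that distinguishability is established the rest is bookkeeping.
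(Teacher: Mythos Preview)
Your reconstruction strategy is sound and follows the same high-level plan as the paper's sketch: recover $H$ from $\fix(G)$, then recover the removed matching $E_c \cup E_t$, hence $G$. Where you diverge is in the second step. The paper, having obtained $H$, looks at the set $S$ of degree-$(d-1)$ vertices (using bicycle-freeness to ensure no vertex lost two edges) and argues that the removed edges form the \emph{unique} perfect matching on $S$ whose reinsertion creates exactly $|S|/2$ cycles; uniqueness of that matching then forces $G_1 = G_2$. You instead exploit an extra piece of structure the paper's sketch ignores: because the construction pairs leaf $j$ of $L_1$ and leaf $j$ of $L_2$ with the two endpoints of the \emph{same} removed edge, once you have identified the two trees together with their canonical leaf orderings you can simply read off $\phi$ leaf-by-leaf. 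This is arguably cleaner than the matching-uniqueness argument, and it sidesteps having to justify why no other perfect matching on $S$ could produce the same cycle count.

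Your self-identified obstacle---canonically locating $T_1$ and $T_2$ inside $\fix(G)$---is indeed the crux, and the paper's sketch glosses over it too (it jumps directly from ``$\fix(G_1)\cong\fix(G_2)$'' to ``$H_1\cong H_2$''). In the labeled setting relevant to \Cref{cor:apprand} this is not a genuine difficulty: the added tree vertices carry labels $>n$, and the deterministic construction fixes which is the root and the BFS child ordering, so the whole gadget (and hence the leaf indexing) is recoverable by inspection. Your attempted combinatorial characterization of the gadget vertices via local tree-like neighborhoods is unnecessary for the application and would be hard to make rigorous, since every vertex of a $d$-regular high-girth graph has a locally tree-like neighborhood; I would drop that line and just invoke the labels.
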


This proposition implies that given any two good $d$-regular
$n$-vertex graphs, applying $\fix$ produces two distinct graphs. From
our proof of \Cref{thm:apprand} we also know that the result of
applying $\fix$ adds at most $O(n^{3/8})$ vertices. Finally, since a
$(1 - o_n(1))$ fraction of the graphs are good an thus when we apply
$\fix$ they result in ($2\sqrt{d - 1} + \epsilon$,
$c \log_{d - 1} n$)-good graphs, the result follows.

\begin{proof}[Proof sketch of \Cref{prop:disfix}]  
  Recall the $H$ graph from the description of $\fix$ and let $H_1$ be
  such graph corresponding to $G_1$ and define $H_2$
  analogously. Let's suppose for the sake of contradiction that
  $\fix(G_1)$ and $\fix(G_2)$ are isomorphic, then we have that $H_1$
  is isomorphic to $H_2$. If this is the case, then let $S$ be the set
  of vertices whose degrees are $d - 1$ and hence they had edges
  removed that were part of cycles (note that we can remove multiple
  edges adjacent to one vertex since this would imply the existence of
  two cycles in a small neighborhood, breaking the bicycle-freeness
  assumption). Now, the edges removed from $G_1$ (similarly $G_2$)
  form a perfect matching on $S$ that adds exactly $|S| / 2$ cycles to
  $H$. We conclude by pointing out that there is exactly one perfect
  matching that adds $|S| / 2$ cycles to $H$, which means the same
  edges were removed from $G_1$ and $G_2$, which would imply they are
  the same.
\end{proof}

\section{Explicit near-Ramanujan graphs of girth $\Omega(\sqrt{\log n})$} \label{sec:appdet}

In this section we prove \Cref{thm:appdet}, building on the
construction in the proof of \Cref{thm:mop}. We note that the original
construction has no guarantees on the girth of the constructed graph
other than a constant girth. We will briefly recap the main tools and
ideas from the paper.

\subsection{Review of constructing explicit near-Ramanujan graphs}

Given a $d$-regular $n$-vertex graph $G = (V, E)$, let
$w \in \{\pm 1\}^E$ be an edge-signing of $G$. The \textit{2-lift of $G$
  given $w$} is defined as the following $d$-regular $2n$-vertex graph
$G_2 = (V_2, E_2)$:

\[
  V_2 = V \times \{\pm 1\} \qquad E_2 = \left \{\{(u, \sigma), (v, \sigma
  \cdot w(u, v))\} : (u, v) \in E, \sigma \in \{\pm 1\} \right \}.
\]

It was observed in \cite{BL06} that the spectrum of $G_2$ is given by
the union of the spectra of $G$ and $\widetilde{G}_w$, where the
latter refers to the eigenvalues of the adjacency matrix of $G$ signed
according to $w$, where each nonzero entry is $w(u, v)$ for
$\{u, v\} \in E$.

This connection between the spectrum of an edge-signing of a graph and
a 2-lift gave rise to the following theorem, which was proved in
\cite{MOP19b}. Below we write
$\rho(G) = \max\{|\lambda_i| : i \in [n]\}$ for the \textit{spectral
  radius} of $G$.

\begin{theorem}(\cite[Theorem~3.1]{MOP19b}). \label{thm:mainmop}
  Let $G = (V, E)$ be an arbitrary $d$-regular $n$-vertex graph
  ($d \geq 3$). Assume $G$ is bicycle-free at radius
  $r \gg (\log\log n)^2$. Then for a uniformly random edge-signing
  $w$, except with probability at most $n^{-100}$ we have:

  \[
    \rho(\widetilde{G}_w) \leq 2\sqrt{d - 1} \cdot \left ( 1 + \frac{(\log\log n)^4}{r^2}\right ).
  \]

  Furthermore, this can be derandomized: given a constant $C$ there is
  a generator $h : \{0, 1\}^s \to \{\pm 1\}^E$ computable in time
  $\poly(N^{C \log d})$, with seed length
  $s = O(\log(2C) + \log\log n + C \cdot \log(d) \cdot \log(n))$, such
  that for $u \in \{0, 1\}^s$ chosen uniformly at random, with
  probability at most $n^{-100}$ we have:

  \[
    \rho(\widetilde{G}_{h(u)}) \leq 2\sqrt{d - 1} \cdot \left ( 1 +
      \frac{(\log\log n)^4}{r^2}\right ) + \frac{\sqrt{d}}{C^2}.
  \]
\end{theorem}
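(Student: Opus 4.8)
I would follow Bordenave's proof \cite{Bor19} of Friedman's theorem, made quantitative and then derandomized. The first step is to pass from the signed adjacency matrix $\widetilde G_w$ to the signed non-backtracking operator $B_w$, the $nd\times nd$ matrix indexed by oriented edges of $G$ with $(B_w)_{(u\to v),(x\to y)} = w(\{x,y\})\cdot\mathbf{1}[v=x,\ u\neq y]$ (the convention compatible with the 2-lift spectrum of \cite{BL06}). By the Ihara--Bass identity the eigenvalues of $\widetilde G_w$ are, up to the harmless $\pm 2\sqrt{d-1}$ directions coming from the modulus-$\sqrt{d-1}$ part of $\mathrm{spec}(B_w)$, exactly the numbers $\mu + (d-1)/\mu$ as $\mu$ ranges over the remaining eigenvalues of $B_w$. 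Because $t\mapsto t+(d-1)/t$ maps the circle $|t|=\sqrt{d-1}$ onto $[-2\sqrt{d-1},2\sqrt{d-1}]$ and is \emph{quadratically} flat there (if $|t| = \sqrt{d-1}(1+\delta)$ then $|t + (d-1)/t| \le 2\sqrt{d-1}(1+O(\delta^2))$), it suffices to prove that
\[
  \rho(B_w) \le \sqrt{d-1}\cdot\big(1 + O((\log\log n)^2/r)\big)
\]
except with probability $n^{-100}$; this feeds back into the claimed $2\sqrt{d-1}(1+O((\log\log n)^4/r^2))$ bound on $\rho(\widetilde G_w)$, and it is also why $r\gg(\log\log n)^2$ is exactly the threshold at which the statement becomes non-vacuous.

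\textbf{Trace method with tangle-free walks.} To bound $\rho(B_w)$ I would use the moment method with a walk length $\ell\approx r$ and a moment $m$ chosen so that $\ell m\gg\log n$: from $\rho(B_w)^{2\ell m}\le \mathrm{tr}\big((B_w^{\ell}(B_w^{\ell})^{*})^{m}\big)$ it is enough to estimate the expectation of the right-hand side over $w$. Expanding, $\mathbf{E}_w\,\mathrm{tr}\big((B_w^{\ell}(B_w^{\ell})^{*})^{m}\big)$ is a sum over $2m$-tuples of non-backtracking walks of length $\ell$ in $G$ chained into a single closed ``necklace'', and since $\mathbf{E}_w\prod_e w(e)^{a_e} = \prod_e \mathbf{1}[a_e\ \text{even}]$ only necklaces traversing every edge an even number of times survive. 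The leading contribution comes from tree-like necklaces and is estimated by Catalan-type bookkeeping at $\approx 2nd\cdot(\sqrt{d-1})^{2\ell m}\cdot\poly(\ell m)$. The role of bicycle-freeness at radius $r$ is to control everything else: each length-$\ell$ strand lives inside a ball of radius $\le\ell$ about any of its vertices, every radius-$r$ ball contains at most one cycle, so each strand carries only $O(\ell/r)$ excess edges and a non-tree-like necklace has a bounded number of ``tangles''. I would then show --- this is the combinatorial heart --- that each tangle costs a multiplicative factor $\poly(\log\log n)/r$ in the count, so that after dividing by $(\sqrt{d-1})^{2\ell m}$ the total is at most $(\sqrt{d-1})^{2\ell m}\cdot nd\cdot\poly(\ell m)\cdot(1+O(\poly(\log\log n)/r))$; taking $(\cdot)^{1/(2\ell m)}$, absorbing $n^{100}$ into Markov's inequality via $\ell m\gg\log n$, and tracking the $\poly(\ell m)^{1/(2\ell m)} = 1+O(\log(\ell m)/(\ell m))$ loss, all error terms collect into $O((\log\log n)^2/r)$ as required.

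\textbf{Derandomization.} The observation that makes the last part go is that everything above depends on $w$ only through the polynomial $\mathbf{E}_w\,\mathrm{tr}\big((B_w^{\ell}(B_w^{\ell})^{*})^{m}\big)$, which has degree at most $2\ell m$ in the sign bits $\{w(e)\}_{e\in E}$ (each monomial is supported on the $\le 2\ell m$ edges of one necklace). Hence it is enough to take $w = h(u)$ for a generator $h\colon\{0,1\}^s\to\{\pm1\}^E$ that fools degree-$O(\ell m)$ polynomials over the signs --- for instance an $\varepsilon$-almost $O(\ell m)$-wise independent distribution with $\varepsilon$ below the reciprocal of the number of monomials, so the computed expectation moves by only $o(n^{-100})$. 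A black-box such family has seed length $O(\ell m\log|E|) = \poly(\log n)$; to reach the advertised $s = O(\log(2C)+\log\log n + C\log d\log n)$ one instead uses the structured (recursively/tensor-built) generator of \cite{MOP19b}, whose quality is tuned by the parameter $C$: larger $C$ fools higher-degree tests and reproduces the clean bound, at the cost of running time $\poly(N^{C\log d})$, whereas a finite $C$ leaves an unfooled tail that surfaces as the extra additive $\sqrt{d}/C^2$. Substituting $h(u)$ for $w$ throughout the first two steps gives the derandomized statement.

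\textbf{Where the difficulty lies.} Two points are genuinely delicate. The combinatorial one is the tangle accounting: because $G$ is bicycle-free only at the (possibly tiny) radius $r$ and may have short cycles, a long non-backtracking walk can wind many times around the single local cycle, and one must verify that every such excursion really saves a $1/r$-type factor and that the $O(\ell/r)$ possible tangles combine to $O(\poly(\log\log n)/r)$ --- pinning down these exponents so the final error is $(\log\log n)^4/r^2$ rather than a cruder $O(\log r/r)$ is the most technical calculation. The second is constructing the pseudorandom generator: a generic PRG for degree-$\poly(\log n)$ sign polynomials needs a $\poly(\log n)$-length seed, and compressing it to $O(\log\log n + C\log d\log n)$ while still controlling the trace --- which is what forces the $\sqrt{d}/C^2$ slack --- is the real technical core of the derandomized half.
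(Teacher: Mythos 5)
This theorem is not proved in the paper at all --- it is imported verbatim from \cite{MOP19b} (their Theorem~3.1) and used as a black box in the 2-lift construction --- so the only meaningful comparison is with the proof in that reference, and your outline matches it: reduce to the signed non-backtracking operator via Ihara--Bass (with the quadratic flatness of $\mu \mapsto \mu + (d-1)/\mu$ explaining the $(\log\log n)^4/r^2$ exponent), bound its spectral radius by the trace method where bicycle-freeness at radius $r$ controls the non-tree-like necklaces, and derandomize by noting the expected trace is a degree-$O(\ell m)$ polynomial in the sign bits that an $\epsilon$-biased generator fools, with the additive $\sqrt{d}/C^2$ arising from the finite fooling degree. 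Your sketch is a faithful reconstruction of that argument at the level of detail one could reasonably expect without the source; the one place it remains an assertion rather than a proof is the tangle accounting, which you correctly flag as the technical core.
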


This theorem is a powerful tool that, combined with the above
observation, allows one to double the number of vertices in a
near-Ramanujan graph while keeping it near-Ramanujan, as long as the
bicycle-freeness is good enough. It is easy to show that if $G$ is
bicycle-free at radius $r$, then \textit{any} 2-lift of $G$ is also
bicycle-free at radius $r$. So, the strategy employed by \cite{MOP19b}
is to start with a graph with a smaller number of vertices that is
bicycle-free at a big enough radius and 2-lift it enough times until
the graph has the required number of vertices.

To generate this starting graph, the authors first showed how out to
weakly derandomize \cite{Bor19}. Formally, the following is proved:

\begin{theorem}(\cite[Theorem~4.8]{MOP19b}).
  \label{thm:rebordenave}
  For a large enough universal constant $\alpha$ and any integer
  $n > 0$, given $d$, $\epsilon$ and $c$ such that:

  \[
    3 \leq d \leq \alpha^{-1}\sqrt{\log n}, \; \alpha^3\cdot
    \parens*{\frac{\log\log n}{\log_{d-1} n}}^2 \leq \epsilon \leq 1, \; c < 1/4.
  \]
  
  Let $G$ be chosen from the $d$-regular $n$-vertex uniform
  configuration model. Then, except with probability at most
  $n^{-.99}$, the following hold:

  \begin{itemize}
  \item $G$ is bicycle-free at radius $c \log_{d-1} n$;
  \item $\lambda(G) \leq 2\sqrt{d - 1} \cdot (1 + \epsilon)$;
  \end{itemize}

  Furthermore, this can be derandomized: there is a generator
  $h : \{0, 1\}^s \to \calG_d(n)$, with seed length
  $s = O(\log^2(n) / \sqrt{\epsilon})$ computable in time
  $\poly(n^{\log (n) / \sqrt{\epsilon}})$, such that for
  $u \in \{0, 1\}^s$ chosen uniformly at random, with probability at
  most $n^{-.99}$ we have that the above statements remain true for
  $G = h(u)$.
\end{theorem}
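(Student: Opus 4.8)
The plan is to prove this in two stages. In the first stage I establish the two (randomized) conclusions for a genuinely uniform configuration-model graph $G$; up to the precise calibration of the hypothesis on $\epsilon$ this amounts to repackaging \Cref{thm:bor} together with \Cref{lem:bfree}. In the second stage, which is where the real content lies, I derandomize: I replace the uniform matching on the $nd$ half-edges by one drawn from a distribution of bounded independence and re-run the trace-method estimate that underlies \Cref{thm:bor}, checking that it is insensitive to this change. I expect the derandomization bookkeeping to be the main obstacle.

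For the randomized part, bicycle-freeness at radius $r = c\log_{d-1} n$ follows from \Cref{lem:bfree} (or its configuration-model analogue, cf.\ \Cref{rem:conf}): the failure probability is $O((d-1)^{4r}/n) = O(n^{4c-1})$, which is $o_n(1)$ for $c<1/4$ and can be taken $n^{-\Omega(1)}$, and a union bound with Bordenave's estimate then gives the claimed $n^{-.99}$ overall (matching constants exactly as in \cite[Theorem~4.8]{MOP19b}). For the spectral bound, recall that Bordenave's proof of \Cref{thm:bor} runs through the non-backtracking operator $B$ of the configuration multigraph: one passes to the tangle-free (i.e.\ bicycle-free) part of the powers of $B$, estimates the expected trace $\mathrm{tr}\big[(B^{(\ell)}(B^{(\ell)})^{*})^{k}\big]$ for a suitable $\ell$-step truncated operator $B^{(\ell)}$ and exponent $k = \Theta(\log n)$, and concludes that $\rho(B) \le \sqrt{d-1}\,(1+\eta)$ away from the trivial eigenvalue with $\eta = (\log\log n)^{O(1)}/\ell^{2}$; the Ihara--Bass formula then converts this into $\lambda(G) \le 2\sqrt{d-1}\,(1+\eta)$. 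Choosing $\ell = \Theta(1/\sqrt{\epsilon})$ forces $\eta = O(\epsilon)$, and this is exactly where the hypothesis $\epsilon \ge \alpha^{3}(\log\log n/\log_{d-1} n)^{2}$ is used: combined with $d \le \alpha^{-1}\sqrt{\log n}$ (so $\log_{d-1} n = \Omega(\log n/\log\log n)$) it guarantees $1/\sqrt{\epsilon} = O(\log_{d-1} n)$, so that $G$ is still bicycle-free at radius $\Theta(\ell)$ and the polylog factors in $\eta$ are absorbed once $\alpha$ is a large enough absolute constant.

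For the derandomization, realize the uniform matching as a sequence of at most $nd/2$ sequential choices (repeatedly pair the lowest unmatched half-edge with a uniformly random still-unmatched half-edge), each consuming $O(\log n)$ random bits. The key observation --- this is the heart of \cite{MOP19b} --- is that the expected trace above expands as a sum over closed non-backtracking walks in the configuration multigraph of total length $L \le 2k\ell$, and each such walk ``reads'' the matching only on the $\le L$ half-edges it traverses; hence its contribution, and therefore the whole expectation up to a negligible collision-handling correction, is unchanged when the sequence of choices is only $D$-wise independent for $D = \Theta(2k\ell) = \Theta(\log n/\sqrt{\epsilon})$. A standard $D$-wise independent generator over an alphabet of size $\poly(n)$ has seed length $O(D\log n) = O(\log^{2} n/\sqrt{\epsilon})$, and running the sequential pairing on its output defines a generator $h:\{0,1\}^{s}\to\calG_d(n)$; evaluating $h$ and carrying out the construction costs $\poly(n^{\log n/\sqrt{\epsilon}})$ time. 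Bicycle-freeness at radius $r$ survives the same substitution, because the relevant event is controlled by counts of short closed walks --- again a low-order statistic of the matching --- so \Cref{lem:bfree}'s estimate is reproduced by the bounded-independence matching.

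The main obstacle is the derandomization bookkeeping. One must (i) pin down precisely which coordinates of the matching each term of the trace expansion depends on, and verify that $D$-wise independence --- once the sequential-pairing collisions and the renormalization back to a uniform matching on the relevant statistics are accounted for --- reproduces the expected trace up to an error that is absorbed into $\eta$; and (ii) track the coupled parameters $\epsilon$, $\ell = \Theta(1/\sqrt{\epsilon})$, $k = \Theta(\log n)$ and $D = \Theta(k\ell)$ so that the seed length comes out as $O(\log^{2} n/\sqrt{\epsilon})$ while the bicycle-free radius $c\log_{d-1} n$ is simultaneously large enough to support the truncated-operator argument. These are precisely the steps carried out in the proof of \cite[Theorem~4.8]{MOP19b}, which the statement above restates.
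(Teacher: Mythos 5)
This theorem is not proved in the paper: it is imported verbatim from \cite[Theorem~4.8]{MOP19b}, and the only related material the paper itself contains is the derandomization toolkit recorded in \Cref{sec:drand} ($(\delta,k)$-wise uniform permutations, \Cref{thm:knr}, \Cref{thm:al13}, and the $k$-wise uniform configuration model of \Cref{def:kwcm}). Your first stage --- bicycle-freeness from \Cref{lem:bfree} plus Bordenave's trace-method bound with $\ell = \Theta(1/\sqrt{\epsilon})$, using the hypotheses on $d$ and $\epsilon$ to absorb the polylog factors --- is a fair account of where the randomized half of the statement comes from (though note that for $c$ close to $1/4$ the bound $O((d-1)^{4r}/n) = O(n^{4c-1})$ is far weaker than $n^{-.99}$, so the failure probability in the bicycle-freeness clause does not follow from a naive union bound the way you assert).

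The genuine gap is in your derandomization gadget. You propose to generate the matching by a sequential pairing process (repeatedly match the lowest unmatched half-edge to a random still-unmatched one) driven by $D$-wise independent choices, and you argue that each term of the trace expansion ``reads'' only $L \le 2k\ell$ half-edges and is therefore insensitive to this substitution. That inference is not valid: under the sequential process, the event that half-edge $a$ is matched to half-edge $b$ is a function of the \emph{entire prefix} of random choices made before $a$'s turn (through the evolving set of still-unmatched half-edges), not of $O(1)$ coordinates of the seed. So ``depends on $L$ half-edges of the matching'' does not translate into ``depends on $D$ coordinates of the underlying randomness,'' and bounded independence of the driving bits gives no control over the low-order statistics of the resulting matching. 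This is precisely why \cite{MOP19b} (and \Cref{def:kwcm} here) instead encode the matching as $M_{\pi(j),\pi(j+1)}$ for odd $j$, where $\pi$ is a permutation of $[nd]$: then every term of the trace expansion is a polynomial of degree at most $O(k\ell)$ in the indicators $1[\pi(j)=(v,i)]$, whose expectation under a truly $k$-wise uniform permutation matches the uniform one exactly, and one invokes the nontrivial explicit constructions of Kassabov and Kaplan--Naor--Reingold (\Cref{thm:knr}) together with Alon--Lovett (\Cref{thm:al13}) to pass from $(\delta,k)$-wise to truly $k$-wise at total-variation cost $O(\delta n^{4k})$. The claimed seed length $O(\log^2 n/\sqrt{\epsilon})$ is the seed length $O(k\log n + \log(1/\delta))$ of that permutation generator; it does not follow from your construction as described.
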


Using these two theorems we can setup the construction of
\cite{MOP19b}. So, first assume we are given $n, d \geq 3$ and
$\epsilon > 0$ and we wish to construct a $d$-regular graph $G$ with
$~n$ vertices with $\lambda(G) \leq 2\sqrt{d - 1} + \epsilon$. The
construction is now the following:

\begin{enumerate}
\item Use \Cref{thm:rebordenave} to construct a $d$-regular graph
  $G_0$ with a small number of vertices $n_0 = n_0(n)$. If we pick
  $n_0$ to be $2^{O(\sqrt{\log n})}$ then the generator seed length is
  $O(\log(n) / \sqrt{\epsilon})$ and is computable in time
  $\poly(n^{1/\sqrt{\epsilon}})$, so we can enumerate over all
  possible seeds and find at least one that produces a graph that is
  bicycle-free at radius
  $\Omega(\log(n_0)) = \Omega(\sqrt{\log n}) \gg (\log\log n)^2$ and
  has $\lambda(G_0) \leq 2\sqrt{d - 1} \cdot (1 + \epsilon)$ in
  $\poly(n)$ time.
\item Next, we can repeatedly apply \Cref{thm:mainmop} to double the
  number of vertices of $G_0$, by choosing $C$ to be
  $\sim d^{1/4}/\sqrt{\epsilon}$. We then enumerate over all seeds
  until we find one that produces a good graph, which only requires
  $\poly(n)$ time. On each application the bicycle-freeness radius is
  maintained (so we can keep applying \Cref{thm:mainmop}) and the
  number of vertices of doubles. After roughly $\log(n/n_0)$
  applications, the resulting graph has $n(1 + o_n(1))$ vertices and
  $\lambda(G) \leq 2\sqrt{d - 1} \cdot (1 + \epsilon)$.
\end{enumerate}

\subsection{Improving the girth of the construction}

We are finally ready to prove \Cref{thm:appdet}. We are going to apply
a similar strategy as the one from \Cref{sec:apprand}. Instead of
derandomizing \Cref{lem:shortnm} we are going to obtain a simpler
bound, which is good enough to obtain the desired. We note however,
that \Cref{lem:shortnm} can be derandomized and for completeness we
show how to in \Cref{sec:drand}.

We start by proving the following lemma:

\begin{lemma} \label{lem:bftogt} Let $G$ be a $d$-regular $n$-vertex
  graph with $\lambda(G) = \Lambda \geq 2\sqrt{d - 1}$ and such that
  $G$ is bicycle-free at radius $\alpha \log_{d - 1} n$, for
  $\alpha \leq 2$. Then we can apply $\fix$ to $G$ and obtain a graph
  such that:
  \begin{itemize}
  \item $\fix(G)$ is $d$-regular and has $n(1 + o_n(1))$ vertices;
  \item $\lambda(\fix(G)) \leq \Lambda + o_n(1)$;
  \item $\fix(G)$ has girth $(\alpha / 3) \log_{d - 1} n$.
  \end{itemize}
\end{lemma}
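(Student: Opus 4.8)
The plan is to derive Lemma~\ref{lem:bftogt} as a direct consequence of the short cycle removal theorem, Theorem~\ref{thm:cycrem}, by checking that the hypotheses of $G$ in the lemma can be packaged into an $(r, \Lambda, \tau)$-graph with suitable parameters. First I would set $r = \alpha \log_{d-1} n$, the bicycle-free radius we are given. The condition $\lambda(G) \le \Lambda$ with $\Lambda \ge 2\sqrt{d-1}$ is exactly the second bullet of the $(r,\Lambda,\tau)$-graph definition, and bicycle-freeness at radius $r$ is the first bullet. The only missing ingredient is a bound $\tau$ on the number of cycles of length at most $r$: here I would use the crude universal bound that follows purely from bicycle-freeness, namely that a $d$-regular $n$-vertex graph that is bicycle-free at radius $r$ has at most $O(n)$ cycles of length at most $r$ (each such cycle lies in the unique cycle of some vertex's radius-$r$ ball, and bicycle-freeness limits how these can overlap). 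So I can take $\tau = O(n)$, or even more conservatively $\tau = n$.

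Next I would verify the precondition of Theorem~\ref{thm:cycrem}, namely $r \le \tfrac{2}{3}\log_{d-1}(n/\tau) - 5$. With $\tau = O(n)$ this reads $\alpha \log_{d-1} n \le \tfrac23 \log_{d-1}(O(1)) - 5$, which is \emph{false} — so a naive choice of $\tau$ does not work, and this is the subtlety the proof must finesse. The fix is to use a better bound on the number of \emph{short} cycles: restricting to cycles of length at most $r' = (\alpha/3)\log_{d-1} n$ rather than length at most $r$, a counting argument gives at most $O((d-1)^{r'}) = O(n^{\alpha/3})$ such cycles in a bicycle-free graph, since each vertex contributes at most $O(1)$ cycles through its neighborhood and a short cycle of length $\ell$ is counted by the $O((d-1)^\ell)$ vertices near it — more carefully, the number of cycles of length exactly $\ell$ is $O(n)$ but we actually want to apply $\fix$ only to kill cycles up to length $r'$, so we invoke Theorem~\ref{thm:cycrem} with parameter $r'$ in place of $r$ and $\tau = O(n^{\alpha/3})$. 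Then the precondition becomes $r' \le \tfrac23 \log_{d-1}(n / n^{\alpha/3}) - 5 = \tfrac23(1 - \alpha/3)\log_{d-1} n - 5$, and since $\alpha \le 2$ we have $\alpha/3 \le 2/3$, so $\tfrac23(1-\alpha/3) \ge \tfrac23 \cdot \tfrac13 = \tfrac29 > \alpha/3$ fails only if $\alpha$ is close to $2$; one checks $\alpha/3 \le \tfrac23(1-\alpha/3)$ iff $\alpha/3 \le 2/3 - 2\alpha/9$ iff $3\alpha/9 + 2\alpha/9 \le 2/3$ iff $5\alpha/9 \le 2/3$ iff $\alpha \le 6/5$, so actually one needs $\alpha \le 6/5$ rather than $\alpha \le 2$ — I would recheck the constant, but the structure is: pick the girth target $r'$ small enough relative to $\alpha$ that the Moore-type precondition holds, which forces the $(\alpha/3)$ factor.

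With the parameters in hand, applying Theorem~\ref{thm:cycrem} to $G$ with radius parameter $r'=(\alpha/3)\log_{d-1} n$ and $\tau = O(n^{\alpha/3})$ immediately yields: $\fix(G)$ is $d$-regular with $n + O(\tau \cdot (d-1)^{r'/2+1}) = n + O(n^{\alpha/3} \cdot n^{\alpha/6}) = n + O(n^{\alpha/2}) = n(1+o_n(1))$ vertices (using $\alpha/2 < 1$); $\lambda(\fix(G)) \le \Lambda + O(1/r') = \Lambda + O(1/\log n) = \Lambda + o_n(1)$; and $\fix(G)$ has girth at least $r' = (\alpha/3)\log_{d-1} n$, which is the third bullet. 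The main obstacle is exactly the tension in the second paragraph: choosing $\tau$ too large violates the Moore-bound precondition of Theorem~\ref{thm:cycrem}, while the only cheap bound on the cycle count in a bicycle-free graph (without derandomizing Lemma~\ref{lem:shortnm}) is $O(n^{\alpha/3})$ for cycles up to length $(\alpha/3)\log_{d-1}n$ — and getting this count right, together with the clean statement that $\fix$ only needs to be told a radius parameter and a cycle bound at \emph{that} radius, is what pins down the $(\alpha/3)$ girth guarantee. I would state and prove the short-cycle-counting bound (cycles of length $\le \ell$ in a graph bicycle-free at radius $\ge \ell$ number at most $O((d-1)^\ell + n/\ell)$ or similar) as a short preliminary fact, then the rest is just substitution into Theorem~\ref{thm:cycrem}.
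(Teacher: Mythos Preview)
Your overall plan is exactly the paper's: package $G$ as an $(r',\Lambda,\tau)$-graph for some $r' \le \alpha\log_{d-1}n$, bound $\tau$ purely from bicycle-freeness, and then invoke Theorem~\ref{thm:cycrem}. The paper proceeds just as you outline, including the step of dropping the radius from $\alpha\log_{d-1}n$ down to $(\alpha/3)\log_{d-1}n$ so that the precondition $r' \le \tfrac23\log_{d-1}(n/\tau)$ goes through.

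The one genuine gap is your short-cycle count, which is inverted. You assert roughly $|\Cyc_{r'}(G)| = O((d-1)^{r'}) = O(n^{\alpha/3})$, but the correct bound goes the other way: if $G$ is bicycle-free at radius $2\rho$ then $|\Cyc_{\rho}(G)| \le n/(d-1)^{\rho}$. (Pick one vertex per short cycle; their radius-$\rho$ balls are pairwise disjoint by bicycle-freeness at radius $2\rho$, and each ball has $\ge (d-1)^{\rho}$ vertices.) This is precisely the preliminary fact you say you would prove, and it is the paper's Proposition~\ref{prop:breetocyc}. Applying it with $2\rho = \alpha\log_{d-1}n$ gives $\tau = n^{1-\alpha/2}$, not $n^{\alpha/3}$.

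This correction is also what resolves the constant mismatch you flagged. With $\tau = n^{1-\alpha/2}$ and $r' = (\alpha/3)\log_{d-1}n$, the precondition reads
\[
\tfrac23\log_{d-1}(n/\tau) \;=\; \tfrac23\cdot\tfrac{\alpha}{2}\log_{d-1}n \;=\; \tfrac{\alpha}{3}\log_{d-1}n \;=\; r',
\]
so it holds (up to the additive $-5$) for every $\alpha\le 2$, and the $(\alpha/3)$ girth target now falls out naturally rather than forcing $\alpha\le 6/5$. The vertex count becomes $n + O(n^{1-\alpha/2}\cdot n^{\alpha/6}) = n + O(n^{1-\alpha/3}) = n(1+o_n(1))$, and the rest of your last paragraph goes through unchanged.
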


Before proving this lemma, we prove a core proposition in a slightly
more generic way.

\begin{proposition} \label{prop:breetocyc}
  Let $G$ be a $d$-regular graph that is bicycle-free at radius $2r$,
  then

  \[
    |\Cyc_{r}(G)| \leq n / (d - 1)^{r}.
  \]
\end{proposition}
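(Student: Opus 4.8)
The plan is to exploit bicycle-freeness at radius $2r$ by charging each short cycle to vertices that lie far from any other short cycle, and then counting those vertices with room to spare. First I would observe that if $G$ is bicycle-free at radius $2r$, then any two distinct cycles $C_1, C_2 \in \Cyc_r(G)$ must be ``well-separated'': if they shared a vertex, or even came within distance $2r - r = r$ of one another, then the $2r$-neighborhood of some vertex on $C_1$ would contain both cycles (each of length at most $r$), contradicting bicycle-freeness. Concretely, for cycles of length at most $r$, the ball $B_{r}(v)$ around any vertex $v \in C_1$ contains all of $C_1$; if a second short cycle $C_2$ intersects $B_{r}(v)$, then $B_{2r}(v)$ contains two cycles. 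Hence for distinct $C_1, C_2 \in \Cyc_r(G)$ and any $v_1 \in C_1$, the ball $B_r(v_1)$ is disjoint from $C_2$; in particular the collection of balls $\{B_{\lfloor r/2 \rfloor}(v) : v \in C, C \in \Cyc_r(G)\}$ indexed over cycles are pairwise disjoint across distinct cycles.

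Next I would fix, for each cycle $C \in \Cyc_r(G)$, one representative vertex $v_C \in C$, and consider the ball $B_{r}(v_C)$. The key point is a lower bound on $|B_r(v_C)|$: since $B_{2r}(v_C)$ contains at most one cycle (namely $C$ itself, or possibly a cycle through $v_C$ of length $\le r$), the ball $B_r(v_C)$ is ``almost a tree'' — removing a single edge from it makes $B_r(v_C)$ acyclic, so it contains at least as many vertices as a $d$-regular tree of depth $r$ minus a small correction. A cleaner route: because $B_{2r}(v_C)$ has at most one cycle, there is a vertex $w$ at distance exactly $\lfloor r/2 \rfloor$ from $v_C$ (going ``away'' from the unique short cycle if present) whose own ball $B_{\lceil r/2 \rceil}(w)$ is a genuine $d$-regular tree, hence has at least $(d-1)^{\lceil r/2 \rceil} \ge (d-1)^{r/2}$... — but this only gives $(d-1)^{r/2}$, not $(d-1)^r$. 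To get the full $(d-1)^r$ I would instead count vertices at distance exactly $r$ from $v_C$ inside $B_{2r}(v_C)$: since $B_{2r}(v_C)$ contains at most one cycle, the number of vertices at distance exactly $r$ from $v_C$ is at least the tree count $(d-1)^r$ (a single cycle can only merge branches, reducing the count by a bounded amount near the cycle, but one can route around it) — and these distance-$r$ vertices are disjoint for distinct cycles by the separation argument above, since a vertex at distance $r$ from $v_{C_1}$ lies in $B_r(C_1) \subseteq B_{2r}(v_{C_1})$, which is disjoint from $C_2$ and indeed from all of $B_r(v_{C_2})$.

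So the final count is: the sets $X_r(v_C) := \{u : \dist(u, v_C) = r\}$ are pairwise disjoint over the $|\Cyc_r(G)|$ distinct cycles, each has size at least $(d-1)^r$, and they all live in the $n$-vertex graph, giving $|\Cyc_r(G)| \cdot (d-1)^r \le n$, i.e. $|\Cyc_r(G)| \le n/(d-1)^r$, as claimed. The main obstacle I anticipate is making the lower bound $|X_r(v_C)| \ge (d-1)^r$ fully rigorous in the presence of the one allowed cycle inside $B_{2r}(v_C)$: a single cycle passing through the ball can cause branches of the BFS tree from $v_C$ to collide, potentially shrinking the sphere at radius $r$. I would handle this by choosing $v_C$ and the counting radius carefully — e.g. if $C$ itself is the unique cycle, pick $v_C \in C$ and note the two arcs of $C$ contribute independently, or argue that even with one cycle the sphere at radius $r$ in a graph that is ``unicyclic within radius $2r$'' still has at least $(d-1)^r$ vertices because the cycle-induced identifications affect only a negligible portion of the exponentially-growing tree; alternatively, one replaces $(d-1)^r$ by $(d-1)^r$ up to the BFS from a shifted center, which still suffices after adjusting constants. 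Since the statement is an inequality with an exact exponent, the cleanest fix is to count along a single branch descending from $v_C$ that avoids the cycle, which still yields $\ge (d-1)^{r-1}\cdot(d-1) = (d-1)^r$ once one checks the degree bookkeeping at $v_C$.
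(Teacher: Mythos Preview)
Your strategy is exactly the paper's: choose one representative $v_C$ per cycle in $\Cyc_r(G)$, observe that the balls $B_r(v_C)$ are pairwise disjoint (a common vertex $w$ would place both short cycles inside $B_{2r}(w)$, contradicting bicycle-freeness at radius~$2r$), lower-bound each ball by roughly $(d-1)^r$, and sum. The paper does precisely this in a few lines, concluding $|\Cyc_r(G)|\cdot d(d-1)^{r-1}\le n$.

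The obstacle you agonize over is self-inflicted: you start with the ball $B_r(v_C)$ but then switch to the \emph{sphere} $X_r(v_C)$ of vertices at distance exactly~$r$, and it is only the sphere count that is delicate in the presence of the one allowed cycle through $v_C$. Your proposed patches --- ``negligible portion of the exponentially-growing tree'', ``shifted center'', ``a single branch avoiding the cycle'' --- do not actually establish $|X_r(v_C)|\ge(d-1)^r$, and that inequality can in fact fail in the multigraph setting relevant here (if $v_C$ carries a self-loop then already $|X_1(v_C)|=d-2<d-1$). Stay with the full ball instead. For each $r'\le r$ the set $B_{r'}(v_C)$ contains at most one cycle, hence spans at most $|B_{r'}(v_C)|$ edges; on the other hand every vertex of $B_{r'-1}(v_C)$ has all $d$ of its neighbors inside $B_{r'}(v_C)$, and comparing the resulting degree sum $d\,|B_{r'-1}(v_C)|$ against twice the internal edges plus the crossing edges forces $|B_{r'}(v_C)|\ge(d-1)\,|B_{r'-1}(v_C)|$. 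Induction gives $|B_r(v_C)|\ge(d-1)^r$, and disjointness of the balls finishes the proof with no case analysis at all.
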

\begin{proof}
  Pick one vertex per cycle in $\Cyc_{r}(G)$ and place it in a set
  $S$. We claim that for every distinct $u, v \in S$,
  $B_{r}(u) \cap B_{r}(v) = \emptyset$. Suppose this wasn't
  the case and suppose there is some $w$ such that
  $w \in B_{r}(u) \cap B_{r}(v)$, for some pair $u, v$. Note
  that $B_{2r}(w)$ includes the two length $r$ cycles that correspond
  to $u$ and $v$, which contradicts bicycle-freeness in $G$.

  Given the above, we have that the sets $B_{r}(u)$ for $u \in S$
  are pairwise disjoint and also we know that
  $|B_{r}(u)| = d(d - 1)^{r - 1}$. Hence we have:

  \[
    |\Cyc_{r}(G)| \cdot d(d - 1)^{r - 1} \leq n,
  \]

  \noindent which implies the desired result.
\end{proof}

And we can prove the above lemma.

\begin{proof}[Proof of \Cref{lem:bftogt}]
  By plugging $G$ into \Cref{prop:breetocyc} we can conclude that $G$
  is a $(\alpha \log_{d - 1} n, \Lambda, n^{1 - \alpha/2})$-graph. We
  wish to apply \Cref{thm:cycrem} so first recall its
  preconditions. By definition $\Lambda \geq 2\sqrt{d - 1}$. However,
  the precondition on the radius of bicycle-freeness does not hold,
  since
  $(2/3) \log_{d-1} (n / n^{1-\alpha/2}) = (\alpha / 3) \log_{d-1} n$
  which is less than $\alpha \log_{d-1} n$. If we instead use the
  fact that $G$ is also trivially a
  $((\alpha / 3) \log_{d - 1} n, \Lambda, n^{1 - \alpha/2})$-graph,
  then the precondition is satisfied.

  Thus, we can apply \Cref{thm:cycrem} and we obtain that $\fix(G)$
  satisfies all the required conditions, which concludes the proof.
\end{proof}

Given this lemma, we will modify the first step of the construction of
\cite{MOP19b} to produce a graph $G_0$ with girth $c\sqrt{\log n}$,
for any $c$. Note that, similarly to bicycle-freeness, the girth of a
graph can only increase when applying any 2-lift, so this strategy
guarantees that after step 2 of the construction, the final graph has
the desired girth, which would imply \Cref{thm:appdet}.

First, when enumerating over all seeds to generate $G_0$ in step 1,
let's look for one that guarantees that $G_0$ is bicycle-free at
radius $(1/5) \log_{d - 1} n_0$ (recall that by \Cref{thm:rebordenave}
a $1 - o_n(1)$ fraction of the seeds satisfy this). Next, let's apply
\Cref{lem:bftogt} and obtain that $\fix(G_0)$ has girth
$(1/15) \log_{d - 1} n_0$ and the desired value of $\lambda(G_0)$.

Now, let $\kappa = 15c / \log_{d - 1} 2$. As long as
$\kappa \sqrt{\log n} \leq \log n$ we can set $n_0$ to
$2^{\kappa\sqrt{\log n}}$, in which case $G_0$ has girth
$c \sqrt{\log n}$, but the first step still runs in $\poly(n)$ time
and so we are done.

\section{Open problems}

\begin{itemize}
\item Can we improve \Cref{thm:appdet} to obtain high girth?

  Something like this could be proved by showing that when 2-lifting a
  graph with large enough girth, with sufficiently high
  probability the girth of the resulting graph increases. This would
  boost the girth of the graph generated by the first step of the
  construction of \cite{MOP19b} during the repeated 2-lift
  step. However, it is unclear if this can be done. Alternatively, one
  could show that bicycle-freeness increases with good probability as
  we 2-lift, but this is also unclear.

  A different strategy would be to find a different way to derandomize
  \Cref{thm:bor} such that the we can generate a starter graph of
  larger size. However, it is unclear if this strategy could work
  since the tool used to derandomize this, namely $(\delta,k)$-wise
  uniform permutations (defined in \Cref{sec:drand}), cannot be
  improved to derandomize this to the required extent.
\item Can we obtain \Cref{thm:apprand} for higher values of $c$; for
  example, can we build a distribution that is
  ($2\sqrt{d - 1} + \epsilon$, $.99 \log_{d - 1} n$)-good?

  One promising strategy would be to show that the graphs produced by
  the distribution described in \cite{LS19}, which were shown to have
  girth at least $.99 \log_{d - 1} n$ with high probability, are also
  near-Ramanujan with high probability. Numerical calculations seem to
  indicate that the answer is positive, as pointed out in one of the
  open problems given in \cite{LS19}.
\end{itemize}

\section*{Acknowledgments} I am very grateful to Ryan O'Donnell for
numerous comments and suggestions, as well as very thorough feedback
on an earlier draft of this paper.

\bibliographystyle{alpha}
\bibliography{ramanujan}

\appendix

\section{Derandomizing the number of short cycles} \label{sec:drand}

To make the statement of this section more precise, we will first
define a known derandomization tool.

\begin{definition}[$(\delta,k)$-wise uniform permutations]
  Let $\delta \in [0,1]$ and $k \in \N^+$.  Let $[n]_k$ denote the set
  of all sequences of~$k$ distinct indices from~$[n]$.  A random
  permutation $\pi \in S_{n}$ is said to be
  \emph{$(\delta,k)$-wise uniform} if, for every sequence
  $(i_1, \dots, i_k) \in [n]_k$, the distribution of
  $(\pi(i_1), \dots, \pi(i_k))$ is $\delta$-close in total variation
  distance from the uniform distribution on~$[n]_k$. When
  $\delta = 0$, we simply say that the permutation is \emph{(truly)
    $k$-wise uniform}.
\end{definition}

Kassabov~\cite{Kas07} and Kaplan--Naor--Reingold~\cite{KNR09}
independently obtained a deterministic construction of
$(\delta,k)$-wise uniform permutations with seed length
$O(k \log n + \log(1/\delta))$.

\begin{theorem} \label{thm:knr} (\cite{KNR09,Kas07}).  There is a
  deterministic algorithm that, given $\delta$, $k$, and $n$, runs in
  time $\poly(n^k/\delta)$ and outputs a multiset
  $\Pi \subseteq S_{n}$ (closed under inverses) of cardinality
  $S = \poly(n^k/\delta)$ (a power of~$2$) such that, for
  $\pi \sim \Pi$ chosen uniformly at random, $\pi$ is a
  $(\delta,k)$-wise uniform permutation.
\end{theorem}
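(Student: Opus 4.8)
The plan to prove \Cref{thm:knr} is to realize each permutation in $\Pi$ as the product coming from a short random walk on an explicit expanding Cayley graph of $S_n$, reading the walk's choices off the seed. First I would invoke an explicit bounded‑size expanding generating set: by Kassabov's construction \cite{Kas07} there is a symmetric set $S \subseteq S_n$ with $|S| = O(1)$ that generates $S_n$ (or $A_n$, which is enough below) and whose Cayley graph $\mathrm{Cay}(S_n,S)$ has an absolute spectral gap $\gamma_0 > 0$ — every eigenvalue of its normalized adjacency operator other than the top eigenvalue $1$ has magnitude at most $1-\gamma_0$ (replacing each step by a lazy two‑step walk if one needs to rule out eigenvalues near $-1$). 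Each element of $S$ is describable in $\poly(n)$ bits and applicable to a point of $[n]$ in $\poly(n)$ time. I would then set the walk length $t = O\!\big(\tfrac{1}{\gamma_0}(k\log n + \log(1/\delta))\big)$, interpret a seed $u \in \{0,1\}^s$ with $s = t\lceil\log_2|S|\rceil = O(k\log n + \log(1/\delta))$ as a word $\sigma_1,\dots,\sigma_t$ over $S$ (rounding $|S|$ up to a power of two), and output $\pi(u) = \sigma_t\cdots\sigma_1$. The multiset $\Pi$ is the collection of all such $\pi(u)$: its cardinality is $2^s = \poly(n^k/\delta)$, a power of two; it is closed under inverses since $S = S^{-1}$ and reversing a word inverts its product; and it is computable in time $\poly(n^k/\delta)$ since we enumerate all $2^s$ words and apply $t$ fixed permutations to each.

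Next I would verify $(\delta,k)$‑wise uniformity. Fix a sequence $(i_1,\dots,i_k) \in [n]_k$. Applying the generators one at a time sends a $k$‑tuple of distinct points to another such tuple, so for uniform $u$ the distribution of $(\pi(u)(i_1),\dots,\pi(u)(i_k))$ is exactly that of the $t$‑step random walk on the Schreier graph $\Gamma$ of the $S_n$‑action on $[n]_k$, started at $(i_1,\dots,i_k)$ and driven by $S$. Here $\Gamma$ has at most $n^k$ vertices and is $|S|$‑regular, and — since $\Gamma$ is the Schreier graph of the cosets $S_{n-k}$, its nontrivial eigenvalues occur among the eigenvalues of the Cayley operator restricted to the nontrivial irreducibles of $S_n$ (the permutation module $\mathbb{C}[[n]_k]$ contains the trivial representation once and otherwise only nontrivial ones) — every eigenvalue of $\Gamma$ other than $1$ has magnitude at most $1-\gamma_0$. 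The standard $\ell_1$ mixing estimate then gives total‑variation distance at most $\sqrt{n^k}\,(1-\gamma_0)^t$ to the uniform distribution on $[n]_k$, which is $\le \delta$ by the choice of $t$. That is precisely the defining property, so $\pi \sim \Pi$ is $(\delta,k)$‑wise uniform, with the stated seed length, support size, and running time.

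The hard part is the very first ingredient — the explicit, $O(1)$‑size, uniformly expanding generating set of $S_n$ — which is exactly Kassabov's theorem \cite{Kas07} (this is also what powers the Kaplan--Naor--Reingold route \cite{KNR09}, which instead composes a constant number of structurally simple, individually almost‑$k$‑wise uniform permutations obtained by hashing points into blocks and recursing, then argues the composition mixes). Everything downstream is routine: transferring the spectral gap from the Cayley graph to the $k$‑tuple Schreier graph via the decomposition of the permutation module, the $\ell_1$ mixing‑time bound, and the bookkeeping for $|\Pi|$, the seed length, inverse‑closure, and running time. The only subtlety worth flagging for a full write‑up is the laziness/near‑bipartiteness fix needed to guarantee a genuine two‑sided spectral gap rather than merely a gap below the top eigenvalue; this costs only constants and leaves all parameters unchanged.
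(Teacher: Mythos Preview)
The paper does not prove \Cref{thm:knr}; it merely quotes the result from \cite{KNR09,Kas07} and uses it as a black box. So there is no ``paper's own proof'' to compare against. Your sketch is an accurate outline of how the Kassabov route \cite{Kas07} actually goes: take the explicit bounded-degree expander on $S_n$, run a length-$t$ random walk with $t = \Theta(k\log n + \log(1/\delta))$, and transfer the spectral gap from the Cayley graph to the Schreier graph on $[n]_k$ via the decomposition of the permutation module (transitivity of the action on ordered $k$-tuples gives exactly one copy of the trivial representation, so all other eigenvalues inherit the gap). The mixing-time bookkeeping, inverse-closure, and cardinality claims are all routine, as you note. One small imprecision: ``reversing a word inverts its product'' is not literally true under your convention $\pi(u)=\sigma_t\cdots\sigma_1$; what you need is the map that inverts each letter in place (which is a bijection on words over $S=S^{-1}$) to witness inverse-closure of the multiset. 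Also, the ``round $|S|$ up to a power of two'' step deserves a sentence: padding with identities makes the walk lazy and cuts the gap by at most a constant factor, so the walk length only changes by a constant. None of this affects the correctness of your plan, and your parenthetical on the alternative \cite{KNR09} construction (composing a constant number of block-structured almost-$k$-wise permutations) is also accurate.
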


This theorem is required to obtain the generator mentioned in
\Cref{thm:rebordenave} and is the reason why $(\delta,k)$-wise uniform
permutations are useful tools to apply here. We will also need a
convenient theorem of Alon and Lovett~\cite{AL13}:

\begin{theorem} \label{thm:al13} (\cite{AL13}). Let
  $\bpi \in S_n$ be a $(\delta,k)$-wise uniform permutation.
  Then one can define a (truly) $k$-wise uniform
  permutation~$\bpi' \in S_n$ such that the total variation
  distance between $\bpi$ and~$\bpi'$ is~$O(\delta n^{4k})$.
\end{theorem}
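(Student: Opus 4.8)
The plan is to read this as a \emph{rounding into a polytope} statement and to exploit the symmetry of $S_n$ acting on $k$-tuples. Write $\mu = \mathrm{law}(\bpi) \in \mathbb{R}_{\ge 0}^{S_n}$; for an ordered $k$-tuple $\vec{i} = (i_1,\dots,i_k)$ of distinct elements of $[n]$ and a similar tuple $\vec{j}$ of targets, let $p_{\vec{i},\vec{j}}(x) = \langle x, \chi_{C_{\vec{i},\vec{j}}}\rangle$, where $C_{\vec{i},\vec{j}} = \{\pi : \pi(i_t) = j_t,\ t \in [k]\}$; note that $C_{\vec{i},\vec{j}}$ is a coset of the pointwise stabilizer of $\{i_1,\dots,i_k\}$, so $|C_{\vec{i},\vec{j}}| = (n-k)! \ge n!/n^k$, and there are $M := n!/(n-k)! \le n^k$ choices of $\vec{j}$. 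Then $(\delta,k)$-wise uniformity says exactly that $\sum_{\vec{j}} |p_{\vec{i},\vec{j}}(\mu) - 1/M| \le 2\delta$ for every $\vec{i}$. The truly $k$-wise uniform laws form the polytope $\mathcal{K} = \{x \ge 0 : \mathbf{1}^\top x = 1,\ p_{\vec{i},\vec{j}}(x) = 1/M\ \text{for all } \vec{i},\vec{j}\}$, and since total variation distance is half the $\ell_1$-distance, it suffices to exhibit $\mu' \in \mathcal{K}$ with $\|\mu - \mu'\|_1 = O(\delta n^{4k})$.

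First I would fix the affine equalities using \emph{structured, spread-out} signed measures. The point is that every constraint functional $p_{\vec{i},\vec{j}}$ is an inner product against the coset-indicator $\chi_{C_{\vec{i},\vec{j}}}$, which (up to scaling) is the uniform measure on $C_{\vec{i},\vec{j}}$; so the whole system can be solved \emph{inside} the span $V$ of these uniform coset-measures, a space of dimension at most $M^2 + 1 \le n^{2k}$ on which the constraint matrix is $0/1$. Solving for $\nu \in V$ with $\mu + \nu$ in the affine hull of $\mathcal{K}$ is therefore a linear-algebra problem in $n^{O(k)}$ coordinates, and one obtains a solution whose expansion in the coset-measure basis has coefficient-$\ell_1$-norm $O(\delta)\cdot n^{O(k)}$ (the $2\delta$ per row is amplified by the conditioning of the system). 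Because each coset has size $\ge (n-k)!$, this automatically forces $\|\nu\|_\infty \le O(\delta)\cdot n^{O(k)}/n!$ as well as $\|\nu\|_1 \le O(\delta)\cdot n^{O(k)}$.

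Second I would restore non-negativity cheaply. The vector $\mu + \nu$ has total mass $1$ and satisfies all the marginal equalities, but may be slightly negative, with each entry $\ge -\|\nu\|_\infty$; so for $\gamma := \|\nu\|_\infty \cdot n! = O(\delta)\cdot n^{O(k)}$ the vector $\mu' := (\mu + \nu + \gamma\, U_{S_n})/(1+\gamma)$ is an honest probability distribution, and it still satisfies every marginal equality because both $\mu + \nu$ and the uniform law $U_{S_n}$ do; hence $\mu' \in \mathcal{K}$. Finally $\|\mu - \mu'\|_1 \le \|\nu\|_1 + 2\gamma = O(\delta)\cdot n^{O(k)}$, which already gives a bound of the claimed shape.

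The step I expect to be the main obstacle is making the exponent in that last display genuinely $n^{4k}$ rather than some uncontrolled $n^{O(k)}$: this rests on a \emph{dimension-independent} conditioning estimate for the system in the second paragraph. A black-box polytope-rounding bound would blow up with the ambient dimension $|S_n| = n!$, which is useless, and even within the $n^{O(k)}$-dimensional space $V$ a crude Hadamard-type determinant bound is far too lossy. One needs that $V$ is the Bose--Mesner algebra of the association scheme of $S_n$ acting on ordered $k$-tuples, so that the matrices in play are simultaneously diagonalizable with spectra one can pin down, keeping the blow-up polynomial in $n^k$ with the right constant $4$ in the exponent (after also accounting for the $\le n^{2k}$ constraints and the input error $\le 2\delta$ per row). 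Establishing that quantitative bound is the technical heart of the result, and rather than reprove it I would invoke \cite{AL13}.
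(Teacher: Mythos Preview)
The paper does not prove this theorem at all: it is quoted verbatim as a result of Alon--Lovett \cite{AL13} and used as a black box (to pass from truly $k$-wise uniform to $(\delta,k)$-wise uniform permutations in the derandomization appendix). So there is nothing to compare your proposal against on the paper's side beyond the bare citation.

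Your sketch is, in outline, essentially the Alon--Lovett argument itself: view the truly $k$-wise uniform laws as a polytope, project the given $\mu$ onto the affine span using the coset-indicator structure, then mix with uniform to restore nonnegativity, and control the $\ell_1$ cost via the association-scheme / Bose--Mesner spectral structure. That is a faithful summary of how the result is obtained in \cite{AL13}. Note, though, that your proposal is somewhat circular as written: you correctly identify that the only nontrivial step is the quantitative conditioning bound giving exponent $4k$, and then you resolve it by invoking \cite{AL13}, which is precisely the theorem you are trying to prove. If the intent is simply to cite the result, as the paper does, then just cite it; if the intent is to reprove it, the last paragraph is exactly where the work lies and cannot be delegated back to \cite{AL13}.
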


We can now define a ``derandomized'' version of the configuration
model, using this tool.

\begin{definition}\label{def:kwcm}
  Recall how the configuration model is defined by a perfect matching
  of a set $[nd]$ of ``half-edges''.
  
  Let's denote this matching by $M$ and define a way to generate it
  using random permutations. First a uniformly random permutation
  $\pi \in S_{nd}$ is chosen; then we set
  $M_{\pi(j),\pi(j+1)} = M_{\pi(j+1),\pi(j)} = 1$ for each odd
  $j \in [nd]$.

  We can write the adjacency matrix $A$ of $G$ as the sum, over all
  $i,i' \in [d]$, of $M_{(v,i),(v',i')}$. Hence
  \[
    \bA_{v,v'} = \sum_{i,i' = 1}^d \sum_{\substack{\text{odd} \\ j \in
        [nd]}} (1[\pi(j) = (v,i)] \cdot 1[\pi(j+1) = (v',i')] +
    1[\pi(j) = (v',i')] \cdot 1[\pi(j+1) = (v,i)]).
  \]

  The \textit{$d$-regular $n$-vertex $(\delta, k)$-wise uniform
    configuration model} is defined by using $(\delta, k)$-wise
  uniform permutations instead. Similarly, we define the
  \textit{$d$-regular $n$-vertex $k$-wise uniform configuration
    model}.
\end{definition}

We can now describe the proposition we wish to prove.

\begin{proposition}
  Fix $d \geq 3$, $n$ and $k \geq c\log_{d - 1} n$, where $c <
  1/4$. Let $G$ be drawn from the $d$-regular $n$-vertex $4k$-wise
  configuration model and $X_i$ be the random variable that denotes
  the number of cycles of length $i$ in $G$. Let
  $R_i = \max\{(d - 1)^i / i, \log n\}$. Then
  
  \[
    \prob{X_i \leq R_i, \text{ for all }1 \leq i \leq 1 / 4 \log_{d - 1} n} = 1 - o_n(1).
  \]

  By \Cref{thm:al13}, these statements remain true in the $(\delta, 4k)$-wise
  uniform versions of the model, $\delta \leq 1/n^{16k + 1}$.
\end{proposition}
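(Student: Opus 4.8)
The plan is to run the first-moment (Markov) method on the random variables $X_i$ counting $i$-cycles, exactly as in the classical configuration-model computation, but tracking precisely which moments of the permutation are used so that $4k$-wise uniformity suffices. First I would recall that in the truly random configuration model, a cycle of length $i$ is determined by choosing an ordered tuple of $i$ distinct vertices (up to the usual symmetries) together with a choice of half-edges at each, and the probability that a fixed such potential cycle is actually present is a product of $i$ matching-indicators. Writing $A_{v,v'}$ as in \Cref{def:kwcm}, the expectation $\E[X_i]$ expands into a sum whose generic term is a product of $i$ events of the form $\{\pi(j)=(v,\ell),\ \pi(j+1)=(v',\ell')\}$; collectively these pin down the images of at most $2i \le \tfrac12\log_{d-1} n \le 2k$ points of $[nd]$. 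Hence each such probability depends only on the joint distribution of $\pi$ on at most $2k < 4k$ coordinates, so under $4k$-wise uniformity it equals the truly-uniform value exactly (not just approximately). Therefore $\E[X_i]$ is unchanged, giving $\E[X_i] = (1+o_n(1))(d-1)^i/(2i) \le (d-1)^i/i$ for $3 \le i \le \tfrac14\log_{d-1} n$, and the standard bound $\sum_i (d-1)^i/i = O(n^{1/4})$ controls the tail.

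Next I would handle the union bound. For each fixed $i$ with $R_i = (d-1)^i/i$, Markov gives $\prob{X_i > R_i} \le \E[X_i]/R_i = o_n(1)$ — but that is not strong enough to union over all $\Theta(\log n)$ values of $i$. The fix is the same two-regime split used implicitly in \Cref{lem:shortnm}: for the short cycles where $(d-1)^i/i \le \log n$ we instead use $R_i = \log n$ and bound $\prob{X_i \ge \log n} \le \E[X_i]/\log n = O((d-1)^i/(i\log n))$, whose sum over all relevant $i$ is still $o_n(1)$; for the longer cycles one needs a second-moment bound to get $\prob{X_i \ge R_i}$ summably small, and crucially the second moment $\E[X_i^2]$ expands into products pinning down at most $4i \le 2k < 4k$ coordinates of $\pi$, so again $4k$-wise uniformity reproduces the true-uniform value exactly and the classical Poisson-type concentration goes through verbatim. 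Summing the failure probabilities over all $1 \le i \le \tfrac14\log_{d-1} n$ yields total failure probability $o_n(1)$.

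Finally, the passage to the $(\delta, 4k)$-wise uniform model is immediate from \Cref{thm:al13}: a $(\delta,4k)$-wise uniform permutation is $O(\delta n^{16k})$-close in total variation to a genuinely $4k$-wise uniform one, so with $\delta \le 1/n^{16k+1}$ the total variation distance between the two configuration models is $O(1/n) = o_n(1)$, and any event that holds with probability $1 - o_n(1)$ under the $4k$-wise model holds with probability $1 - o_n(1)$ under the $(\delta,4k)$-wise model. The main obstacle is the bookkeeping in the second step: one has to verify that for the ``long'' cycles a first-moment bound genuinely fails to survive the union bound while the second-moment argument does, and that the second-moment expansion really only ever constrains $\le 4i \le 2k$ coordinates — i.e. that no cross term in $\E[X_i^2]$ secretly involves more than $4k$ half-edges. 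Everything else is a transcription of the well-known random-regular-graph cycle count with the single new observation that all probabilities involved are determined by $O(k)$-juntas of $\pi$.
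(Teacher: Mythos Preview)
Your proposal is correct and follows essentially the same route as the paper: both observe that $X_i$ and $X_i^2$ are polynomials of degree at most $2i$ and $4i$ in the permutation indicators, so their first two moments are preserved exactly under $4k$-wise uniformity, and then invoke the classical mean/variance estimates plus a tail bound. The only cosmetic difference is that the paper applies Chebyshev uniformly across all $i$ (deferring entirely to the argument behind \Cref{lem:shortnm}), whereas you split into a Markov regime for small $i$ and a second-moment regime for large $i$.
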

\begin{proof}
  The proof follows almost directly from the proof of
  \Cref{lem:shortnm}. First, note that $X_i$ can be written as a
  polynomial of degree at most $i$ in the entries of $G$'s adjacency
  matrix, by summing over the products of the edge indicators of all
  possible cycles of length $i$ in $G$. Thus, from our formula in
  \Cref{def:kwcm}, it can be written as a polynomial of degree at most
  $2k$ in the permutation indicators $1[\pi(j) = (v,i)]$. So we can
  compute $\expt{X_i}$ assuming that $X_i$ is drawn from the fully
  uniform configuration model. Similarly, $X_i^2$ can be written as a
  polynomial of degree at most $4k$ in the permutation indicators, so
  we can compute $\vart{X_i}$ assuming that $X_i$ is drawn from the
  fully uniform configuration model.

  From \cite{kay04} we have the following estimates, that only apply
  when $(d - 1)^{2i-1} = o(n)$:

  \[
    \expt{X_i} = \frac{(d - 1)^i}{2i}(1 + O(i(i + d)/n)) \qquad \vart{X_i}
    = \expt{X_i} + O(i(i + d)/n)\expt{X_i}^2.
  \]

  By applying Chebyshev's inequality to each $X_i$, just like in
  \cite{kay04}, we get the desired result.
\end{proof}

We can finally rewrite \Cref{thm:rebordenave} in the language of the
$d$-regular $n$-vertex $(\delta, k)$-wise uniform configuration model
and tack on the result we just proved.

\begin{theorem}
  For a large enough universal constant $\alpha$ and any integer
  $n > 0$, fix $3 \leq d \leq \alpha^{-1}\sqrt{\log n}$ and $c < 1/4$,
  and let $\eps \leq 1$ and~$k$ satisfy
  
  \[
    \eps \geq \alpha^3 \cdot \parens*{\frac{\log \log n}{\log_{d-1}
        n}}^2, \qquad k \geq \alpha \log(n)/\sqrt{\eps}.
  \]
    
  Let $G$ be chosen from the $d$-regular $n$-vertex $k$-wise uniform
  configuration model. Then except with probability at most
  $1/n^{.99}$, the following hold:
  
  \begin{itemize}
  \item $G$ is bicycle-free at radius~$c \log_{d-1} n$;
  \item The total number of cycles of length at most
    $c \log_{d - 1} n$ is $O(n^c)$;
  \item $\lambda(G) \leq 2\sqrt{d-1}\cdot(1+\eps)$.
  \end{itemize}
  
  Finally, by \Cref{thm:al13}, these statements remains true in the
  $(\delta, k)$-wise uniform configuration model,
  $\delta \leq 1/n^{16k+1}$.
\end{theorem}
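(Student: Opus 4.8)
The plan is to assemble the final theorem by combining three ingredients that have all essentially been established: (i) the bicycle-freeness and spectral bound from \Cref{thm:rebordenave} stated in the $k$-wise uniform configuration model, (ii) the short-cycle count bound from the immediately preceding proposition, and (iii) the reduction from $(\delta,k)$-wise uniformity to true $k$-wise uniformity via \Cref{thm:al13}. First I would fix the parameters: choose $c < 1/4$, let $\eps$ satisfy the lower bound $\eps \geq \alpha^3 (\log\log n / \log_{d-1} n)^2$, and set $k \geq \alpha \log(n)/\sqrt{\eps}$ large enough that simultaneously $k \geq c \log_{d-1} n$ (this is automatic for the stated regime of $d$ and $\eps$, since $\log(n)/\sqrt{\eps} \geq \log(n) \gg c\log_{d-1} n$). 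With $k$ chosen this way, the preceding proposition applies directly and gives that $X_i \leq R_i = \max\{(d-1)^i/i, \log n\}$ for all $i \leq \frac14 \log_{d-1} n$ except with probability $o_n(1)$; summing as in \Cref{sec:apprand}, $\sum_{i \leq c\log_{d-1} n} R_i = O(n^c)$, which is the second bullet.

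Second, I would invoke the $k$-wise uniform version of \Cref{thm:rebordenave}. The key observation is that the derandomization in that theorem is itself realized through a $(\delta,k)$-wise (hence, after \Cref{thm:al13}, a true $k$-wise) uniform permutation on the half-edge set, with exactly the seed-length and running-time parameters quoted; so rephrasing it in the configuration-model language is a matter of bookkeeping rather than new content. This yields bicycle-freeness at radius $c\log_{d-1} n$ and $\lambda(G) \leq 2\sqrt{d-1}(1+\eps)$, each failing with probability at most (a constant times) $n^{-.99}$. A union bound over the three events — the spectral bound, the bicycle-freeness, and the short-cycle count — gives that all three hold except with probability at most $1/n^{.99}$ (absorbing the $o_n(1)$ cycle-count failure, which is in fact $n^{-\Omega(1)}$ in the underlying Chebyshev estimates, into the $n^{-.99}$ bound, possibly after adjusting $\alpha$).

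Third, for the final sentence I would pass from the $k$-wise uniform model to the $(\delta,k)$-wise uniform model. By \Cref{thm:al13}, a $(\delta,k)$-wise uniform permutation is $O(\delta n^{4k})$-close in total variation to a genuinely $k$-wise uniform one; taking $\delta \leq 1/n^{16k+1}$ makes this distance $O(n^{-12k-1}) = o(n^{-.99})$, so every event we care about still holds except with probability at most $1/n^{.99}$ (again after a harmless constant-factor or additive-$o(1)$ adjustment). Combined with \Cref{thm:knr}, which constructs such permutations with seed length $O(k\log n + \log(1/\delta)) = O(k \log n) = O(\log^2(n)/\sqrt{\eps})$ in time $\poly(n^k/\delta) = \poly(n^{k})$, this gives the claimed derandomization.

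I expect the main obstacle to be purely expository: carefully verifying that the generator underlying \Cref{thm:rebordenave} is exactly the $(\delta,k)$-wise uniform permutation generator of \Cref{thm:knr} with matching parameters, so that the cycle-counting proposition (which needs $4k$-wise uniformity, whereas the spectral statement is quoted with $k$-wise uniformity) can be run off the \emph{same} random object — one simply uses a $4k$-wise uniform permutation throughout and notes that this is a fortiori $k$-wise uniform, at the cost of a constant factor in $k$ and hence in the seed length. No genuinely new estimate is required; the content is the union bound and the parameter reconciliation.
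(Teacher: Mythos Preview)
Your proposal is correct and follows the paper's own approach essentially verbatim: the paper does not give a standalone proof of this theorem but simply states, ``We can finally rewrite \Cref{thm:rebordenave} in the language of the $d$-regular $n$-vertex $(\delta,k)$-wise uniform configuration model and tack on the result we just proved,'' and then records the statement. Your write-up is in fact more careful than the paper's, explicitly reconciling the $4k$-wise requirement of the cycle-counting proposition with the $k$-wise hypothesis of the final theorem and spelling out the total-variation arithmetic for the passage to the $(\delta,k)$-wise model.

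One small caution: your parenthetical claim that the cycle-count failure probability is ``in fact $n^{-\Omega(1)}$ in the underlying Chebyshev estimates'' is not quite right for the shortest cycle lengths. When $i$ is bounded, $\E[X_i]$ and $\Var[X_i]$ are $\Theta_d(1)$ while $R_i = \log n$, so Chebyshev only yields $\Pr[X_i > R_i] = O(1/\log^2 n)$, which is $o_n(1)$ but not polynomially small. The paper glosses over this same point (the preceding proposition only claims $1-o_n(1)$, yet the theorem asserts $1/n^{.99}$), so this is a looseness inherited from the source rather than a defect in your argument; if you want the stated $n^{-.99}$ to hold literally for all three bullets you would need to replace Chebyshev by a sharper tail bound for the short cycles, or else weaken the probability claim to $1-o_n(1)$.
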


\end{document}